\setlist{itemsep=1pt,topsep=2pt}
\theoremstyle{plain}
\newtheorem{theorem}{Theorem}[section]
\newtheorem{lemma}[theorem]{Lemma}
\newtheorem{corollary}[theorem]{Corollary}
\theoremstyle{definition}
\newtheorem{definition}[theorem]{Definition}
\newtheorem{proposition}[theorem]{Proposition}
\theoremstyle{remark}
\newtheorem*{remark}{Remark}
\newcommand*{\words}[1][\Lambda]{#1{}^{\mathsf{LO}}}
\newcommand*{\terms}[1][\Lambda]{T_{#1}}
\newcommand*{\enc}[1]{\langle #1\rangle}
\newcommand*{\Fragm}{\mathcal{F}}
\newcommand*{\FragmG}{\mathcal{G}}
\newcommand*{\FO}{\mathsf{FO}}
\newcommand*{\Varset}[1][X]{\mathbb{#1}}
\newcommand*{\Vars}{\Varset[V]}
\newcommand*{\Nat}{\mathbb{N}}
\newcommand*{\Rat}{\mathbb{Q}}
\newcommand*{\Int}{\mathbb{Z}}
\newcommand*{\Quantifiers}{\mathcal{Q}}
\newcommand*{\quant}{\mathsf{Q}}
\newcommand*{\dis}[3]{#3/#1 #2}
\newcommand*{\dom}[1]{\operatorname{dom}(#1)}
\newcommand*{\repl}[3]{#1[#2/#3]}
\newcommand*{\fv}{\operatorname{F\kern-0.25emV}}
\newcommand*{\mty}{\operatorname{\mathsf{empty}}}
\renewcommand*{\min}{\operatorname{\mathsf{min}}}
\renewcommand*{\max}{\operatorname{\mathsf{max}}}
\newcommand*{\suc}{\operatorname{\mathsf{suc}}}
\newcommand*{\qd}{\operatorname{qd}}
\newcommand*{\theLO}{\varrho}
\newcommand*{\varLO}{\pi}
\newcommand*{\move}[5]{#1[#2#3,#4,#5]}
\newcommand*{\Dwins}[1]{\mathrel{\lesssim_{#1}}}
\newcommand*{\Swins}[1]{\mathrel{\not\lesssim_{#1}}}
\newcommand*{\Indist}[1]{\mathrel{\approx_{#1}}}
\newcommand*{\evalt}[2]{\llbracket #1\rrbracket_{#2}}
\newcommand*{\DA}{\mathbf{DA}}
\newcommand*{\EF}{Ehrenfeucht-Fra\"iss\'e\xspace}
\renewcommand*{\vec}[1]{\overset\shortrightarrow{\rule{0pt}{0.66ex}\smash{#1}}}
\newcommand*{\cev}[1]{\overset\shortleftarrow{\rule{0pt}{0.66ex}\smash{#1}}}
\newcommand*{\smallvec}[1]{\overset\shortrightarrow{\rule{0pt}{0.33ex}\smash{#1}}}
\newcommand*{\smallcev}[1]{\overset\shortleftarrow{\rule{0pt}{0.33ex}\smash{#1}}}
\newcommand*{\para}[1]{\medskip\par\noindent{\it\normalsize #1}}
\newcommand*{\lpara}[2]{\para{#1:} #2.\quad}
\newcommand*{\basecase}[1]{\lpara{Base case}{#1}}
\newcommand*{\inductivestep}[1]{\lpara{Inductive step}{#1}}
\newcounter{casecounter}
\newcounter{subcasecounter}
\newcommand*{\resetcases}{\setcounter{casecounter}{0}}
\newcommand*{\case}[1]{\stepcounter{casecounter}\setcounter{subcasecounter}{0}\lpara{Case \arabic{casecounter}}{#1}}
\newcommand*{\subcase}[1]{\stepcounter{subcasecounter}\lpara{Case \arabic{casecounter}\alph{subcasecounter}}{#1}}
\newcommand*{\refenum}[1]{{#1.{}}}
\begin{document}

\title{Ehrenfeucht-Fra\"iss\'e Games on Omega-Terms}
%\titlerunning{Ehrenfeucht-Fra\"iss\'e Games on Omega-Terms}

\author{Martin Huschenbett\,$^1$ \and
Manfred Kuf\-leitner\,$^2$}
\date{\small $^1$ Institut f\"ur Theoretische Informatik \\ Technische Universit\"at Ilmenau, Germany\\{\texttt{martin.huschenbett@tu-ilmenau.de}} \\ \medskip
$^2$ Institut f\"ur Formale Methoden in der Informatik\footnote{The second author was supported by the
    German Research Foundation (DFG) under grant \mbox{DI 435/5-1}.} \\ Universit\"at Stuttgart, Germany\\{\texttt{kufleitner@fmi.uni-stuttgart.de}}}

\maketitle

\begin{abstract}
  \noindent
  \textsf{\textbf{Abstract.}} \ 
  Fragments of first-order logic over words can often be characterized
  in terms of finite monoids or finite semigroups. Usually these
  algebraic descriptions yield decidability of the question whether a
  given regular language is definable in a particular fragment. An
  effective algebraic characterization can be obtained from identities
  of so-called omega-terms. In order to show that a given fragment
  satisfies some identity of omega-terms, one can use \EF games on
  word instances of the omega-terms. The resulting proofs often
  require a significant amount of book-keeping with respect to the
  constants involved. In this paper we introduce \EF games on
  omega-terms. To this end we assign a labeled linear order to every
  omega-term. Our main theorem shows that a given fragment satisfies
  some identity of omega-terms if and only if Duplicator has a winning
  strategy for the game on the resulting linear orders. This allows to
  avoid the book-keeping.
    
  As an application of our main result, we show that one can decide in
  exponential time whether all aperiodic monoids satisfy some given
  identity of omega-terms, thereby improving a result of McCammond
  (Int. J. Algebra Comput., 2001).
\end{abstract}

\section{Introduction}

By combining a result of McNaughton and Papert~\cite{mp71:short} with
Sch\"utzenberger's characterization of star-free
languages~\cite{sch65sf:short}, a given language over finite words is
definable in first-order logic if and only if its syntactic monoid is
finite and aperiodic. The implication from left to right can be shown
using \EF games, see e.g.~\cite{str94:short}.  A similar result for
two-variable first-order logic $\FO^2$ was obtained by Th\'erien and
Wilke~\cite{tw98stoc:short}: A language is definable in $\FO^2$ if and
only if its syntactic monoid belongs to the variety~$\DA$. Both the
variety $\DA$ and the class of finite aperiodic monoids can be defined
using identities of omega-terms. Roughly speaking, omega-terms are
words equipped with an additional operation, the $\omega$-power.
If~$M$ is a finite monoid, then there exists a positive integer
$\omega_M$ such that $u^{\omega_M} = (u^{\omega_M})^2$ for all $u \in
M$. We call $u^{\omega_M}$ the \emph{idempotent} generated by $u$.
Every mapping $h : \Lambda \to M$ uniquely extends to omega-terms over
$\Lambda$ by setting $h(st) = h(s) h(t)$ and $h(s^\omega) =
h(s)^{\omega_M}$. Now, $M$ satisfies an identity $u = v$ of
omega-terms $u$ and $v$ over the alphabet $\Lambda$ if for every
mapping $h : \Lambda \to M$ we have $h(s) = h(t)$. A finite monoid is
\emph{aperiodic} if and only if it satisfies $a^\omega = a^\omega a$,
and it is in $\DA$ if and only if it satisfies $(abc)^\omega b
(abc)^\omega = (abc)^\omega$, see e.g.~\cite{pin86:short}. Showing
that some first-order fragment $\mathcal{F}$ satisfies an identity $u
= v$ of omega-terms $u,v$ usually works as follows. Suppose
$\mathcal{F}$ does not satisfy $u = v$. Then there exists a formula
$\varphi \in \mathcal{F}$ such that the syntactic monoid of
$L(\varphi)$ does not satisfy $u = v$. The depth $n$ of the formula
$\varphi$ defines an $n$-round \EF game on instances of $u$ and $v$
(i.e., on finite words which are obtained by replacing the
$\omega$-powers by fixed positive integers depending on~$n$). Giving a
winning strategy for Duplicator yields a contradiction, thereby
showing that $\mathcal{F}$ satisfies $u = v$.  Usually, playing the
game on $u$ and $v$ involves some non-trivial book-keeping since one
has to formalize intuitive notions such as positions being near to one
another or being close to some border. For first-order logic and for
$\FO^2$ the book-keeping is still
feasible~\cite{str94:short,dgk08ijfcs:short} whereas for other
fragments such as the quantifier alternation inside $\FO^2$ this task
becomes much more involved (and therefore other techniques are
applied~\cite{KufleitnerLauser13stacs:short,str11csl:short}).

Instead of defining new instances of a given omega-term depending on
the fragment and the number of rounds in the \EF game, we give a
single instance which works for all fragments of first-order logic and
any number of rounds. In addition, we allow an infinite number of
rounds. The fragments we consider in this paper rely on an abstract
notion of logical fragments as introduced
in~\cite{KufleitnerL12icalp:short}. We show that a fragment
$\mathcal{F}$ satisfies an identity of omega-terms if and only if
Duplicator has a winning strategy for the \EF game for $\mathcal{F}$
on the instances of the omega-terms. These instances are labeled
linear orders which, in general, are not finite words.

An obvious application of our main result is the simplification of
proofs showing that some fragment $\mathcal{F}$ satisfies a given
identity of omega-terms. The main reason is that with this new
approach one can avoid the book-keeping. It is slightly less
straightforward that one can use this approach for solving word
problems for omega-terms over varieties of finite monoids. Let
$\mathbf{V}$ be a variety of finite monoids. Then the word problem for
omega-terms over $\mathbf{V}$ is the following: Given two omega-terms
$u$ and $v$, does every monoid in $\mathbf{V}$ satisfy the identity $u
= v$\hspace*{1pt}? This problem was solved for various varieties, see
e.g.~\cite{AlmeidaZ07tcs,McCammond01ijac,Moura11tcs}. Using our main
result, one approach to solving such word problems is as
follows. First, find a logical fragment for $\mathbf{V}$. Second, find
a winning strategy for Duplicator on omega-terms satisfied by this
fragment. Third, use this winning strategy for finding the desired
decision algorithm. In the case of aperiodic monoids, we use this
scheme for improving the decidability result of
McCammond~\cite{McCammond01ijac} by showing that the word problem for
omega-terms over aperiodic monoids is solvable in exponential time.

Historically, the greek letter $\omega$ is used for two different
things which are frequently used throughout this paper: First, the
idempotent power of an element and second, the smallest infinite
ordinal. In order to avoid confusion in our presentation, we chose to
follow the approach of Perrin and Pin~\cite{pp04:short} by using $\pi$
instead of $\omega$ to denote idempotent powers. In particular, we
will use the exponent $\pi$ in omega-terms which is why we will call
them $\pi$-terms in the remainder of this paper.

\section{Preliminaries}

%Throughout this paper we use the following notation.
%The disjoint union of $A$ and $B$ is denoted by $A \uplus B$. The set of functions $f : B \to A$ is $A^B$. Intervals are written as $[k,\ell] = \set{i | k \leq i \leq \ell}$ and $(k,\ell) = \set{i | k<i<\ell}$ as well as $(\infty,\ell) = \set{i | i < \ell}$ and $(k,\infty) = \set{i | k<i}$. 
%
As mentioned above, one of the central notions in this paper are so
called $\varLO$-terms. In order to make their interpretation by
several semantics possible in a uniform way, we follow an algebraic
approach. A \emph{$\varLO$-algebra} is a structure
$(U,\,\cdot\,,{}^\varLO)$ comprised of an associative binary operation~$\cdot$ and
a unary operation ${}^\varLO$ on a carrier set $U$. The application of
$\cdot$ is usually written as juxtaposition, i.e., $u v=u\cdot v$, and
the application of ${}^\varLO$ as $u^\varLO$. A
\emph{$\varLO$-term} is an arbitrary element of the free
$\varLO$-algebra $\terms$ generated by $\Lambda$, where
\emph{$\Lambda$ is a countably infinite set which is fixed for the
  rest of this paper}. We also use this set as a universe for letters
(of alphabets).

\subparagraph*{Monoids as $\bm\varLO$-Algebras.}  
Let $M$ be a monoid. For any $k \geq 1$ we extend $M$ to a
$\varLO$-algebra, called \emph{$k$-power algebra on $M$}, by defining
$u^\varLO = u^k$ for $u \in M$. Suppose that $M$ is finite. An element
$u \in M$ is \emph{idempotent} if $u^2 = u$. We extend $M$ to another
$\varLO$-algebra, called \emph{idempotency algebra on $M$}, by
defining $u^\varLO$ for $u \in M$ to be the unique idempotent element
in the set $\set{ u^k | k \geq 1 }$. In fact, there are infinitely
many $k \geq 1$, called \emph{idempotency exponents of $M$}, such that
for each $u \in M$ the element $u^k$ is idempotent, i.e., the
$k$-power algebra and the idempotency algebra on $M$ coincide. An
identity $s=t$ of $\varLO$-terms $s,t \in \terms$ \emph{holds in $M$}
if every $\varLO$-algebra morphism $h$ from $\terms$ into the
idempotency algebra on $M$ satisfies $h(s) = h(t)$.

The set of all \emph{finite words} over an alphabet $A \subseteq
\Lambda$ is~$A^*$. Let $L \subseteq A^*$ be a language over a finite
alphabet $A \subseteq \Lambda$. The \emph{syntactic congruence} of $L$
is the equivalence relation $\equiv_L$ on $A^*$ defined by $u \equiv_L
v$ if $x u y \in L$ is equivalent to $x v y \in L$ for all $x,y\in
A^*$. In fact, $\equiv_L$ is a monoid congruence on~$A^*$. The
quotient monoid $M_L = A^* / {\mathord{\equiv_L}}$ is called
\emph{syntactic monoid} of $L$. It is finite precisely if $L$ is
regular. Suppose that $L$ is regular and let $k \geq 1$ be an
idempotency exponent of $M_L$. Then the map sending each $w \in A^*$
to its $\equiv_L$-class is a $\varLO$-algebra morphism from the
$k$-power algebra on $A^*$ onto the idempotency algebra on
$M_L$. Thus, any identity $s=t$ of $\varLO$-terms $s,t \in \terms$
holds in $M_L$ if, and only if, every $\varLO$-algebra morphism $h$
from $\terms$ into the $k$-power algebra on $A^*$ satisfies $h(s)
\equiv_L h(t)$.

\subparagraph*{Generalised Words.}
The third semantic domain we consider is the class of generalized
words. A \emph{generalized word} (over $\Lambda$) is a triple $u =
(P_u,\leq_u,\ell_u)$ comprised of a (possibly empty) linear ordering
$(P_u,\leq_u)$ being labeled by a map $\ell_u\colon P_u\to
\Lambda$. The set $\dom{u} = P_u$ is the \emph{domain} of $u$, its
elements are called \emph{positions} of~$u$. We write $u(p)$ instead
of $\ell_u(p)$ for $p \in P$. The \emph{order type} of $u$ is
the isomorphism type of $(P_u,\leq_u)$. We regard any finite word $w =
a_1\dotsc a_n \in \Lambda^*$ as a generalized word by defining
$\dom{w}=[1,n]$, $\leq_w$ as the natural order on $[1,n]$ and $w(k) =
a_k$ for $k \in [1,n]$. On that view, generalized words indeed
generalize finite words. As of now, we mean ``generalized word'' when
writing just ``word''. Two words $u$ and $v$ are \emph{isomorphic} if
there exists an isomorphism $f$ of linear orderings from
$(\dom{u},\leq_u)$ to $(\dom{v},\leq_v)$ such that $u(p) = v(f(p))$
for all $p \in \dom{u}$. \emph{We identify isomorphic words.} We
denote the set of all (isomorphism classes of) \emph{countable} words
by~$\words$. The exponent $\mathsf{LO}$ is for \emph{linear order}.
We regard $\Lambda^*$ as a subset of~$\words$.

Let $u,v \in \words$ be two words. Their \emph{concatenation} is the
word $u v \in \words$ defined by $\dom{u v} = \dom{u} \uplus \dom{v}$,
$\leq_{u v}$ makes all positions of $u$ smaller than those of $v$ and
retains the respective orders inside $u$ and inside $v$, and $(u
v)(p)$ is $u(p)$ if $p \in \dom{u}$ and $v(p)$ if $p \in \dom{v}$. The
set $\words$ with concatenation forms a monoid. On finite words this
concatenation coincides with the usual definition and hence
$\Lambda^*$ is a submonoid of $\words$.

It is customary to regard $n \in \Nat$ also as the order type of the
natural linear ordering on $[1,n]$. We extend the notion of the
$n$-power algebra on $\words$ to arbitrary countable order types
$\tau$ as follows. Let $(T,\leq_T)$ be a linear ordering of
isomorphism type $\tau$. The \emph{$\tau$-power} of any word $u \in
\words$ is the word $u^\tau \in \words$ defined by $\dom{u^\tau} =
\dom{u} \times T$, $(p,t) \leq_{u^\tau} (p',t')$ if $t <_T t'$ or if
$t=t'$ and $p \leq_u p'$, and $(u^\tau)(p,t) = u(p)$. We extend the
monoid $\words$ to a $\varLO$-algebra, called \emph{$\tau$-power
  algebra on $\words$}, by defining $u^\varLO = u^\tau$ for $u \in
\words$. We denote by~$\evalt{\,\cdot\,}\tau$ the unique
$\varLO$-algebra morphism from $\terms$ into this $\varLO$-algebra
mapping each $a \in \Lambda$ to the word consisting of a single
position which is labeled by $a$. Finally, notice that there are two
definitions of the $n$-power algebra on $\words$ around, but actually
they coincide.

\subparagraph*{Logic over Words.}
\emph{For the rest of this paper, we fix a countably infinite set
  $\Vars$ of (first-order) variables $x,y,z,\dotsc$.} The syntax of
first-order logic over words is given by
\begin{align*}
	\varphi &\coloncolonequals \top \mid \bot \mid \mty \mid x = y \mid \lambda(x) = a \mid
			x < y \mid x \leq y \mid \suc(x,y) \mid \\ &\mathrel{\phantom{\coloncolonequals}} \min(x) \mid \max(x) \mid 
	 \neg \varphi \mid \varphi \lor \varphi \mid \varphi \land \varphi \mid
			\exists x\, \varphi \mid \forall x\, \varphi \,,
\end{align*}
where $x,y \in \Vars$ and $a \in \Lambda$. The set of all formulae is
denoted by $\FO$. The \emph{free variables} $\fv(\varphi)$ of a
formula $\varphi \in \FO$ are defined as usual. A \emph{sentence} is a
formula $\varphi$ with $\fv(\varphi) = \emptyset$.

We only give a brief sketch of the semantics of formulae. Let $\Varset
\subseteq \Vars$ be a \emph{finite} set of variables. An
\emph{$\Varset$-valuation on $u$} is a pair $\enc{u,\alpha}$
consisting of a word $u \in \words$ and a map $\alpha\colon \Varset
\to \dom{u}$. It is a model of a formula $\varphi \in \FO$ with
$\fv(\varphi) \subseteq \Varset$, in symbols $\enc{u,\alpha} \models
\varphi$, if $u$ satisfies the formula $\varphi$ under the following
assumptions:
\begin{itemize}
\item variables range over positions of $u$ and free variables are
  interpreted according to $\alpha$,
\item $\top$ is always satisfied, $\bot$ never, and $\mty$ only in
  case $\dom{u} = \emptyset$,
\item the function symbol $\lambda$ is interpreted by the labeling
  map $\ell_u\colon \dom{u} \to \Lambda$ and
\item the predicates $<$, $\leq$, $\suc$, $\min$ and $\max$ are
  evaluated in the linear ordering $(\dom{u},\leq_u)$, where
  $\suc(x,y)$ means that $y$ is the immediate successor of~$x$.
\end{itemize}
We identify any word $u \in \words$ with the only
$\emptyset$-valuation on $u$, namely $\enc{u,\emptyset}$ with
$\emptyset$ also denoting the empty map. Thus, for sentences $\varphi$
the meaning of $u \models \varphi$ is well-defined.
Let $A \subseteq \Lambda$ be a finite alphabet and $\varphi \in \FO$ a
sentence. Due to the result of B{\"u}chi, Elgot, and Trakhtenbrot~\cite{Buc60:short,elgot61,tra61:short}, the \emph{language over $A$ defined
  by $\varphi$}, namely $L_A(\varphi) = \set{ w \in A^* | w \models
  \varphi }$, is regular. A language $L \subseteq A^*$ is
\emph{definable in} a class $\Fragm \subseteq \FO$ of formulae if
there exists a sentence $\varphi \in \Fragm$ such that $L =
L_A(\varphi)$.

\subparagraph*{Fragments.} 
We reintroduce (a slight variation of) the notion of a fragment as a
class of formulae obeying natural syntactic closure
properties~\cite{KufleitnerL12icalp:short}. A \emph{context} is a
formula $\mu$ with a unique occurrence of an additional constant
predicate $\circ$ which is intended to be a placeholder for another
formula $\varphi \in \FO$. The result of replacing $\circ$ in $\mu$ by
$\varphi$ is denoted by $\mu(\varphi)$.
Unfortunately, the notion of a fragment as defined in
\cite[Definition~1]{KufleitnerL12icalp:short} is slightly too weak for
our purposes. We require one more \emph{natural} syntactic closure
property, namely condition~\refenum{\ref{cond:negation}} in
Definition~\ref{def:fragment}
below. Condition~\refenum{\ref{cond:forall}} is missing in the
exposition in~\cite{KufleitnerL12icalp:short}. Nevertheless, since we
only add requirements, every fragment in our sense is still a fragment
in the sense of~\cite{KufleitnerL12icalp:short}.

\begin{definition}
\label{def:fragment}
A \emph{fragment} is a non-empty set of formulae $\Fragm \subseteq
\FO$ such that for all contexts $\mu$, formulae $\varphi,\psi \in
\FO$, $a \in \Lambda$ and $x,y \in \Vars$ the following conditions are
satisfied:
\begin{enumerate}
\item If $\mu(\varphi) \in \Fragm$, then $\mu(\top) \in \Fragm$,
  $\mu(\bot) \in \Fragm$, and $\mu(\lambda(x) = a) \in \Fragm$.
\item $\mu(\varphi\lor\psi) \in \Fragm$ if, and only if, $\mu(\varphi)
  \in \Fragm$ and $\mu(\psi) \in \Fragm$.
\item $\mu(\varphi\land\psi) \in \Fragm$ if, and only if,
  $\mu(\varphi) \in \Fragm$ and $\mu(\psi) \in \Fragm$.
\item \label{cond:negation} If $\mu(\neg\neg\varphi) \in \Fragm$, then
  $\mu(\varphi) \in \Fragm$.
\item If $\mu(\exists x\,\varphi) \in \Fragm$ and $x \not\in
  \fv(\varphi)$, then $\mu(\varphi) \in \Fragm$.
\item \label{cond:forall} If $\mu(\forall x\,\varphi) \in \Fragm$ and
  $x \not\in \fv(\varphi)$, then $\mu(\varphi) \in \Fragm$.
\end{enumerate}
It is \emph{closed under negation} if the following condition is
satisfied:
\begin{enumerate}
\setcounter{enumi}{6}
\item If $\varphi \in \Fragm$, then $\neg\varphi \in \Fragm$.
\end{enumerate}
It is \emph{order-stable} if the following condition is satisfied:
\begin{enumerate}
\setcounter{enumi}{7}
\item $\mu(x < y) \in \Fragm$ if, and only if, $\mu(x \leq y) \in
  \Fragm$.
\end{enumerate}
It is \emph{$\suc$-stable} if the following two conditions are
satisfied:
\begin{enumerate}
\setcounter{enumi}{8}
\item If $\mu(\suc(x,y)) \in \Fragm$, then $\mu(x=y) \in \Fragm$,
  $\mu(\max(x)) \in \Fragm$ and $\mu(\min(y)) \in \Fragm$.
\item If $\mu(\min(x)) \in \Fragm$ or $\mu(\max(x)) \in \Fragm$, then
  $\mu(\mty) \in \Fragm$.
\end{enumerate}
\end{definition}

\noindent
Examples for fragments in this sense include all classes of formulae
which are obtained from full first-order logic $\FO$ by limiting the
quantifier depth (e.g., $\FO_n$), the number of quantifier alternations
(e.g., $\Sigma_n$ and $\Pi_n$), the number of quantified variables
(e.g., $\FO^m$), the available predicates (e.g., first-order logic $\FO[<]$ without $\min$, $\max$, $\suc$) or
combinations of those.

The \emph{quantifier depth} $\qd(\varphi)$ of a formula $\varphi \in
\FO$ is defined as usual. A fragment $\Fragm$ has \emph{bounded
  quantifier depth} if there is an $n \in \Nat$ such that $\qd(\varphi)
\leq n$ for all $\varphi \in \Fragm$. For any $n \in \Nat$ and every
fragment $\Fragm$ the set $\Fragm_n = \set{ \varphi \in \Fragm |
  \qd(\varphi) \leq n }$ is a fragment of bounded quantifier
depth. Moreover, the fragment $\Fragm_n$ is order-stable (respectively
$\suc$-stable) in case~$\Fragm$ has the according property.

\section{Ehrenfeucht-Fra\"iss\'e Games for Arbitrary Fragments}
\label{sec:EF_game}

In this section, we introduce an Ehrenfeucht-Fra\"iss\'e game for
arbitrary fragments of first-order logic on generalized words and
develop its basic theory. Before we can describe this game, we need to
define some notation. In the following, we call the ``negated
quantifiers'' $\neg\exists$ and $\neg\forall$ also
\emph{quantifiers}. The set of all quantifiers (in this sense) is
$\Quantifiers = \{\exists,\forall,\neg\exists,\neg\forall\}$. For a
quantifier $\quant \in \Quantifiers$ and a variable $x \in \Vars$, the
\emph{reduct} of $\Fragm$ by $\quant x$ is the set
\begin{equation*}
	\dis\quant x\Fragm = \set{ \varphi \in \FO | \quant x\,\varphi \in \Fragm } \,.
\end{equation*}
Whenever this set is not empty, it is a fragment as well.

Now, let $\Fragm$ be a fragment and $u,v$ two words over $\Lambda$. We
are about to describe the \emph{$\Fragm$-game on $(u,v)$}. A
\emph{configuration} of this game is a triple $S =
(\FragmG,\enc{u,\alpha},\enc{v,\beta})$ comprised of a non-empty,
iterated reduct $\FragmG$ of $\Fragm$ and $\Varset$-valuations
$\enc{u,\alpha}$ and $\enc{v,\beta}$ on $u$ and $v$ for the same
arbitrary \emph{finite} subset $\Varset \subseteq \Vars$. To emphasize
the set $\Varset$, we also speak of an \emph{$\Varset$-configuration}. The
game starts in the $\emptyset$-configuration $(\Fragm,u,v)$ and goes on
for an arbitrary\hspace*{1pt}---\hspace*{1pt}possibly infinite\hspace*{1pt}---\hspace*{1pt}number of rounds. Assuming
that the game is currently in configuration
$S=(\FragmG,\enc{u,\alpha},\enc{v,\beta})$, a single round proceeds as
follows (see~Table~\ref{tab:masterplan} for a summary of this
procedure):
\begin{enumerate}
\item Spoiler chooses a quantifier $\quant \in \Quantifiers$ and a
  variable $x \in \Vars$ such that $\dis\quant x\FragmG \not=
  \emptyset$.
\item Spoiler chooses a position $q$ (like ``quest'') in the domain of
  $u$ if $\quant \in \{\exists,\neg\forall\}$ or in the domain of $v$
  if $\quant \in \{\forall,\neg\exists\}$.
\item Duplicator chooses a position $r$ (like ``reply'') in the domain
  of the other word.
\item The resulting configuration $\move S \quant x q r$ consists of the
  reduct $\dis\quant x\FragmG$ and the extension of the valuations
  $\enc{u,\alpha}$ and $\enc{v,\beta}$ by variable $x$ at positions
  $q$ and $r$, accordingly. Whenever $\quant$ is a negated quantifier,
  the role of the two extended valuations is additionally
  interchanged.
\end{enumerate}
\begin{table}[t]
\centering
\begin{tabular}{cccc}
\toprule
\refenum{1} Spoiler & \refenum{2} Spoiler & \refenum{3} Duplicator & \refenum{4} resulting configuration \\
chooses $\quant x$ & chooses $q$ in & chooses $r$ in & $\move S\quant x q r$ \\ \midrule
$\quant x = \exists x$     & $\dom{u}$ & $\dom{v}$ & $(\dis\exists       x\FragmG,\enc{u,\repl\alpha x q},\enc{v,\repl\beta x r})$ \\
$\quant x = \forall x$     & $\dom{v}$ & $\dom{u}$ & $(\dis\forall       x\FragmG,\enc{u,\repl\alpha x r},\enc{v,\repl\beta x q})$ \\
$\quant x = \neg\exists x$ & $\dom{v}$ & $\dom{u}$ & $(\dis{\neg\exists} x\FragmG,\enc{v,\repl\beta x q},\enc{u,\repl\alpha x r})$ \\
$\quant x = \neg\forall x$ & $\dom{u}$ & $\dom{v}$ & $(\dis{\neg\forall} x\FragmG,\enc{v,\repl\beta x r},\enc{u,\repl\alpha x q})$ \\ \bottomrule
\end{tabular}
\medskip
\caption{A single round of the $\Fragm$-game in configuration $S=(\FragmG,\enc{u,\alpha},\enc{v,\beta})$.}
\label{tab:masterplan}
\end{table}
Whenever a player cannot perform a choice because $\FragmG$ contains
no more quantified formulae or the domain of the according word is
empty, the game immediately stops and the other player wins. 
Besides the inability of Duplicator to move, the winning condition for Spoiler is to reach an $\Varset$-configuration
$(\FragmG,\enc{u,\alpha},\enc{v,\beta})$ such that there exists a literal $\varphi \in \FragmG$ with
$\fv(\varphi) \subseteq \Varset$ and $\enc{u,\alpha} \models \varphi$
but $\enc{v,\beta} \not\models \varphi$; in this case the game immediately stops.
 Duplicator's goal is simply
to prevent Spoiler from winning. In particular, Duplicator wins all
games that go on forever. Due to this circumstance,
the $\Fragm$-game is determined, i.e., either Spoiler or Duplicator has a winning strategy on $(u,v)$.

\begin{remark}
  The $\Fragm$-game is quite asymmetric since Spoiler is not allowed
  to choose before his first move whether he wants to play on $(u,v)$
  or on $(v,u)$. This may lead to the situation that he has a winning
  strategy on $(u,v)$ but not on $(v,u)$ or vice versa. This asymmetry
  is owed to the circumstance that $\Fragm$ might not be closed under
  negation. As soon as $\Fragm$ is assumed to be closed under negation
  this phenomenon disappears and Spoiler has a winning strategy on
  $(u,v)$ if, and only if, he has a winning strategy on
  $(v,u)$. We also note that, in general, the winning condition for Spoiler can be asymmetric since it does not rely on any notion of isomorphism.
  \qed
\end{remark}

\noindent
If the quantifier depth of a fragment $\Fragm$ is bounded by $n \in
\Nat$, the $\Fragm$-game lasts at most $n$ rounds. In particular, for
any fragment $\Fragm$ the $\Fragm_n$-game can be regarded as an
$n$-round version of the $\Fragm$-game. For instance, the $\FO_n$-game
resembles the classical $n$-round Ehrenfeucht-Fra\"iss\'e game. The
following result is an adaption of the Ehrenfeucht-Fra\"iss\'e
Theorem to the~$\Fragm$-game for fragments of bounded quantifier depth.

\pagebreak

\begin{restatable}{theorem}{TheoremEFBounded}
\label{thm:EF_bounded}
Let $\Fragm$ be a fragment of bounded quantifier depth. For all words
$u,v \in \words$ the following are equivalent:%
\begin{enumerate}
\item $u \models \varphi$ implies $v \models \varphi$ for all
  sentences $\varphi \in \Fragm$ and
\item Duplicator has a winning strategy in the $\Fragm$-game on
  $(u,v)$.
\end{enumerate}
\par
\end{restatable}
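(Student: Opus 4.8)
The plan is to prove the two implications separately, with the bulk of the work going into (2)$\Rightarrow$(1) phrased as a strategy-stealing/induction argument and (1)$\Rightarrow$(2) phrased via a canonical strategy built from $\Fragm$-equivalence types. Since $\Fragm$ has bounded quantifier depth, say $\qd(\varphi)\le n$ for all $\varphi\in\Fragm$, every play of the $\Fragm$-game terminates after at most $n$ rounds, so all inductions are finite and we need not worry about infinite plays.

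\medskip\noindent\textit{From (2) to (1).}
I would prove the contrapositive in a strengthened, configuration-relative form suitable for induction on the quantifier depth of the witnessing formula. Precisely: for every iterated reduct $\FragmG$ of $\Fragm$, every finite $\Varset\subseteq\Vars$, and all $\Varset$-valuations $\enc{u,\alpha}$, $\enc{v,\beta}$, if there is a sentence-closed-up-to-$\Varset$ formula $\varphi\in\FragmG$ with $\fv(\varphi)\subseteq\Varset$, $\enc{u,\alpha}\models\varphi$ and $\enc{v,\beta}\not\models\varphi$, then Spoiler wins the $\FragmG$-game from configuration $(\FragmG,\enc{u,\alpha},\enc{v,\beta})$. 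The induction is on $\qd(\varphi)$. In the base case $\varphi$ is a Boolean combination of literals; using the fragment closure properties (conditions 1--4: substituting $\top$, $\bot$, $\lambda(x)=a$, and removing double negations) one pushes the negations and the Boolean structure around until $\FragmG$ is seen to contain a literal that distinguishes the two valuations, which is exactly Spoiler's alternative winning condition. In the inductive step $\varphi$ begins with a quantifier (or a negated quantifier, after using condition~4 to normalise $\neg\neg$): write $\varphi = \quant x\,\psi$ with $\quant\in\Quantifiers$. Spoiler chooses $\quant x$ — legal because $\psi\in\dis\quant x\FragmG$ so that reduct is non-empty — then picks the position $q$ witnessing the (non-)satisfaction of $\quant x\,\psi$ on the appropriate side; for any Duplicator reply $r$, the resulting configuration $\move S \quant x q r$ still has a distinguishing formula of strictly smaller quantifier depth (namely $\psi$ or $\neg\psi$, depending on whether $\quant$ was negated), and the roles of $u$ and $v$ are swapped exactly as needed for $\neg$-quantifiers, so the induction hypothesis applies.

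\medskip\noindent\textit{From (1) to (2).}
Here I build Duplicator's strategy explicitly from logical types. For a configuration $S=(\FragmG,\enc{u,\alpha},\enc{v,\beta})$ call it \emph{safe} if $\enc{u,\alpha}\models\varphi$ implies $\enc{v,\beta}\models\varphi$ for every $\varphi\in\FragmG$ with $\fv(\varphi)\subseteq\Varset$. The initial configuration $(\Fragm,u,v)$ is safe by hypothesis~(1). I claim Duplicator can always move so as to keep the configuration safe. Suppose $S$ is safe and Spoiler plays $\quant x$ and a position $q$. Consider the set $\Psi$ of all formulae $\psi\in\dis\quant x\FragmG$ with the appropriate valuation on the $q$-side satisfying $\psi$; Duplicator must find a reply $r$ on the other word whose valuation also satisfies every formula in $\Psi$. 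For the existential-type moves ($\quant\in\{\exists,\neg\forall\}$, where $q$ sits in $u$ after possibly swapping) the key point is that $\exists x\,(\text{conjunction of the relevant }\psi\text{'s})$, or a fragment-admissible surrogate obtained via conditions~2 and~5, lies in $\FragmG$ and is satisfied by the $q$-side; safety of $S$ then transfers it to the other side, and any witness there is a good reply $r$. The $\forall/\neg\exists$ cases are symmetric using condition~6. Care is needed because $\Fragm$ need not be closed under negation, so one cannot freely dualise; this is where the asymmetry of the game and the precise list of closure conditions in Definition~\ref{def:fragment} earn their keep. Once safety is maintained, Duplicator never loses: she can always move (the reduct is non-empty because Spoiler's move was legal, and the target word is non-empty because it contains the reply just constructed), and no distinguishing literal can ever arise in a safe configuration.

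\medskip\noindent\textit{Main obstacle.}
The delicate step is the (1)$\Rightarrow$(2) direction, specifically constructing Duplicator's reply when $\Fragm$ is \emph{not} closed under negation: one has only one-directional implications available, and one must argue that the finitely many formulae Spoiler could be ``testing'' at the next move can be bundled into a single formula that still lies in the current reduct $\FragmG$ and respects the one-sided safety invariant. The finiteness needed to form that conjunction comes precisely from the bounded quantifier depth together with finiteness of $\Varset$ and of the alphabet appearing — so this is also the place where the bounded-depth hypothesis is essential. The (2)$\Rightarrow$(1) direction is comparatively routine once the right induction statement (relative to an arbitrary iterated reduct, with a distinguishing formula of controlled quantifier depth) is set up.
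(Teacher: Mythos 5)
Your proposal is correct and follows essentially the same route as the paper: the implication from~\refenum{2} to~\refenum{1} is the paper's Proposition~\ref{prop:Spoiler_wins} (contraposition, normalisation to a positive Boolean combination, induction on quantifier depth over configurations), and the implication from~\refenum{1} to~\refenum{2} is the paper's Proposition~\ref{prop:Duplicator_wins}, which maintains exactly your ``safe'' invariant and builds Duplicator's reply from the conjunction of the finitely many (up to equivalence, by Corollary~\ref{cor:finite_repr}) relevant formulae in the reduct. The only quibbles are cosmetic: the bundling of formulae uses closure under conjunction (condition~3) rather than conditions~2 and~5, and the finiteness comes from Corollary~\ref{cor:finite_repr} rather than from any finiteness of the alphabet.
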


\noindent
A proof of this theorem can easily be achieved along the lines of a
proof of the classical version, cf.~\cite{ik89iandc}. In fact, such a
proof reveals that the implication ``\refenum{2} $\bm\Rightarrow$
\refenum{1}'' even holds if the quantifier depth of $\Fragm$ is not
bounded. In contrast, the implication ``\refenum{1} $\bm\Rightarrow$
\refenum{2}'' substantially relies the boundedness of the quantifier
depth of $\Fragm$. For instance,
Duplicator has a winning strategy in the $\FO_n$-game on
$(a^\zeta,a^{\zeta+\zeta})$ for each $n \in \Nat$ and hence $a^\zeta
\models \varphi$ implies $a^{\zeta+\zeta} \models \varphi$ for all
sentences $\varphi \in \FO$, but Spoiler has a winning strategy in the infinite
$\FO$-game on~$(a^\zeta,a^{\zeta+\zeta})$.

The objective of the remainder of this section is to identify
additional requirements on $\Fragm$ and/or $u,v$ such that the
boundedness of the quantifier depth can be omitted. It turns out that
the property introduced in Definition~\ref{def:pi-rationality} below
in combination with $\suc$-stability of the fragment is sufficient for
this purpose and still allows for the applications in
Section~\ref{sec:EF_terms}. The order types of the sets $\Nat$,
$\Int$, $\Rat$ and $\Int_{<0}$ ordered naturally are denoted by
$\omega$, $\zeta$, $\eta$ and $\omega^*$, respectively. Then $\omega +
\zeta\cdot\eta + \omega^*$ is the order type of the word $a^\omega
\bigl(a^\zeta\bigr){}^\eta a^{\omega^*}$, where $a \in \Lambda$.

\begin{definition}
  \label{def:pi-rationality}
  Let $\theLO = \omega + \zeta\cdot\eta + \omega^*$. A word $u \in
  \words$ is \emph{$\theLO$-rational} if it can be constructed from
  the finite words in $\words$ using the operations of concatenation
  and $\theLO$-power only or, equivalently, if $u = \evalt t\theLO$
  for some $\varLO$-term $t \in \terms$.
\end{definition}

\begin{restatable}{theorem}{TheoremEFUnbounded}
\label{thm:EF_unbounded}
Let $\Fragm$ be a $\suc$-stable fragment. For all $\theLO$-rational
words $u,v \in \words$ the following are equivalent:
\begin{enumerate}
\item $u \models \varphi$ implies $v \models \varphi$ for all
  sentences $\varphi \in \Fragm$ and
\item Duplicator has a winning strategy in the $\Fragm$-game on
  $(u,v)$.
\end{enumerate}
\par
\end{restatable}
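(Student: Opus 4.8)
The implication ``\refenum{2} $\bm\Rightarrow$ \refenum{1}'' holds for every fragment, as remarked after Theorem~\ref{thm:EF_bounded}, so the plan is to prove ``\refenum{1} $\bm\Rightarrow$ \refenum{2}'' by reducing it to the bounded case. Since $\Fragm_n \subseteq \Fragm$ and $\Fragm_n$ has bounded quantifier depth, condition~\refenum{1} for $\Fragm$ entails condition~\refenum{1} for $\Fragm_n$, whence Theorem~\ref{thm:EF_bounded} shows that Duplicator wins the $\Fragm_n$-game on $(u,v)$ for every $n \in \Nat$. Call a configuration $S = (\FragmG,\enc{u,\alpha},\enc{v,\beta})$ \emph{safe} if, for every $n \in \Nat$, Duplicator wins the variant of the $\Fragm$-game that is resumed in $S$ and truncated to at most $n$ further rounds, with a play reaching that bound counting as a Duplicator win. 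Since more rounds and a larger reduct never hurt Duplicator, safety of $S$ is equivalent, via a configuration version of Theorem~\ref{thm:EF_bounded}, to the condition that $\enc{u,\alpha} \models \varphi$ implies $\enc{v,\beta} \models \varphi$ for every $\varphi \in \FragmG$ with $\fv(\varphi) \subseteq \Varset$. Hence the initial configuration $(\Fragm, u, v)$ is safe, whereas every configuration at which Spoiler has already won is unsafe (take $n = 0$).

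The elementary but crucial observation is that every play of the $\Fragm$-game in which Spoiler wins is \emph{finite}. Consequently, if Duplicator can always reply so as to keep the configuration safe, then each resulting play is either infinite---won by Duplicator by convention---or ends with Spoiler unable to move---again a Duplicator win; and Duplicator is never stuck, since from a safe configuration he can always move, as otherwise he would lose a one-round game. So it suffices to prove:
\begin{quote}
  ($\star$)\enspace for every safe configuration $S$ and every legal choice of a quantifier $\quant$, a variable $x$ and a position $q$ by Spoiler, Duplicator has a reply $r$ such that $\move S\quant x q r$ is safe.
\end{quote}
This is where I expect the difficulty to lie, and ($\star$) genuinely requires the hypotheses: for the non-$\theLO$-rational pair $(a^\zeta, a^{\zeta+\zeta})$, on which Duplicator wins the $\FO_n$-game for every $n$ yet Spoiler wins the infinite $\FO$-game, a reply that keeps a configuration ``safe up to $n$ rounds'' need not keep it safe up to $n+1$ rounds, so ($\star$) fails for that pair.

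To prove ($\star$) I would exploit that $u = \evalt s\theLO$ and $v = \evalt t\theLO$ for $\varLO$-terms $s,t$: the positions of $u$ and of $v$ are organised in a \emph{finite} hierarchy of nested blocks following the parse trees of $s$ and $t$, and the order type $\theLO$ is highly self-similar. The features of $\theLO$ that matter are that it has a least and a greatest position, that it is \emph{discrete} (every non-extremal position has an immediate predecessor and an immediate successor), that its interior $\zeta \cdot \eta$ is homogeneous and sits densely between the $\omega$-prefix and the $\omega^*$-suffix, and that adjoining or deleting finitely many positions at the ends of a $\theLO$-block, or splitting one into finitely many consecutive $\theLO$-blocks, does not change it up to the indistinguishabilities relevant to the game. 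From this one derives that safety of a configuration depends only on a finitary invariant of the two valuations---the block-addresses of the finitely many pebbles, together with their relative order, their adjacency, and their position relative to block endpoints---so that, when Spoiler places a new pebble $q$ in $u$, Duplicator can read off the required address in $v$ and realise it: the homogeneity of the $\zeta\cdot\eta$ part provides room between existing pebbles, the $\omega$- and $\omega^*$-parts allow matching a pebble counted from an endpoint, and discreteness allows matching an immediate successor. Because only finitely many pebbles---hence finitely many address patterns---ever occur, and because safety is monotone in $n$, a pigeonhole argument over this finite set upgrades ``a reply safe up to $n$ rounds for each $n$'' to a \emph{single} reply that is safe. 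Finally, $\suc$-stability is precisely what makes the boundary literals compatible with safety: its two defining conditions force that whenever $\Fragm$ refers to $\suc$, $\min$ or $\max$ in some context it also refers there to the coarser $x = y$ and $\mty$, so that Duplicator's address-matching---which by design preserves adjacency, endpoint status and emptiness of blocks---already neutralises every literal that could separate the two sides.

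The technical heart is therefore the proof of ($\star$): making precise the block-addresses in the $\varLO$-term hierarchy, showing that the finitary invariant characterises safety and is preserved by one round---an induction over the parse trees of $s$ and $t$, with cases for concatenation nodes (Spoiler's pebble in the left factor, in the right factor, or straddling them, the last for the sake of successor literals) and for $\theLO$-power nodes (where a block is split into consecutive $\theLO$-blocks and its two endpoints are treated separately)---and checking that $\suc$-stability handles the endpoint and adjacency literals. This is exactly the book-keeping the paper carries out once and for all on the single order type $\theLO$, instead of redoing it for each fragment and each round bound.
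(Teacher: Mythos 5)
Your outer architecture coincides with the paper's: the implication from \refenum{2} to \refenum{1} is the easy direction, and for the converse you define exactly the paper's notion of a \emph{good} configuration (one with $\enc{u,\alpha} \Dwins{\FragmG_n} \enc{v,\beta}$ for every $n$), observe via Theorem~\ref{thm:EF_bounded} that the initial configuration is good, that good configurations never satisfy Spoiler's winning condition, and that everything reduces to your statement $(\star)$ that goodness can be preserved for one round. Up to that point the proposal is sound and matches the paper.

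The gap is in your proof of $(\star)$. Your plan is: for each $n$, Theorem~\ref{thm:EF_bounded} yields a reply $r_n$ that is safe for $n$ further rounds, and since ``only finitely many address patterns ever occur'' a pigeonhole over this finite set produces a single reply safe for all $n$. But the set of address patterns that must be distinguished is \emph{not} finite uniformly in $n$: two positions of $\evalt t\theLO$ lying in the same block but at different finite distances from a block endpoint are separated by a formula of quantifier depth roughly logarithmic in those distances, so the $\Fragm_n$-game forces you to record exact offsets up to about $2^{n+1}$, and the relevant invariant grows with $n$. Consequently the sequence $(r_n)$ may drift --- say $r_n$ sits at distance $n$ from the left end of an $\omega$-block --- and no single pattern recurs, so the pigeonhole as stated does not close the argument. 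What is needed, and what the paper supplies as Lemma~\ref{lemma:limit_points}, is a limit-point argument with a genuine case distinction: either some exact finite offset recurs among infinitely many of the $r_n$ (there your pigeonhole applies), or the offsets are unbounded, in which case the reply must be moved to a position both of whose sides have order type $\theLO$, and one invokes Lemma~\ref{lemma:exponentiation} ($u \Dwins{\Fragm_n} v$ implies $u^m \Dwins{\Fragm_n} v^\theLO$ for $m \ge 2^{n+1}-1$) together with Proposition~\ref{prop:concatenation} to show that for each $n$ infinitely many of the $\enc{v,\repl\beta x{r_i}}$ satisfy $\enc{v,\repl\beta x{r_i}} \Dwins{\Fragm_n} \enc{v,\repl\beta x r}$; transitivity then makes the new configuration good. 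This lemma is proved by induction on the $\theLO$-rational construction --- essentially the block induction you sketch --- but the ``escape to infinity'' case is the heart of the matter and is exactly what your finiteness claim papers over; note that your own counterexample $(a^\zeta,a^{\zeta+\zeta})$ fails precisely because such a limit position need not exist in a non-$\theLO$-rational word.
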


\noindent
As already mentioned, the implication ``\refenum{2} $\bm\Rightarrow$
\refenum{1}'' can be shown using the very same proof as for the
according implication of Theorem~\ref{thm:EF_bounded}. The key idea
behind proving the implication \mbox{``\refenum{1} $\bm\Rightarrow$
  \refenum{2}''} is as follows: Theorem~\ref{thm:EF_bounded} provides
us for each $n \in \Nat$ with a winning strategy for Duplicator in the
$\Fragm_n$-game on $(u,v)$. A winning strategy in the $\Fragm$-game is
obtained by defining a limit of all those strategies. This limit process relies of the
$\theLO$-rationality of the underlying words and is formalized by
Lemma~\ref{lemma:limit_points} below. A major ingredient of its
proof is
Proposition~\ref{prop:concatenation}.% and Lemma~\ref{lemma:exponentiation} whose proofs use only standard arguments and can be found in Appendix~\ref{apx:EF_unbounded}.

In order to keep notation concise, we abbreviate the circumstance that
Duplicator has a winning strategy in a configuration $S =
(\Fragm,\enc{u,\alpha},\enc{v,\beta})$ by $\enc{u,\alpha} \Dwins\Fragm
\enc{v,\beta}$. Since the $\Fragm$-game is determined, $\enc{u,\alpha}
\Swins\Fragm \enc{v,\beta}$ hence means that Spoiler has a winning
strategy in $S$. The relation $\Dwins\Fragm$ is reflexive and
transitive, i.e., a preorder on the set of all configurations. It induces
an equivalence $\Indist\Fragm$ defined by $\enc{u,\alpha}
\Indist\Fragm \enc{v,\beta}$ if $\enc{u,\alpha} \Dwins\Fragm
\enc{v,\beta}$ and $\enc{v,\beta} \Dwins\Fragm \enc{u,\alpha}$.

\begin{restatable}{proposition}{PropositionConcatenation}
\label{prop:concatenation}
Let $\Fragm$ be a $\suc$-stable fragment, $k \in \Nat$ and
$\enc{u_i,\alpha_i},\enc{v_i,\beta_i}$ $\Varset_i$-valuations with
mutually disjoint $\Varset_i$ for $i \in [1,k]$. If
$\enc{u_i,\alpha_i} \Dwins\Fragm \enc{v_i,\beta_i}$ for each $i \in
[1,k]$, then $\enc{u_1\dotsm u_k,\alpha_1\cup\dotsb\cup\alpha_k}
\Dwins\Fragm \enc{v_1\dotsm v_k,\beta_1\cup\dotsb\cup\beta_k}$.  \qed
\end{restatable}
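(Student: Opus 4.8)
The plan is to prove the statement by induction on $k$. The base case $k=1$ is trivial, so the heart of the matter is the inductive step, which reduces to the case $k=2$: given $\enc{u_1,\alpha_1}\Dwins\Fragm\enc{v_1,\beta_1}$ and $\enc{u_2,\alpha_2}\Dwins\Fragm\enc{v_2,\beta_2}$ with disjoint variable sets $\Varset_1,\Varset_2$, I want to build a winning strategy for Duplicator in the $\Fragm$-game on $(u_1u_2,v_1v_2)$ starting from the configuration that glues the two pairs of valuations together. The natural idea is that Duplicator plays a ``product strategy'': she runs the two given winning strategies in parallel, and whenever Spoiler places a new variable in one of the two halves she answers according to the corresponding component strategy. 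Concretely, I would maintain as an invariant that after each round the current configuration on $(u_1u_2,v_1v_2)$ decomposes as $(\FragmG,\enc{u_1u_2,\alpha_1'\cup\alpha_2'},\enc{v_1v_2,\beta_1'\cup\beta_2'})$ where, for $i\in\{1,2\}$, the configuration $(\FragmG_i,\enc{u_i,\alpha_i'},\enc{v_i,\beta_i'})$ is a position reachable in the $i$-th game in which Duplicator is still winning, and where the reduct $\FragmG$ is ``compatible'' with the reducts $\FragmG_i$ in the sense that every literal over the combined variables that distinguishes the two sides of the big game would, when restricted to the variables lying in half $i$, be a literal available in $\FragmG_i$ distinguishing the two sides there.

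The key structural fact that makes this work is that the literals of the logic, evaluated in a concatenation $u_1u_2$, localise: a literal $\varphi$ with $\fv(\varphi)\subseteq\Varset_1\uplus\Varset_2$ and with all its variables mapped into (say) $\dom{u_1}$ has the same truth value in $\enc{u_1u_2,\alpha}$ as in $\enc{u_1,\alpha}$, and similarly for $\dom{u_2}$ — with the single delicate exception of literals that mix the two halves. The only literals that can see both halves at once are order literals $x<y$, $x\le y$ (always true when $x$ is in the left half and $y$ in the right), the successor literal $\suc(x,y)$, and the $\min/\max$ literals. This is precisely where $\suc$-stability enters: condition (9) forces $\mu(\suc(x,y))\in\Fragm$ to imply $\mu(\max(x))\in\Fragm$ and $\mu(\min(y))\in\Fragm$, and condition (10) forces $\mu(\min(x))\in\Fragm$ or $\mu(\max(x))\in\Fragm$ to imply $\mu(\mty)\in\Fragm$; so a cross-half successor literal or a misplaced $\min$/$\max$ literal is never the \emph{only} thing Spoiler has available — it always comes accompanied by a literal (an $\mty$-literal, or a $\max$/$\min$ literal restricted to one half) that is already handled inside one of the component games or is trivially respected because both $u_i$ and $v_i$ are nonempty exactly when the corresponding halves are. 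Thus the invariant ``no distinguishing literal exists in $\FragmG$'' is maintained from ``no distinguishing literal exists in $\FragmG_1$ or $\FragmG_2$''.

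Carrying this out, the steps are: (i) set up the decomposition invariant precisely, including tracking how an iterated reduct $\FragmG$ of $\Fragm$ by a sequence of quantifiers $\quant_1 x_1,\dots,\quant_m x_m$ induces, for each half, the iterated reduct of $\Fragm$ by the subsequence of those $\quant_j x_j$ whose variable $x_j$ landed in that half — and checking that reducts commute with this splitting, so $\dis\quant x{\FragmG}$ nonempty implies the relevant component reduct is nonempty, which is what lets Duplicator legally follow the component strategy; (ii) describe Duplicator's reply: when Spoiler picks $\quant x$ and a position $p$ in one of the big words, $p$ lies in a unique half $i$ (positions of $u_1u_2$ are $\dom{u_1}\uplus\dom{u_2}$), Duplicator consults the $i$-th strategy on the sub-move $\quant x$ with quest $p$ and replies with whatever it returns, which is automatically a position in the corresponding half of the other big word; (iii) verify the invariant is preserved, which is routine for quantifier bookkeeping and is the literal-localisation argument above for the winning condition; (iv) conclude that since the invariant forbids Spoiler from ever exhibiting a distinguishing literal and since Duplicator can always move as long as Spoiler can (because legality of Spoiler's move in the big game yields legality of the corresponding sub-move, hence a Duplicator reply in the relevant component game), Duplicator wins — including all infinite plays. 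The main obstacle I expect is step (iii) for the cross-half literals: one must be careful that $\suc$-stability is genuinely strong enough, i.e. that \emph{every} literal of the syntax which is not purely ``within one half'' is covered, and in particular one should double-check the edge case where one of the $u_i$ (equivalently $v_i$) is the empty word, so that a position ``in the left half'' might really mean the boundary behaviour is governed by the other half; handling this uniformly is what condition (10) and the $\mty$ literal are there for, but it needs to be spelled out rather than waved through.
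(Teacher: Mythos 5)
Your overall architecture --- the decomposition invariant, the localization of literals to the two halves, and the use of $\suc$-stability to cover the boundary literals $\suc$, $\min$, $\max$ together with $\mty$ for empty factors --- is exactly the paper's. The genuine gap is that the invariant you propose cannot be maintained through \emph{negated} quantifiers, which your sketch never mentions. In your setup each component game sits at the reduct of $\Fragm$ by only the subsequence of quantifiers whose variables landed in that half, and only the component receiving the new variable advances. Suppose Spoiler plays $\neg\exists x$ with a quest in half $1$. By the rules of the game the resulting configuration of the big game interchanges the two valuations, and so does the configuration of component $1$; but component $2$, which you leave untouched, keeps its original orientation $\enc{u_2,\alpha_2'} \Dwins{\FragmG_2} \enc{v_2,\beta_2'}$. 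Your decomposition now asserts that the left word of the big configuration, namely $v_1v_2$, splits into the components' left words, which are $v_1$ and $u_2$ --- false; and you cannot simply flip component $2$'s relation, since $\Dwins{\Fragm}$ is only a preorder, not symmetric. The paper's repair is to keep \emph{every} component at the same reduct $\FragmG$ as the big game and to advance all of them in every round: the component where the variable lands advances via its winning strategy, and each other component advances via Lemma~\ref{lemma:discard_quant}, which states that $\enc{u,\alpha}\Dwins\Fragm\enc{v,\beta}$ implies $\enc{u,\alpha}\Dwins{\dis\quant x\Fragm}\enc{v,\beta}$ for non-negated $\quant$ and the \emph{swapped} relation $\enc{v,\beta}\Dwins{\dis\quant x\Fragm}\enc{u,\alpha}$ for negated $\quant$ --- precisely the synchronized side-flip your invariant is missing. (The proof of that lemma is also where the empty-domain edge case you flag is actually dealt with.)

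A secondary weak point, even for non-negated quantifiers: your compatibility claim that a literal of $\FragmG$ with all variables in half $i$ is available in the subsequence reduct $\FragmG_i$ rests on deleting the interleaved quantifiers using conditions \refenum{5} and \refenum{6} of Definition~\ref{def:fragment}, which apply only to $\exists$ and $\forall$ and only when the deleted variable is not free in the remaining formula; the same issue affects the legality argument (nonemptiness of $\dis\quant x{\FragmG_i}$). Making this precise is at least as much bookkeeping as the paper's uniform-reduct scheme, which avoids the question entirely because all components always carry the identical reduct.
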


\begin{restatable}{lemma}{LemmaExponentiation}
\label{lemma:exponentiation}
Let $\Fragm$ be a $\suc$-stable fragment with quantifier depth 
bounded by $n \in \Nat$ and $u,v \in \words$. If $u \Dwins\Fragm v$,
then $u^m \Dwins\Fragm v^\theLO$ and $u^\theLO \Dwins\Fragm v^m$ for
all $m \geq 2^{n+1}-1$.  \qed
\end{restatable}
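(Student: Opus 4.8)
The plan is to exploit the self-similar structure of the order type $\theLO = \omega + \zeta\cdot\eta + \omega^*$. The key combinatorial fact is that $\theLO$ absorbs arbitrary finite powers on both sides and in the middle: for every $m \ge 1$ one has $\theLO = \theLO \cdot m$ as order types, and more usefully $\theLO = m + \zeta\cdot\eta + m$ in the sense that the underlying linear order of $\theLO$ can be chopped into an initial segment of type $m$ (sitting inside the $\omega$-prefix), a final segment of type $m$ (sitting inside the $\omega^*$-suffix), and a middle piece that is again of type $\zeta\cdot\eta$, hence of type $\theLO$ after we re-expand the two boundary copies of $m$ — actually, the cleanest decomposition to use is $u^\theLO \cong u^{\theLO} \cdot (u^{\theLO})$ will not be needed; what I want is simply that $u^\theLO$ factors as a concatenation of $m$ copies of $u$, then a copy of $(u^\zeta)^\eta$, then $m$ copies of $u$, i.e. $u^\theLO \cong u^m \, u^{\zeta\cdot\eta} \, u^m$, and symmetrically that $v^\theLO$ contains $v^m$ as an initial and as a final factor. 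This factorization is where I must be a little careful with the exact definitions of $\tau$-power given in the Preliminaries, but it is routine order-type bookkeeping.

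The substantive input is an ``absorption'' statement bounding how far a depth-$n$ fragment can see: if $\Fragm$ is $\suc$-stable with quantifier depth bounded by $n$ and $w \in \words$, then $w^m \Indist\Fragm w^{m'}$ (indeed $w^m \Indist\Fragm w^\theLO$) for all $m, m' \ge 2^{n+1}-1$. The standard way to get this is a pumping/Ehrenfeucht-Fra\"iss\'e argument: Duplicator plays a ``block strategy'' on a long concatenation of copies of $w$, maintaining that after round $i$ the copies chosen so far partition the remaining copies into at most $2^i$ intervals, and that corresponding intervals on the two sides have the same number of copies or both have ``many'' (at least $2^{n-i}$); since $n$ rounds are played and $2^{n+1}-1 \ge \sum_{i<n} 2^i + 1$, there are always enough copies to maintain the invariant, and $\suc$-stability is exactly what lets Duplicator also handle $\min$, $\max$, $\suc$ and the endpoint literals. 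In fact, rather than reprove this from scratch, I would try to derive it from Proposition~\ref{prop:concatenation} together with a one-round analysis: $w^m \Dwins\Fragm w^{m+1}$ and $w^{m+1} \Dwins\Fragm w^m$ for $m$ large enough, and then the finite $\theLO$-versions via the observation that $w^\theLO \cong w^j \, w^{\zeta\cdot\eta} \, w^j$ and Proposition~\ref{prop:concatenation} applied to the three blocks, the middle block $w^{\zeta\cdot\eta}$ being handled by noting $w^{\zeta\cdot\eta} \Dwins\Fragm w^{m''}$ for suitable $m''$ (this is the genuinely infinite-to-finite step and is the one I expect to have to argue most carefully, again by a block strategy using that $\zeta\cdot\eta$ is dense-without-endpoints-of-$\zeta$-blocks).

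Granting that input, the lemma itself is a two-line combination. Assume $u \Dwins\Fragm v$ and $m \ge 2^{n+1}-1$. For $u^m \Dwins\Fragm v^\theLO$: write $v^\theLO \cong v^m \, v^{\zeta\cdot\eta} \, v^m$; by the absorption fact $v^m \Indist\Fragm v^{m'}$ for every $m' \ge 2^{n+1}-1$ and $v^{\zeta\cdot\eta} \Indist\Fragm v^{m''}$ for suitable $m''$, so $v^\theLO \Indist\Fragm v^{m} v^{m''} v^m = v^{2m+m''} \Indist\Fragm v^m$; hence $u^m \Dwins\Fragm v^m$ follows from $u \Dwins\Fragm v$ by applying Proposition~\ref{prop:concatenation} $m$ times (with disjoint copies of the empty valuation), and composing preorders gives $u^m \Dwins\Fragm v^\theLO$. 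The statement $u^\theLO \Dwins\Fragm v^m$ is symmetric: $u^\theLO \Indist\Fragm u^m$ by absorption, and $u^m \Dwins\Fragm v^m$ as before. The main obstacle, then, is not the assembly but establishing the infinite-power absorption $w^{\zeta\cdot\eta} \Indist\Fragm w^{m''}$ (and $w^\theLO \Indist\Fragm w^{m}$) for depth-$n$ fragments; everything else is concatenation via Proposition~\ref{prop:concatenation} and transitivity of $\Dwins\Fragm$.
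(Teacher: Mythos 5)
Your reduction of the general statement to the special case $u=v$ is sound: Proposition~\ref{prop:concatenation} applied $m$ times gives $u^m \Dwins\Fragm v^m$ from $u \Dwins\Fragm v$, and transitivity then reduces everything to the single absorption claim $w^\theLO \Indist\Fragm w^m$ for $m \geq 2^{n+1}-1$. The problem lies in how you propose to prove that claim. The factorization $u^\theLO \cong u^m\, u^{\zeta\cdot\eta}\, u^m$ is false: $\theLO = \omega + \zeta\cdot\eta + \omega^*$ is a discrete order with both endpoints (every non-maximal position has an immediate successor), whereas in $m + \zeta\cdot\eta + m$ the last position of the initial block has no successor, because $\zeta\cdot\eta$ has no least element, and dually at the other seam; the two orders are distinguished by a low-depth sentence. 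Worse, the intermediate equivalence $w^{\zeta\cdot\eta} \Indist\Fragm w^{m''}$, which you yourself single out as the ``genuinely infinite-to-finite step'', is simply false for $\suc$-stable fragments: $\suc$-stability forces $\min$, $\max$ and $\mty$ into the fragment alongside $\suc$, and already the depth-one sentence $\exists x\, \min(x)$ holds in $w^{m''}$ but fails in $w^{\zeta\cdot\eta}$, which has no minimal position. The whole reason the paper uses $\theLO$ rather than $\zeta\cdot\eta$ is that the $\omega$-prefix and $\omega^*$-suffix supply the endpoints and the local successor structure; any correct argument has to keep them attached.

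Your first sketch --- the block strategy in which Duplicator maintains, round by round, that corresponding intervals of copies either have equal finite length or are both ``long'' --- is the right idea and is essentially what the paper does, formalized as an induction on $n$: in configuration $(\Fragm,u^m,v^\theLO)$ a quest $(s,k)$ is answered by a reply $(t,p)$ chosen so that $t$ answers $s$ in the game on $(u,v)$ and so that the flanking factors satisfy $u^{k-1} \Dwins{\FragmG} v^{\smallcev\tau_p}$ and $u^{m-k} \Dwins{\FragmG} v^{\smallvec\tau_p}$ for the current reduct $\FragmG$, where each of $\cev\tau_p$ and $\vec\tau_p$ is either finite (handled by Proposition~\ref{prop:concatenation}) or equal to $\theLO$ itself (handled by the induction hypothesis at depth $n-1$ with threshold $2^{n}-1$). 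Carrying out that induction never requires comparing anything against $w^{\zeta\cdot\eta}$ in isolation, so the difficulty you flagged disappears. As written, however, the proposal's load-bearing step is wrong and the argument does not go through.
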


\noindent
The following lemma formalizes the limit process mentioned above.

\begin{lemma}
\label{lemma:limit_points}
Let $\Fragm$ be a $\suc$-stable fragment, $x \in \Vars$ and
$\enc{u,\alpha}$ an $\Varset$-valuation on a $\theLO$-rational word $u
\in \words$. For every infinite sequence $(q_i)_{i \in \Nat} \in
\dom{u}^\Nat$ there exists a position $q \in \dom{u}$ such that for
all $n \in \Nat$ there are arbitrarily large $i \in \Nat$ with
$\enc{u,\repl\alpha x{q_i}} \Dwins{\Fragm_n} \enc{u,\repl\alpha x q}$.
\end{lemma}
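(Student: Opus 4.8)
The plan is to exploit the $\theLO$-rationality of $u$: write $u = \evalt{t}{\theLO}$ for some $\varLO$-term $t$, and argue by induction on the structure of $t$. Recall $\theLO = \omega + \zeta\cdot\eta + \omega^*$, so a $\theLO$-power $v^\theLO$ consists of $\omega$ copies of $v$, then $\zeta\cdot\eta$ copies, then $\omega^*$ copies, all concatenated. The point of choosing this particular order type is that it is \emph{dense-with-endpoints-like} enough that inside the middle block $v^{\zeta\cdot\eta}$ one can always find a ``limit copy'' approached from both sides, while the outer $\omega$ and $\omega^*$ parts supply limit points for sequences that escape to the left or right end. So the statement is really a compactness/Ramsey-type fact about which positions are approximated (in the $\Dwins{\Fragm_n}$ preorders, simultaneously for all $n$) by a given infinite sequence of positions.

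The first step is to reduce to a single $n$ at a time and then diagonalize. For a fixed $n$, the fragment $\Fragm_n$ has bounded quantifier depth, so by Theorem~\ref{thm:EF_bounded} the relation $\Dwins{\Fragm_n}$ on $\Varset\cup\{x\}$-valuations over $u$ is determined by which sentences of $\Fragm_n$ hold, and there are only finitely many $\Fragm_n$-inequivalent such valuations (finitely many quantifier-depth-$n$ types over a fixed finite variable set). Hence the map sending $q_i$ to the $\Dwins{\Fragm_n}$-class of $\enc{u,\repl{\alpha}{x}{q_i}}$ takes finitely many values, so some value $c_n$ is attained for arbitrarily large $i$. The task is to find one position $q$ whose class is ``$\le$'' $c_n$ for \emph{every} $n$ simultaneously — note the asymmetry: we only need $\enc{u,\repl{\alpha}{x}{q_i}} \Dwins{\Fragm_n} \enc{u,\repl{\alpha}{x}{q}}$, i.e.\ Duplicator wins going from $q_i$ to $q$, so $q$ must be ``at least as good as'' the $q_i$'s for Duplicator. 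Since $\Fragm_{n}\subseteq\Fragm_{n+1}$, the class $c_{n+1}$ refines $c_n$, so these are coherent, and by König's lemma / a standard inverse-limit argument there is a coherent choice; the remaining work is to realize it by an actual position $q\in\dom{u}$.

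To realize it, I would induct on $t$. If $t = a$ is a letter, $\dom{u}$ is a single point and the statement is trivial (every $q_i = q$). If $t = t_1 t_2$, the sequence $(q_i)$ has infinitely many members in $\dom{\evalt{t_1}{\theLO}}$ or in $\dom{\evalt{t_2}{\theLO}}$; pass to that subsequence, apply the induction hypothesis there to get a limit position $q$, and use Proposition~\ref{prop:concatenation} to conclude that replacing the coordinate inside one factor while leaving the other factor's valuation fixed preserves $\Dwins{\Fragm_n}$ (the untouched side wins trivially by reflexivity). If $t = s^\theLO$, so $u = \evalt{s}{\theLO} = w^\theLO$ with $w = \evalt{s}{\theLO}$ itself $\theLO$-rational: here $\dom{u} = \dom{w}\times T$ with $T$ of order type $\theLO$. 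Split the $q_i = (p_i, s_i)$ according to whether the $T$-coordinates $s_i$ lie cofinally in the $\omega$-part, are eventually constant, or land cofinally-often in the $\zeta\cdot\eta$-middle or the $\omega^*$-tail. In each case choose an appropriate limit copy $s_*\in T$ — e.g.\ a $T$-position with infinitely many copies of $w$ on \emph{both} sides of it when the $s_i$ accumulate from both sides, or a copy of $w$ with an $\omega$-sequence of copies before it when the $s_i$ march off to the right — and inside that fixed copy apply the induction hypothesis to the $w$-coordinates $p_i$, then glue via Proposition~\ref{prop:concatenation} together with Lemma~\ref{lemma:exponentiation} (which lets us compare a finite power $w^m$ with $w^\theLO$ once $m$ is large enough in terms of $n$) to absorb the copies of $w$ on either side of $s_*$ into the $\theLO$-powers of $u$.

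The main obstacle is the $\theLO$-power case, specifically the book-keeping of \emph{where} in the order type $\theLO = \omega + \zeta\cdot\eta + \omega^*$ the $T$-coordinates of the sequence accumulate, and checking in each subcase that the chosen limit copy $s_*$ genuinely has, on each side, enough copies of $w$ so that Lemma~\ref{lemma:exponentiation}'s threshold $2^{n+1}-1$ is met for every $n$ simultaneously — this is exactly why $\Nat$-type and $\Int_{<0}$-type blocks are glued on at the ends and why the middle block is $\zeta\cdot\eta$ rather than just $\zeta$: a point of $\eta$ (the rationals) always has a copy of $\zeta$ (hence an $\omega$ and an $\omega^*$ worth of $w$'s, i.e.\ arbitrarily many) strictly between it and any neighbouring accumulation, so no matter from which direction the $s_i$ approach $s_*$, both the ``left context'' and ``right context'' of the chosen copy remain infinite. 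Once that combinatorial case analysis is set up, each individual verification is a routine application of Proposition~\ref{prop:concatenation} and Lemma~\ref{lemma:exponentiation}, and the diagonalization over $n$ is handled by taking, for the final $q$, the limit copy that works for the subsequence realizing the coherent class — intersecting the (nested) ``good'' subsequences over all $n$ via a diagonal subsequence before applying the structural induction, so that a single $q$ serves all $n$ at once.
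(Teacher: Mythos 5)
Your overall strategy is the same as the paper's: induct on the $\theLO$-rational construction of $u$, handle concatenations with Proposition~\ref{prop:concatenation}, and in the $\theLO$-power case classify the outer coordinates of the $q_i$ by whether their left and right contexts in $\theLO=\omega+\zeta\cdot\eta+\omega^*$ are finite or again of type $\theLO$, absorbing growing finite contexts into $\theLO$-powers via Lemma~\ref{lemma:exponentiation}. The preliminary type-counting/K\"onig's-lemma discussion is harmless but does no real work, as you concede; the structural induction already produces a single $q$ valid for all $n$ at once, because the chosen copy has context of type $\theLO$ on whichever side the $q_i$ escape, so for each $n$ one can find arbitrarily large $i$ whose finite context exceeds the threshold $2^{n+1}-1$.

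There is one concrete gap: in the $\theLO$-power case you never address what happens when the valuation $\alpha$ itself maps variables into $u=w^\theLO$, i.e.\ when $\Varset\neq\emptyset$. Your gluing step absorbs the copies of $w$ on either side of the chosen limit copy into $\theLO$-powers using Lemma~\ref{lemma:exponentiation}, but that lemma is stated (and only holds) for plain words $u\Dwins\Fragm v$, not for valuations carrying free variables; moreover Proposition~\ref{prop:concatenation} requires the valuation to split across the factors with a winning strategy on each factor separately. If some $\alpha(y)$ lands inside a block you are trying to absorb, the step fails as written. The paper handles this with a separate inductive step: enumerate the finitely many outer coordinates $\tilde p_1<\dots<\tilde p_k$ hit by $\alpha$, refactor $u=w^{\tau_0}\,w\,w^{\tau_1}\,w\dotsm w\,w^{\tau_k}$ with each $\tau_\ell$ finite or equal to $\theLO$, so that $\alpha$ maps only into the singled-out copies of $w$ and never into a $\theLO$-power, and then run the $\Varset=\emptyset$ power case together with the concatenation case on this alternative $\theLO$-rational construction. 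You need this (or an equivalent device), since the inductive hypothesis must be established for arbitrary $\Varset$-valuations, not just for $\Varset=\emptyset$.
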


\begin{proof}
  To simplify notation, we call a position $q$ with the property above
  a \emph{$\enc{u,\alpha}$-limit point} of the sequence $(q_i)_{i \in
    \Nat}$ (w.r.t. to $\Fragm$ and $x$). Using this terminology, we
  have to show that every sequence $(q_i)_{i \in \Nat} \in
  \dom{u}^\Nat$ possesses a $\enc{u,\alpha}$-limit point. Since
  neither $\repl\alpha x{q_i}$ nor $\repl\alpha x q$ would depend on
  $\alpha(x)$, we may simply assume that $x \not\in \Varset$. We
  proceed by induction on the $\theLO$-rational construction of $u$.

  \resetcases \basecase{$u$ is finite} Since $\dom{u}$ is finite,
  there exists a $q \in \dom{u}$ such that $q = q_i$ for infinitely
  many $i \in \Nat$. Thus, $q$ is a $\enc{u,\alpha}$-limit point of
  $(q_i)_{i \in \Nat}$.

  \lpara{Inductive step 1}{$u = v_1 v_2$ with $\theLO$-rational
    words $v_1,v_2$} The valuation $\enc{u,\alpha}$ splits into
  valuations $\enc{v_1,\beta_1}$ and $\enc{v_2,\beta_2}$  such that
  $\alpha = \beta_1 \cup \beta_2$. 
  For either $\ell = 1$ or $\ell = 2$ we have $q_i \in \dom{v_\ell}$ for infinitely many $i \in
  \Nat$. Let $I$ be the set of these~$i$. By the induction hypothesis,
  there is a $\enc{v_\ell,\beta_\ell}$-limit point $q \in
  \dom{v_\ell}$ of the subsequence $(q_i)_{i \in
    I}$. Proposition~\ref{prop:concatenation} implies that $q$ is also
  a $\enc{u,\alpha}$-limit point of $(q_i)_{i \in \Nat}$.

  \medskip
  \noindent
  We split the inductive step for $\theLO$-powers in two parts, one for
  $\Varset = \emptyset$ and another for $\Varset \not= \emptyset$.

  \lpara{Inductive step 2}{$u = v^\theLO$ with a $\theLO$-rational $v$
    and $\Varset = \emptyset$} Let $(P,\leq_P)$ be a linear ordering
  of isomorphism type $\theLO$ such that $\dom{u} = \dom{v} \times
  P$. For each $i \in \Nat$ we write $q_i = (s_i,p_i)$. For every $p
  \in P$ let $\cev\tau_p$ and $\vec\tau_p$ be the order types of the
  suborders of $(P,\leq_P)$ induced by the open intervals
  $(-\infty,p)$ and $(p,+\infty)$, respectively. Then $\theLO =
  \cev\tau_p + 1 + \vec\tau_p$. Due to the nature of $\theLO$, each of
  $\cev\tau_p$ and $\vec\tau_p$ is either finite or equals
  $\theLO$. However, the case that $\cev\tau_p$ and $\vec\tau_p$ both
  are finite at the same time cannot occur. Accordingly, we
  distinguish three cases:

  \resetcases 
  \case{$\cev\tau_{p_i} = \vec\tau_{p_i} = \theLO$ for
    infinitely many $i \in \Nat$} Let $I$ be the set of these $i$. By
  the induction hypothesis, there exists a $\enc{v,\emptyset}$-limit
  point $s \in \dom{v}$ of the subsequence $(s_i)_{i \in I}$. We pick
  some $j \in I$. Proposition~\ref{prop:concatenation} reveals that $q
  = (s,p_j)$ is a $\enc{u,\alpha}$-limit point of $(q_i)_{i \in
    \Nat}$.

%Let $n \in \Nat$. There are arbitrarily large $i \in I_1$ with $\enc{v,\repl\beta x{s_i}} \Dwins{\Fragm_n} \enc{v,\repl\beta x s}$. By Eq.~\eqref{eq:order_split} above and Proposition~\ref{prop:concatenation}, we have $\enc{u,\repl\alpha x{q_i}} \Dwins{\Fragm_n} \enc{u,\repl\alpha x q}$ for every such $i$.% Thus, $q$ is a $\enc{u,\alpha}$-limit point of $(q_i)_{i \in \Nat}$.

  \case{$\cev\tau_{p_i}$ is finite and $\vec\tau_{p_i} = \theLO$ for
    infinitely many $i \in \Nat$} Let $I$ be the set of these~$i$. If
  there is an order type which occurs infinitely often among the
  $\cev\tau_{p_i}$ with $i \in I$, the same argumentation as in Case~1
  applies. Henceforth, we assume that such an order type does not
  exist. By the induction hypothesis, the subsequence $(s_i)_{i \in
    I}$ possesses a $\enc{v,\emptyset}$-limit point $s \in
  \dom{v}$. Let $p \in P$ be arbitrary with $\cev\tau_p = \vec\tau_p =
  \theLO$. We show that $q=(s,p)$ is a $\enc{u,\alpha}$-limit point of
  $(q_i)_{i \in \Nat}$.

  Let $n \in \Nat$. Due to the choice of $I$ and $s$, there are
  arbitrarily large $i \in I$ such that $\cev\tau_{p_i}$ is of size at
  least $2^{n+1}-1$ and $\enc{v,\repl\emptyset x{s_i}} \Dwins{\Fragm_n}
  \enc{v,\repl\emptyset x s}$. Lemma~\ref{lemma:exponentiation}
  then implies $v^{\smallcev\tau_{p_i}} \Dwins{\Fragm_n}
  v^\theLO$. Since also $v^{\smallvec\tau_{p_i}} \Dwins{\Fragm_n}
  v^\theLO$, Proposition~\ref{prop:concatenation} yields
  $\enc{u,\repl\emptyset x{q_i}} \Dwins{\Fragm_n}
  \enc{u,\repl\emptyset x q}$.

  \case{$\cev\tau_{p_i} = \theLO$ and $\vec\tau_{p_i}$ is finite for
    infinitely many $i \in \Nat$} Symmetric to Case~2.

  \lpara{Inductive step 3}{$u = v^\theLO$ with a $\theLO$-rational $v$
    and $\Varset \not= \emptyset$} Let $(P,\leq_P)$ be as
  above. Recall that $\Varset$ is supposed to be finite. Let $\tilde
  p_1 <_P \dotsb <_P \tilde p_k$ be an enumeration of all positions $p \in P$
  for which there exists a variable $y \in \Varset$ with $\alpha(y) \in
  \dom{v} \times \{p\}$. We consider the open intervals $P_0 =
  (-\infty,\tilde p_1)$, $P_\ell = (\tilde p_\ell,\tilde p_{\ell+1})$
  for $\ell \in [1,k-1]$, and $P_k = (\tilde p_k,+\infty)$ in
  $(P,\leq_P)$. For $\ell \in [0,k]$ we let $\tau_\ell$ be the order
  type of the suborder induced by $P_\ell$. Then $\theLO = \tau_0 + 1
  + \tau_1 + 1 + \dotsb + 1 + \tau_k$ and hence $u = v^{\tau_0} v
  v^{\tau_1} v \dotsm v v^{\tau_k}$. Due to the nature of $\theLO$,
  each $\tau_\ell$ is either finite or equals $\theLO$. Since for
  every finite $\tau_\ell$ the word $v^{\tau_\ell}$ is the
  concatenation of $\tau_\ell$ copies of $v$, the factorization of $u$
  above is an alternative $\theLO$-rational construction of $u$. This
  construction has the additional property that $\alpha$ does not map
  into the $\theLO$-powers $v^\theLO$. Thus, the induction hypothesis
  and the inductive steps~1 and~2 above yield the claim.
\end{proof}

\noindent
Now, we are prepared to prove the remaining implication of
Theorem~\ref{thm:EF_unbounded}.

\begin{proof}[Proof of Theorem~\ref{thm:EF_unbounded}, ``\refenum{1} $\bm\Rightarrow$ \refenum{2}''.]
  We show that Duplicator can maintain the invariant of staying in
  configurations which are \emph{good} for her. A configuration
  $(\FragmG,\enc{u,\alpha},\enc{v,\beta})$ of the $\Fragm$-game on
  $(u,v)$ is considered to be \emph{good} for Duplicator if
  $\enc{u,\alpha} \Dwins{\FragmG_n} \enc{v,\beta}$ for every $n \in
  \Nat$. Statement~\refenum{1} and Theorem~\ref{thm:EF_bounded} imply
  that the initial configuration $(\Fragm,u,v)$ is good. Moreover, good
  configurations do not meet Spoiler's winning condition as they
  particularly satisfy $\enc{u,\alpha} \Dwins{\FragmG_0}
  \enc{v,\beta}$. Consequently, it suffices to provide a strategy for
  Duplicator which never leaves the set of good configurations since such
  a strategy is a winning strategy.

  Suppose Spoiler chooses the quantifier $\quant x$
  and the quest $q$ in a good configuration
  $(\FragmG,\enc{u,\alpha},\enc{v,\beta})$. We only demonstrate the
  case $\quant = \exists$, where $q \in \dom{u}$. For every $n \in
  \Nat$ we have $\enc{u,\alpha} \Dwins{\FragmG_{n+1}} \enc{v,\beta}$
  and hence there exists $r_n \in \dom{v}$ such that
  $\enc{u,\repl\alpha x q} \Dwins{\dis\exists x{\FragmG_{n+1}}}
  \enc{v,\repl\beta x{r_n}}$. Since $\dis\exists x{\FragmG_{n+1}} =
  (\dis\exists x\FragmG)_n$, this is the same as $\enc{u,\repl\alpha x
    q} \Dwins{(\dis\exists x\FragmG)_n} \enc{v,\repl\beta
    x{r_n}}$. Due to Lemma~\ref{lemma:limit_points}, there exists 
  $r \in \dom{v}$ such that, for every $n \in \Nat$, there are
  infinitely many $i \in \Nat$ with $\enc{v,\repl\alpha x{r_i}}
  \Dwins{(\dis\exists x\FragmG)_n} \enc{v,\repl\alpha x r}$. We show
  that the configuration $\move S\exists x q r = (\dis\exists
  x\FragmG,\enc{u,\repl\alpha x q},\enc{v,\repl\beta x r})$ is good
  again.

  Let $n \in \Nat$. Due to the choice of $r$, there is an $i \geq n$
  with $\enc{v,\repl\alpha x{r_i}} \Dwins{(\dis\exists x\FragmG)_n}
  \enc{v,\repl\alpha x r}$. The position $r_i$ was chosen such that
  $\enc{u,\repl\alpha x q} \Dwins{(\dis\exists x\FragmG)_i}
  \enc{v,\repl\beta x{r_i}}$. Since $n \leq i$, this implies
  $\enc{u,\repl\alpha x q} \Dwins{(\dis\exists x\FragmG)_n}
  \enc{v,\repl\beta x{r_i}}$ and in turn $\enc{u,\repl\alpha x q}
  \Dwins{(\dis\exists x\FragmG)_n} \enc{v,\repl\beta x r}$.
\end{proof}

\section{Ehrenfeucht-Fra\"iss\'e Games on Identities}
\label{sec:EF_terms}

Identities play an important role in the study of the expressive power of first-order fragments.
A recurring problem is to show that a certain identity
of $\varLO$-terms holds in the syntactic monoid/semigroup of every
language definable in the fragment under
consideration. Theorems~\ref{thm:EF_termsM} and~\ref{thm:EF_termsS}
below can remarkably simplify this task, as demonstrated at the end of
this section. In fact, the two theorems are just slight variations of
one another and the sole reason for having two theorems is that the
$\suc$-predicate does not play well with syntactic monoids but only
with syntactic semigroups.

\begin{theorem}
\label{thm:EF_termsM}
Let $\Fragm$ be an order-stable fragment not containing the predicates
$\suc$, $\min$, $\max$ and $\mty$. For all $\varLO$-terms $s,t \in
\terms$ the following are equivalent:
\begin{enumerate}
\item The identity $s=t$ holds in the syntactic monoid of every
  language definable in $\Fragm$.
\item Duplicator has winning strategies in the $\Fragm$-games on
  $(\evalt s\theLO,\evalt t\theLO)$ and $(\evalt t\theLO,\evalt
  s\theLO)$.
\end{enumerate}
\end{theorem}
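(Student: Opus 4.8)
The plan is to reduce Theorem~\ref{thm:EF_termsM} to Theorem~\ref{thm:EF_unbounded} by choosing the right idempotency exponent and the right logical fragment. First I would observe that the two words $\evalt s\theLO$ and $\evalt t\theLO$ are $\theLO$-rational by Definition~\ref{def:pi-rationality}, so Theorem~\ref{thm:EF_unbounded} applies to any $\suc$-stable fragment and these words. Since $\Fragm$ does not contain $\suc$, $\min$, $\max$ or $\mty$, it is vacuously $\suc$-stable, so statement~\refenum{2} of the present theorem is, via Theorem~\ref{thm:EF_unbounded}, equivalent to: for all sentences $\varphi \in \Fragm$, $\evalt s\theLO \models \varphi$ iff $\evalt t\theLO \models \varphi$ (the two game directions together give the biconditional, and here one can use order-stability plus the absence of the order-sensitive predicates to see that $\Fragm$ behaves symmetrically enough, or simply carry both directions). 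So the real content is the equivalence between the algebraic statement~\refenum{1} and this logical equivalence on the generalized words $\evalt s\theLO$ and $\evalt t\theLO$.

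The key step is to connect ``holds in the syntactic monoid of every language definable in $\Fragm$'' with satisfaction on the $\theLO$-power words. For the direction \refenum{1} $\Rightarrow$ \refenum{2}: given $\varphi \in \Fragm$ over alphabet $A$, let $L = L_A(\varphi)$ and let $k$ be an idempotency exponent of the finite syntactic monoid $M_L$. Consider the $\varLO$-algebra morphism $h\colon \terms \to A^*$ induced by some fixed injective assignment of letters of $A$ to the variables of $s,t$ (using the $k$-power algebra on $A^*$); by the Preliminaries, $s=t$ holds in $M_L$ iff $h(s) \equiv_L h(t)$, and since this holds for all such $h$ we get in particular $h(s) \equiv_L h(t)$, i.e. $h(s) \in L \Leftrightarrow h(t) \in L$, i.e. $h(s) \models \varphi \Leftrightarrow h(t) \models \varphi$. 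The point is that $h(s)$ is exactly $\evalt s\tau$ for the finite order type $\tau = k$, so one needs a lemma saying that for a sentence $\varphi \in \Fragm$ — a fragment without $\suc$, $\min$, $\max$, $\mty$, and order-stable — satisfaction of $\varphi$ on $\evalt s k$ agrees with satisfaction on $\evalt s\theLO$ once $k$ is large enough relative to $\qd(\varphi)$. This is where Lemma~\ref{lemma:exponentiation} and Proposition~\ref{prop:concatenation} come in: since $\varphi$ cannot see $\suc$ or the endpoints, the game on $\evalt s k$ versus $\evalt s\theLO$ (and likewise with $t$) is won by Duplicator once $k \geq 2^{\qd(\varphi)+1}-1$, because each $\theLO$-power block $v^\theLO$ in $\evalt s\theLO$ is indistinguishable from the $k$-fold concatenation $v^k$ appearing in $\evalt s k$ by exactly the same blocking argument used in Lemma~\ref{lemma:limit_points}. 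Conversely, for \refenum{2} $\Rightarrow$ \refenum{1}: given a language $L$ definable in $\Fragm$ by some $\varphi$, and given an arbitrary $\varLO$-algebra morphism $g$ into the idempotency algebra on $M_L$, lift $g$ to a morphism $h$ into the $k$-power algebra on $A^*$ (for a suitable idempotency exponent $k$ of $M_L$) by choosing, for each relevant variable $a$, a word $w_a \in A^*$ with $\equiv_L$-class $g(a)$; then $g(s) = g(t)$ will follow once we know $h(s) \equiv_L h(t)$, which in turn follows from $\evalt s\theLO \models \psi \Leftrightarrow \evalt t\theLO \models \psi$ for all $\psi \in \Fragm$ together with the same block-replacement lemma, applied now to $\varphi$ conjugated by the contexts $x(-)y$ witnessing $\equiv_L$ (this is where order-stability and the closure properties of fragments under contexts are used to stay inside $\Fragm$).

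I expect the main obstacle to be the block-replacement lemma: making precise and proving that, for an order-stable fragment $\Fragm$ without $\suc$, $\min$, $\max$, $\mty$ and for $k$ large relative to the quantifier depth, Duplicator wins the $\Fragm$-game on $(\evalt s k, \evalt s\theLO)$ and on $(\evalt s\theLO, \evalt s k)$ — equivalently that $\evalt s k$ and $\evalt s\theLO$ satisfy the same sentences of $\Fragm$ of bounded quantifier depth. The subtlety is that one must handle the substitution $h$ built from a fixed letter-to-variable assignment, track which alphabet letters are shared between different variable-occurrences when $s$ and $t$ share variables, and verify that replacing each $v^\theLO$-factor by $v^k$ is compatible with concatenation (Proposition~\ref{prop:concatenation}) and with the order types appearing inside nested $\varLO$-powers. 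Once this lemma is in place, the rest is bookkeeping: peel off Theorem~\ref{thm:EF_unbounded} to trade games for logical equivalence, then use the Preliminaries' characterization of ``$s=t$ holds in $M_L$'' in terms of the $k$-power algebra on $A^*$, and push $\varphi$ through the syntactic-congruence contexts $x(-)y$ using the fragment closure conditions of Definition~\ref{def:fragment}.
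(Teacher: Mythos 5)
Your proposal follows the paper's proof essentially step for step: Theorem~\ref{thm:EF_unbounded} converts statement~\refenum{2} into logical equivalence of $\evalt s\theLO$ and $\evalt t\theLO$ (the observation that the absence of $\suc$, $\min$, $\max$, $\mty$ makes $\Fragm$ vacuously $\suc$-stable is exactly the paper's implicit justification), and the ``block-replacement lemma'' you single out as the main obstacle is precisely Corollary~\ref{cor:term_interpretations}, proved in the paper from Lemma~\ref{lemma:exponentiation} and Proposition~\ref{prop:concatenation} by induction on the term. The one place where your account of ``\refenum{2} $\Rightarrow$ \refenum{1}'' misidentifies the mechanism is the final passage from $\evalt s k \Indist{\Fragm_n} \evalt t k$ to $g(s) \equiv_L g(t)$: this cannot be done by ``pushing $\varphi$ through the contexts $x(-)y$ using the fragment closure conditions of Definition~\ref{def:fragment}'', since those conditions concern formula contexts $\mu(\circ)$ and give no closure under word residuals $x^{-1}Ly^{-1}$. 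The paper instead (a) transfers the game relation through the letter-to-word substitution $h$ via Corollary~\ref{cor:substitution}, which rests on Proposition~\ref{prop:inverse_morphisms} and is the actual point where order-stability is needed (because $h$ maps letters to words of arbitrary length), and then (b) handles the word contexts by Proposition~\ref{prop:concatenation}, concatenating the trivial games on $(u,u)$ and $(v,v)$ with the game on $(g(s),g(t))$ to get $u\hspace{1pt}g(s)\hspace{1pt}v \Indist{\Fragm_n} u\hspace{1pt}g(t)\hspace{1pt}v$. You have both tools in hand elsewhere in the proposal, so this is a misattribution rather than a missing idea, but as literally stated that step would not go through.
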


\pagebreak

\begin{theorem}
\label{thm:EF_termsS}
Let $\Fragm$ be a $\suc$-stable and order-stable fragment. For all
$\varLO$-terms $s,t \in \terms$ the following are equivalent:
\begin{enumerate}
\item The identity $s=t$ holds in the syntactic semigroup of every
  language definable in $\Fragm$ over non-empty words.
\item Duplicator has winning strategies in the $\Fragm$-games on
  $(\evalt s\theLO,\evalt t\theLO)$ and $(\evalt t\theLO,\evalt
  s\theLO)$.\qed
\end{enumerate}
\end{theorem}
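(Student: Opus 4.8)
The plan is to derive Theorem~\ref{thm:EF_termsS} from Theorem~\ref{thm:EF_unbounded} by the same route as Theorem~\ref{thm:EF_termsM}, so I will first set up the dictionary between "the identity $s=t$ holds in every syntactic semigroup of a $\Fragm$-definable language over non-empty words" and a statement about the generalized words $\evalt s\theLO$ and $\evalt t\theLO$. Recall from the Preliminaries that, for a regular $L\subseteq A^*$, the quotient map $A^*\to M_L$ is a $\varLO$-algebra morphism from the $k$-power algebra on $A^*$ onto the idempotency algebra on $M_L$, for any idempotency exponent $k$; and that $s=t$ holds in $M_L$ iff every $\varLO$-algebra morphism $A^*\to$ ($k$-power algebra) sends $s,t$ into the same $\equiv_L$-class. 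For the semigroup version one restricts $A^*$ to $A^+$. The first step is therefore to reformulate statement~\refenum{1}: $s=t$ holds in the syntactic semigroup of every $\Fragm$-definable (over non-empty words) language iff for every finite $A\subseteq\Lambda$, every morphism assigning to each letter a non-empty word, and every large enough $k$, the two resulting finite words over $A^+$ agree on all sentences of $\Fragm$ — because agreement on all $\Fragm$-sentences is exactly "not $\Fragm$-separable", which by the syntactic-semigroup characterization is the statement that the two words are $\equiv_L$-equivalent for every $\Fragm$-definable $L$.

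Next I would connect the finite $k$-power instances to the $\theLO$-power instances. The key point is that $\evalt s\theLO$ is $\theLO$-rational by Definition~\ref{def:pi-rationality}, so Theorem~\ref{thm:EF_unbounded} (using $\suc$-stability of $\Fragm$) tells us that Duplicator wins the $\Fragm$-game on $(\evalt s\theLO,\evalt t\theLO)$ iff $\evalt s\theLO\models\varphi\Rightarrow\evalt t\theLO\models\varphi$ for all $\Fragm$-sentences $\varphi$, and symmetrically for the swapped pair; together these give $\evalt s\theLO\Indist\Fragm\evalt t\theLO$ at the level of definable languages. So \refenum{2} is equivalent to: $\evalt s\theLO$ and $\evalt t\theLO$ satisfy the same sentences of $\Fragm$. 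It remains to show this is equivalent to the reformulated \refenum{1}. For the direction \refenum{2}$\Rightarrow$\refenum{1}, given a finite $A$, a letter-to-nonempty-word assignment, and a sentence $\varphi\in\Fragm$ defining $L$ over $A^+$, I would use Lemma~\ref{lemma:exponentiation} (or directly Theorem~\ref{thm:EF_bounded} together with Proposition~\ref{prop:concatenation} and Lemma~\ref{lemma:exponentiation}): for $\varphi$ of quantifier depth $n$, once $k\ge 2^{n+1}-1$, the finite word obtained from $s$ by replacing each $\varLO$ by $k$-power is $\Fragm_n$-indistinguishable from $\evalt s\theLO$, because a finite power $k$ and the $\theLO$-power are $\Dwins{\Fragm_n}$-related by Lemma~\ref{lemma:exponentiation}, and concatenation is compatible by Proposition~\ref{prop:concatenation}; chaining these through the term structure of $s$ (induction on $s$) gives that the finite $k$-instance of $s$ and $\evalt s\theLO$ agree on all $\Fragm_n$-sentences. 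Since \refenum{2} gives agreement of $\evalt s\theLO$ and $\evalt t\theLO$ on $\varphi$, and since $k$-instances of $s$ and $t$ are non-empty words (here $\suc$-stability and the allowance of $\suc,\min,\max,\mty$ is harmless on non-empty words, and one only needs the semigroup — hence non-empty-word — reading), the $k$-instances agree on $\varphi$, which is exactly $s\equiv_L t$. For \refenum{1}$\Rightarrow$\refenum{2}, suppose some $\Fragm$-sentence $\varphi$ separates $\evalt s\theLO$ from $\evalt t\theLO$; let $A=\alphabet(st)$ and let $L=L_A(\varphi')$ for a suitable $\varphi'$, and observe that by the same $\Fragm_n$-approximation the $k$-instances of $s$ and $t$ are then separated by $\varphi$ for large $k$, so $s=t$ fails in $M_L$ (the syntactic semigroup of $L$), contradicting \refenum{1}; the swapped pair is handled symmetrically since \refenum{1} is symmetric in $s,t$.

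The main obstacle I expect is the careful bookkeeping in the equivalence between "agreement on all $\Fragm$-sentences" and "$\equiv_L$-equivalence for all $\Fragm$-definable $L$" in the \emph{semigroup} setting: one must be precise that $\Fragm$-definability over non-empty words corresponds to evaluating formulae on words in $A^+$ only, that the empty word and the predicates $\mty,\min,\max$ do not cause a mismatch (the whole reason Theorem~\ref{thm:EF_termsS} uses semigroups and allows $\suc$ is this delicacy), and that the $k$-instances of $s$ and $t$ are genuinely non-empty (which holds once we are dealing with a semigroup, i.e. each generator maps to a non-empty word, and $\varLO$-powers of non-empty words are non-empty). A second, more routine obstacle is the induction on $\varLO$-terms showing the $k$-instance is $\Fragm_n$-indistinguishable from the $\theLO$-instance uniformly for $k\ge 2^{\qd(\varphi)+1}-1$; this is exactly what Lemma~\ref{lemma:exponentiation} and Proposition~\ref{prop:concatenation} were set up to make painless, so I would spell it out only at the level of the inductive invariant "for every subterm $r$ and every $n$, $\evalt r\theLO\Indist{\Fragm_n}$ ($k$-instance of $r$) whenever $k\ge 2^{n+1}-1$" and then invoke $\suc$-stability and order-stability of $\Fragm$ (hence of each $\Fragm_n$) to pass between $\Indist{\Fragm_n}$ and the winning-strategy/sentence formulation via Theorem~\ref{thm:EF_bounded}.

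Everything else is a transcription of the argument for Theorem~\ref{thm:EF_termsM} with "monoid" replaced by "semigroup" and "$A^*$" by "$A^+$", together with the observation that dropping the restriction "not containing $\suc,\min,\max,\mty$" is exactly compensated by strengthening "order-stable" to "$\suc$-stable and order-stable" and weakening the algebraic side from monoids to semigroups, so that Theorem~\ref{thm:EF_unbounded} (which needs $\suc$-stability) can be applied in place of the bounded-depth argument used in the monoid case.
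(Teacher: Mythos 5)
Your proposal is correct and follows essentially the same route as the paper, which itself only writes out the proof of Theorem~\ref{thm:EF_termsM} and declares Theorem~\ref{thm:EF_termsS} to be the analogous transcription with monoids replaced by semigroups and $A^*$ by $A^+$. The one step you leave implicit --- transferring $\Indist{\Fragm_n}$ through the letter-to-nonempty-word substitution --- is exactly Corollary~\ref{cor:substitution}, whose hypothesis $\varepsilon\notin h(A)$ holds for the reason you identify, and the passage from agreement on $\Fragm$-sentences to genuine $\equiv_L$-equivalence requires closing under contexts via Proposition~\ref{prop:concatenation}, just as in the monoid proof.
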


\noindent
The main ingredients of the proofs of both theorems are
Theorem~\ref{thm:EF_unbounded} and
\mbox{\cite[Proposition~2]{KufleitnerL12icalp:short}} which is
restated as Proposition~\ref{prop:inverse_morphisms} below.
% A monoid morphism $h\colon A^* \to B^*$ is \emph{length-reducing}
% respectively \emph{non-erasing} if $h(a) \in B\cup\{\varepsilon\}$
% respectively $h(a) \not= \varepsilon$ for each $a \in A$.

\begin{proposition}
\label{prop:inverse_morphisms}
Let $\Fragm$ be a fragment, $A,B\subseteq\Lambda$ finite alphabets and
$h$ a monoid morphism from $A^*$ into $B^*$. Suppose the following:
\begin{enumerate}
\item If $\Fragm$ contains the predicate $\leq$ or $<$, then $\Fragm$
  is order-stable or $h(A) \subseteq B\cup\{\varepsilon\}$.
\item If $\Fragm$ contains the predicate $\suc$, $\min$, $\max$ or
  $\mty$, then $\varepsilon \not\in h(A)$.
\end{enumerate}
Then  $h^{-1}(L)$ is $\Fragm$-definable whenever $L \subseteq A^*$ is $\Fragm$-definable.
\end{proposition}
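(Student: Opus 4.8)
The plan is to prove the contrapositive-free statement directly: assuming $L \subseteq A^*$ is $\Fragm$-definable via a sentence $\varphi \in \Fragm$ with $L = L_A(\varphi)$, I would construct a sentence $\psi \in \Fragm$ such that $h^{-1}(L) = L_A(\psi)$. The natural candidate is the \emph{syntactic relativisation} of $\varphi$ along $h$. Concretely, fix for each letter $a \in A$ the finite word $h(a) \in B^*$; write $n_a = \lvert h(a) \rvert$. Given a word $w = a_1 \dotsm a_m \in A^*$, the word $h(w) = h(a_1) \dotsm h(a_m)$ is obtained by blowing up each position $i$ of $w$ into a consecutive block of $n_{a_i}$ positions. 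I would define a translation $\varphi \mapsto \varphi^h$ on formulas that, roughly speaking, replaces each first-order variable $x$ ranging over positions of $h(w)$ by a pair (position $i$ of $w$, offset $j \in [1, n_{a_i}]$ inside the block), encoded using the finitely many letters of $A$ and a bounded amount of extra first-order quantification within each block. Atomic formulas translate as follows: $\lambda(x) = b$ becomes a finite disjunction over those $(a,j)$ with the $j$-th letter of $h(a)$ equal to $b$; $x < y$ and $\suc(x,y)$ become formulas about the underlying positions of $w$ together with the offsets; $\min$, $\max$, $\mty$ translate to the corresponding predicates on $w$ (adjusted for empty images); and Boolean connectives and quantifiers are pushed through, a quantifier $\exists x$ over positions of $h(w)$ becoming $\exists i\, \bigvee_{a}\bigl(\lambda(i)=a \land \exists\text{-over-offset}\bigr)$.

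The key step is then to verify that this syntactic translation \emph{preserves membership in $\Fragm$}. This is exactly where the closure conditions in Definition~\ref{def:fragment} and the hypotheses of the proposition are used. The Boolean closure conditions (2), (3), (4) handle the propagation of $\lor$, $\land$, and double negations produced by the translation; condition (1) lets me replace $\circ$-contexts by literals $\lambda(x)=a$ (which is what the offset-decoding disjunctions amount to); conditions (5) and (6) let me strip off the auxiliary bounded quantifiers over offsets once they have done their work. The restriction on the order predicates in hypothesis~\refenum{1.} is needed because translating $x<y$ across the blow-up requires access to both $<$ \emph{and} $\leq$ on the underlying positions (strict order within a block but non-strict comparison of distinct blocks, or vice versa), so either $\Fragm$ must be order-stable, or the images $h(a)$ must be single letters or $\varepsilon$ so that no block-internal comparison arises. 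Hypothesis~\refenum{2.} ($\varepsilon \notin h(A)$) is needed so that every position of $w$ contributes at least one position to $h(w)$; otherwise $\min$, $\max$, $\suc$, and $\mty$ could not be expressed on $h(w)$ in terms of $w$, since a nonempty $w$ could map to a word whose "first real block" is hard to locate — emptiness of the image would have to be detected, which the fragment may be unable to do.

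Finally I would check correctness: by a straightforward structural induction on $\varphi$, for every $w \in A^*$ and every valuation we have $\enc{h(w), \alpha} \models \varphi$ if and only if $\enc{w, \alpha'} \models \varphi^h$, where $\alpha'$ encodes $\alpha$ via (position, offset) pairs; specialising to sentences gives $w \in h^{-1}(L) \iff h(w) \models \varphi \iff w \models \varphi^h$, so $h^{-1}(L) = L_A(\varphi^h)$ is $\Fragm$-definable. The main obstacle I anticipate is bookkeeping in the translation of $\suc$ and the order predicates: the immediate successor of a position may lie in the same block or be the first position of the next block, and $<$ must be split into "same block, smaller offset" versus "earlier block", which is precisely the case analysis that forces the order-stability/image-shape dichotomy. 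One must also take care that the finitely many disjuncts introduced are genuinely literals or already-admitted formulas so that Definition~\ref{def:fragment}(1)–(6) can be applied; since the paper only asserts this as a restatement of \cite[Proposition~2]{KufleitnerL12icalp:short}, I would lean on that reference for the remaining routine verifications.
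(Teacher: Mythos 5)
The paper does not prove this proposition itself: it imports it verbatim as \cite[Proposition~2]{KufleitnerL12icalp:short}, and your sketch is precisely the standard relativization argument underlying that result, with the roles of the two hypotheses correctly located (the translation of $x<y$ across the block decomposition mixes $<$ and $\leq$ unless every $h(a)$ has length at most one, and $\varepsilon\notin h(A)$ is what lets $\suc$, $\min$, $\max$ and $\mty$ survive the blow-up without extra quantifiers to skip empty blocks). One point to tighten: the ``existential quantification over offsets'' cannot be an actual quantifier in the target formula, since offsets are not positions of $w$; it must be realized as a finite syntactic disjunction over the pairs $(a,j)$ with $j\leq\lvert h(a)\rvert$, guarded by literals $\lambda(x')=a$, which is exactly what keeps the quantifier structure --- and hence membership in $\Fragm$ via conditions \refenum{1}--\refenum{6} of Definition~\ref{def:fragment} --- intact. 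With that reading, your proposal matches the intended proof.
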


\noindent
Applying this proposition to $\Fragm$-games yields that monoid
morphisms satisfying the two conditions above preserve the existence
of winning strategies for Duplicator.

\begin{corollary}
\label{cor:substitution}
Let $\Fragm$, $A$, $B$ and $h$ be as in
Proposition~\ref{prop:inverse_morphisms} satisfying
conditions~\refenum{1} and~\refenum{2}. Then $u \Dwins\Fragm v$
implies $h(u) \Dwins\Fragm h(v)$ for all $u,v \in A^*$.
\end{corollary}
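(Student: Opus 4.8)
The plan is to reduce the statement about winning strategies to the statement about definability in Proposition~\ref{prop:inverse_morphisms} via Theorem~\ref{thm:EF_bounded} (applied to all the fragments $\Fragm_n$), using the fact that ``Duplicator wins'' is, for fragments of bounded quantifier depth, equivalent to a purely logical implication that Proposition~\ref{prop:inverse_morphisms} manipulates directly. First I would observe that it suffices to prove $h(u)\Dwins{\Fragm_n} h(v)$ for every $n\in\Nat$: since $u,v\in A^*$ are finite words, $h(u),h(v)\in B^*$ are finite words too, so the $\Fragm$-game on $(h(u),h(v))$ lasts only finitely many rounds, and a standard compactness/König argument (or directly the observation that on finite structures a strategy winning every $\Fragm_n$-game induces one winning the $\Fragm$-game) shows that $h(u)\Dwins{\Fragm_n} h(v)$ for all $n$ implies $h(u)\Dwins{\Fragm} h(v)$. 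Actually the cleaner route is: by Theorem~\ref{thm:EF_bounded}, $u\Dwins\Fragm v$ gives $u\Dwins{\Fragm_n} v$ for all $n$, hence $u\models\varphi \Rightarrow v\models\varphi$ for all sentences $\varphi\in\Fragm$; and conversely $h(u)\models\psi\Rightarrow h(v)\models\psi$ for all sentences $\psi\in\Fragm$ yields $h(u)\Dwins{\Fragm}h(v)$ via Theorem~\ref{thm:EF_bounded} applied to each $\Fragm_n$ together with the boundedness of play length on finite words. So the whole corollary comes down to transporting the logical implication along $h$.

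Next I would carry out that transport. Suppose $u\models\varphi$ implies $v\models\varphi$ for all sentences $\varphi\in\Fragm$; I want $h(u)\models\psi$ implies $h(v)\models\psi$ for all sentences $\psi\in\Fragm$. Fix such a $\psi$ and let $A$ be a finite alphabet containing $\alphabet(u)\cup\alphabet(v)$ (so that $u,v\in A^*$) and $B$ a finite alphabet containing $\alphabet(h(u))\cup\alphabet(h(v))$ together with $h(A)$, chosen so that $h$ restricts to a monoid morphism $A^*\to B^*$ as in the hypothesis. Consider the language $L = L_B(\psi)\subseteq B^*$, which is $\Fragm$-definable by $\psi$; by Proposition~\ref{prop:inverse_morphisms} (whose conditions~\refenum{1} and~\refenum{2} are exactly conditions~\refenum{1} and~\refenum{2} imposed on $\Fragm,A,B,h$ here), $h^{-1}(L)\subseteq A^*$ is $\Fragm$-definable, say by a sentence $\varphi\in\Fragm$ with $L_A(\varphi)=h^{-1}(L)$. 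Now $h(u)\models\psi$ means $h(u)\in L$, i.e.\ $u\in h^{-1}(L)=L_A(\varphi)$, i.e.\ $u\models\varphi$; the assumed implication gives $v\models\varphi$, i.e.\ $v\in h^{-1}(L)$, i.e.\ $h(v)\in L$, i.e.\ $h(v)\models\psi$. This is exactly what was needed, and running it for every $\psi\in\Fragm$ finishes the argument.

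The main obstacle is a bookkeeping point rather than a conceptual one: one must check that the definability-level hypotheses are faithfully inherited. Concretely, Proposition~\ref{prop:inverse_morphisms} is stated for a monoid morphism between \emph{finite} alphabets $A^*\to B^*$, whereas $\Lambda$ and the ambient $h$ are over a countably infinite set, so I need to fix finite $A$ and $B$ containing all the letters actually occurring in $u,v,h(u),h(v)$ and closed appropriately under $h$, and verify that the restricted morphism still satisfies conditions~\refenum{1}--\refenum{2} — this is immediate since those conditions only constrain $h(A)$ and the presence of predicates in $\Fragm$, both of which are inherited. I also need $L_A(\varphi)$ and $L_B(\psi)$ to interact correctly with $h$ on \emph{all} of $A^*$, which is precisely the content of $h^{-1}(L_B(\psi)) = L_A(\varphi)$ guaranteed by the proposition. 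Once these finitary reductions are set up, no further work is required.
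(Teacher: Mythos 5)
Your core argument is exactly the paper's: fix a sentence $\psi\in\Fragm$, pull $L_B(\psi)$ back through $h$ via Proposition~\ref{prop:inverse_morphisms} to obtain $\varphi\in\Fragm$ with $L_A(\varphi)=h^{-1}\bigl(L_B(\psi)\bigr)$, and chase $h(u)\models\psi\Rightarrow u\models\varphi\Rightarrow v\models\varphi\Rightarrow h(v)\models\psi$; the finite-alphabet bookkeeping you discuss is handled correctly. The one flaw is in how you convert the resulting logical implication back into $h(u)\Dwins\Fragm h(v)$: your claim that the $\Fragm$-game on finite words ``lasts only finitely many rounds'' (and likewise the ``boundedness of play length on finite words'') is false when $\Fragm$ has unbounded quantifier depth --- Spoiler may choose quantifiers forever, e.g.\ in the $\FO$-game on $(a,a)$, since a round ends the game only when no reduct is nonempty or a player cannot pick a position. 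What actually closes this step is the limit argument you gesture at in your ``compactness/K\"onig'' aside: finite words are $\theLO$-rational, so Theorem~\ref{thm:EF_unbounded} (whose proof is precisely such a limit construction via Lemma~\ref{lemma:limit_points}) converts ``$h(u)\models\psi$ implies $h(v)\models\psi$ for all sentences $\psi\in\Fragm$'' directly into $h(u)\Dwins\Fragm h(v)$. That is exactly what the paper cites; with that substitution for your round-counting remark, your proof coincides with the paper's.
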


\begin{proof}
  Let $u,v \in A^*$ with $u \Dwins\Fragm v$. Since finite words are
  $\theLO$-rational and due to Theorem~\ref{thm:EF_unbounded}, it
  suffices to show that $h(u) \models \varphi$ implies $h(v) \models
  \varphi$ for all sentences $\varphi \in \Fragm$. Consider a sentence
  $\varphi \in \Fragm$. By Proposition~\ref{prop:inverse_morphisms},
  there is a sentence $\psi \in \Fragm$ such that $L_A(\psi) =
  h^{-1}\bigl(L_B(\varphi)\bigr)$. Altogether, $h(u) \models \varphi$
  implies $u \models \psi$ and since $u \Dwins\Fragm v$ this implies
  $v \models \psi$ which in turn implies $h(v) \models \varphi$.
\end{proof}

\noindent
The following corollary is an immediate consequence of
Proposition~\ref{prop:concatenation} and
Lemma~\ref{lemma:exponentiation}.

\begin{restatable}{corollary}{CorollaryTermInterpretations}
\label{cor:term_interpretations}
Let $\Fragm$ be a $\suc$-stable fragment whose quantifier depth is
bounded by $n \in \Nat$ and let $t \in \terms$ be a
$\varLO$-term. Then $\evalt t\theLO \Indist\Fragm \evalt t m$ for all
$m \geq 2^{n+1}-1$.  \qed
\end{restatable}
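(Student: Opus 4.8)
The statement to prove is that for a $\suc$-stable fragment $\Fragm$ of quantifier depth bounded by $n$, and any $\varLO$-term $t$, we have $\evalt t\theLO \Indist\Fragm \evalt t m$ whenever $m \geq 2^{n+1}-1$. By definition of $\Indist\Fragm$, this amounts to establishing both $\evalt t\theLO \Dwins\Fragm \evalt t m$ and $\evalt t m \Dwins\Fragm \evalt t\theLO$. The plan is to prove both directions simultaneously by structural induction on the $\varLO$-term $t$, exploiting the fact that $\evalt{\,\cdot\,}{\theLO}$ and $\evalt{\,\cdot\,}{m}$ are both $\varLO$-algebra morphisms, so that $\evalt{\,\cdot\,}{}$ commutes with concatenation and sends $s^\varLO$ to $\evalt s{}^{\theLO}$ respectively $\evalt s{}^m$.

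For the base case, $t = a$ is a single letter, and then $\evalt a\theLO = \evalt a m$ is literally the same one-position word, so $\Dwins\Fragm$ holds in both directions by reflexivity. For the concatenation step, $t = t_1 t_2$ gives $\evalt{t}{\theLO} = \evalt{t_1}{\theLO}\,\evalt{t_2}{\theLO}$ and likewise for $m$; the induction hypothesis gives $\evalt{t_i}{\theLO} \Dwins\Fragm \evalt{t_i}{m}$ and $\evalt{t_i}{m} \Dwins\Fragm \evalt{t_i}{\theLO}$ for $i \in \{1,2\}$, and then Proposition~\ref{prop:concatenation} (with $k=2$ and empty valuations, whose variable sets are trivially disjoint) yields the concatenation of the winning strategies in each direction. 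For the $\varLO$-power step, $t = s^\varLO$, we have $\evalt{t}{\theLO} = \evalt{s}{\theLO}{}^{\theLO}$ and $\evalt{t}{m} = \evalt{s}{m}{}^m$. Set $u = \evalt s\theLO$ and $v = \evalt s m$. The induction hypothesis gives $u \Dwins\Fragm v$ and $v \Dwins\Fragm u$. Since $m \geq 2^{n+1}-1$, Lemma~\ref{lemma:exponentiation} applies in both directions: from $u \Dwins\Fragm v$ we obtain $u^\theLO \Dwins\Fragm v^m$, and from $v \Dwins\Fragm u$ we obtain $v^m \Dwins\Fragm u^\theLO$, which together give $\evalt{t}{\theLO} \Indist\Fragm \evalt{t}{m}$.

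The only subtlety—and hence the main point requiring care rather than a genuine obstacle—is the bookkeeping around the hypotheses of Lemma~\ref{lemma:exponentiation}: it is stated for a fragment of quantifier depth bounded by $n$ and concludes both $u^m \Dwins\Fragm v^\theLO$ and $u^\theLO \Dwins\Fragm v^m$ for $m \geq 2^{n+1}-1$, so we must simply read off the two halves we need and check that the $\suc$-stability of $\Fragm$ is in force throughout the induction (which it is, since it is a hypothesis on the fixed fragment and does not change as we recurse on $t$). It is also worth noting that $u^m$ really is the $m$-fold concatenation of $u$ with itself, so one could alternatively handle the $\varLO$-power case by iterating Proposition~\ref{prop:concatenation} $m$ times and then invoking Lemma~\ref{lemma:exponentiation} only to pass between $u^m$ and $u^\theLO$ on one side; but invoking Lemma~\ref{lemma:exponentiation} directly in both directions is cleaner and is exactly what the lemma was designed for. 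No other cases arise, since every $\varLO$-term is built from letters by concatenation and the $\varLO$-power.
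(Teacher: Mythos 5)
Your proposal is correct and follows essentially the same route as the paper: structural induction on $t$, with the base case by reflexivity, the concatenation step by Proposition~\ref{prop:concatenation}, and the $\varLO$-power step by two applications of Lemma~\ref{lemma:exponentiation} (one for each direction of $\Indist\Fragm$). The bookkeeping you flag about which half of Lemma~\ref{lemma:exponentiation} to use in each direction is handled correctly.
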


\noindent
The previous results allow us to show Theorems~\ref{thm:EF_termsM}
and~\ref{thm:EF_termsS}. However, since their proofs are as similar as
their statements, we only demonstrate the first one.

\begin{proof}[Proof of Theorem~\ref{thm:EF_termsM}]
  Let $A \subseteq \Lambda$ be the finite set containing all $u \in
  \Lambda$ appearing in $s$ or~$t$. We show both implications
  separately.

\para{``\refenum{1} $\bm\Rightarrow$ \refenum{2}''.}
By Theorem~\ref{thm:EF_unbounded}, it suffices to show for every
sentence $\varphi \in \Fragm$ that $\evalt s\theLO \models \varphi$
if and only if $\evalt t\theLO \models \varphi$. Consider a sentence $\varphi \in
\Fragm$ and put $n = \qd(\varphi)$. We put $L = L_A(\varphi)$ and let
$k \geq 2^{n+1}-1$ be an idempotency exponent of $M_L$. We consider an
arbitrary $\varLO$-algebra morphism $h$ from $T_A$ into the
$k$-power algebra on $A^*$ with $h(a) = a$ for each $a \in A$. Because
$s=t$ holds in $M_L$, we have $h(s) \equiv_L h(t)$. Since $h(s) =
\evalt s k$ as well as $h(t) = \evalt t k$ and by
Corollary~\ref{cor:term_interpretations}, we obtain $h(s)
\Indist{\Fragm_n} \evalt s\theLO$ and $h(t) \Indist{\Fragm_n} \evalt
t\theLO$. Altogether, we conclude that $\evalt s\theLO \models
\varphi$ if and only if $h(s) \models \varphi$ if and only if $h(t) \models \varphi$ if and only if
$\evalt t\theLO \models \varphi$.

\para{``\refenum{2} $\bm\Rightarrow$ \refenum{1}''.}
Let $B \subseteq \Lambda$ be a finite alphabet and $L \subseteq B^*$ a
language defined by a sentence $\varphi \in \Fragm$. Let $n =
\qd(\varphi)$ and $k \geq 2^{n+1}-1$ be an idempotency exponent of
$M_L$. We have to show that every $\varLO$-algebra morphism $g$ from
$T_A$ into the $k$-power algebra on $B^*$ satisfies $g(s) \equiv_L
g(t)$. Consider such a morphism $g$ and let $h$ be the unique monoid
morphism from $A^*$ into $B^*$ defined by $h(a) = g(a)$ for each $a \in
A$. Then $g(s) = h(\evalt s k)$ and $g(t) = h(\evalt t
k)$. Corollary~\ref{cor:term_interpretations} and the assumption
$\evalt s\theLO \Indist\Fragm \evalt t\theLO$ yield $\evalt s k
\Indist{\Fragm_n} \evalt s\theLO \Indist{\Fragm_n} \evalt t\theLO
\Indist{\Fragm_n} \evalt t k$. We conclude $g(s) \Indist{\Fragm_n}
g(t)$ by Corollary~\ref{cor:substitution}. By
Proposition~\ref{prop:concatenation}, we obtain $u g(s) v
\Indist{\Fragm_n} u g(t) v$ for all $u,v \in B^*$. Since $\varphi \in
\Fragm_n$, this finally implies $g(s) \equiv_L g(t)$.
\end{proof}

\noindent
In the remainder of this section, we demonstrate two applications of
Theorem~\ref{thm:EF_termsM} by providing quite short proofs of two
well-known results. The following corollary can be obtained by combining a result of McNaughton and Papert~\cite{mp71:short} with Sch\"utzenberger's characterization of star-free languages~\cite{sch65sf:short}. A more direct proof can, for instance, be found in~\cite{str94:short}. A finite monoid $M$ is called \emph{aperiodic} if
the identity $a^\varLO a = a^\varLO$ holds in $M$.

\begin{corollary}
The syntactic monoid of every first-order definable language is aperiodic.
\end{corollary}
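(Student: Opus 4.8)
The plan is to derive this from Theorem~\ref{thm:EF_termsM} by exhibiting a suitable fragment for first-order logic. The natural candidate is $\Fragm = \FO[<]$, i.e., full first-order logic restricted to the predicates $=$, $\lambda(x)=a$ and $<$ (equivalently $\leq$), but without $\suc$, $\min$, $\max$ and $\mty$. First I would check that $\FO[<]$ is indeed a fragment in the sense of Definition~\ref{def:fragment}: it is non-empty and closed under the required context operations, since none of the closure conditions (1)--(6) introduces any of the forbidden predicates, and negation, disjunction, conjunction and quantification stay within $\FO[<]$. Moreover $\FO[<]$ is order-stable because one may freely rewrite $x<y$ as $x\leq y\land\neg(x=y)$ and $x\leq y$ as $x<y\lor x=y$ without leaving the fragment; more precisely, condition~(8) holds because $\mu(x<y)\in\FO[<]$ iff $\mu(x\leq y)\in\FO[<]$, both sides merely recording that the surrounding context $\mu$ uses only allowed predicates. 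Since every first-order definable language over a finite alphabet is definable by a sentence using only $<$ (the other order predicates are definable from $<$), it is $\FO[<]$-definable, so it suffices to prove the identity $a^\varLO a = a^\varLO$ holds in the syntactic monoid of every $\FO[<]$-definable language.

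By Theorem~\ref{thm:EF_termsM} applied to the single-letter $\varLO$-terms $s = a^\varLO a$ and $t = a^\varLO$, this reduces to showing that Duplicator has winning strategies in the $\FO[<]$-games on $(\evalt s\theLO,\evalt t\theLO)$ and on $(\evalt t\theLO,\evalt s\theLO)$. Now $\evalt t\theLO = a^\theLO$ with $\theLO = \omega+\zeta\cdot\eta+\omega^*$, and $\evalt s\theLO = a^\theLO a$. So the remaining task is purely combinatorial: show $a^\theLO \Indist{\FO[<]} a^\theLO a$, where both words are single-letter words over $\{a\}$ whose underlying linear orders are $\theLO$ and $\theLO+1$.

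For this I would invoke Theorem~\ref{thm:EF_unbounded} (the words are $\theLO$-rational since $a^\theLO = \evalt{a^\varLO}\theLO$ and $a^\theLO a = \evalt{a^\varLO a}\theLO$, and $\FO[<]$ is trivially $\suc$-stable as it contains none of $\suc,\min,\max$), reducing the problem to: every $\FO[<]$-sentence true in $a^\theLO$ is true in $a^\theLO a$ and vice versa. Equivalently one can play the $\FO[<]$-game directly and give Duplicator's strategy: since both words are monochromatic, Duplicator only needs to preserve the $<$-order and equalities among the chosen positions. The key observation is that $\theLO$ and $\theLO+1$ both have a last element and a first element, and more importantly $\theLO \cong \theLO+1$ is \emph{not} literally true, but they are $\equiv_n$-equivalent for every $n$: indeed $\theLO = \omega+\zeta\cdot\eta+\omega^*$ contains $\zeta\cdot\eta$ as a dense-without-endpoints family of $\zeta$-blocks, and appending one point to the final $\omega^*$ merely turns that $\omega^*$ into $1+\omega^* = \omega^*$ again up to the needed local indistinguishability. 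Concretely, whenever Spoiler plays a position, Duplicator replies with a position having the same "left order type truncated at $2^{n}$" and "right order type truncated at $2^{n}$" profile relative to previously chosen points; this is the standard argument that $\omega$, $\omega^*$, $\zeta$, $\eta$ and their finite sums are pairwise $\FO[<]$-indistinguishable from their one-point extensions, and it is exactly the kind of bookkeeping that Lemma~\ref{lemma:exponentiation} and Proposition~\ref{prop:concatenation} were designed to subsume: $a \Dwins{\FO[<]} a^\theLO$ and $a^\theLO \Dwins{\FO[<]} a$ fail for bounded depth but, after $\theLO$-exponentiation, $a^\theLO a = a^\theLO a^1 \Indist{\FO[<]} a^\theLO a^\theLO = a^\theLO$ using that $\theLO + \theLO = \theLO$.

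The main obstacle is precisely verifying $\evalt{a^\varLO}\theLO \Indist{\FO[<]} \evalt{a^\varLO a}\theLO$, i.e., $a^\theLO \Indist{\FO[<]} a^{\theLO}a$. One clean way around doing an explicit game argument is to note $\theLO = \theLO + \theLO$ as order types (since $\omega + (\zeta\cdot\eta+\omega^*) + \omega + (\zeta\cdot\eta+\omega^*) $ absorbs into a single copy of $\zeta\cdot\eta$ together with the outer $\omega$ and $\omega^*$), hence $a^\theLO = a^{\theLO+\theLO} = a^\theLO a^\theLO$; then by Corollary~\ref{cor:term_interpretations} it is enough to work at bounded depth $n$, where Lemma~\ref{lemma:exponentiation} gives $a = a^1 \Dwins{\FO[<]_n} a^\theLO$ and $a^\theLO \Dwins{\FO[<]_n} a^{1}$ for $1 \geq 2^{n+1}-1$---which is false for $n\geq 0$, so instead one uses $a^m \Indist{\FO[<]_n} a^\theLO$ for $m \geq 2^{n+1}-1$ directly from Corollary~\ref{cor:term_interpretations} with $t = a^\varLO$, and similarly $a^m a = a^{m+1} \Indist{\FO[<]_n} \evalt{a^\varLO a}\theLO$ with $t = a^\varLO a$; since $a^m \Indist{\FO[<]_n} a^{m+1}$ for all $m \geq 2^{n+1}-1$ is the classical EF fact for monochromatic words of sufficient length, we are done. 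Thus the whole proof fits in a few lines once the fragment $\FO[<]$ is identified and checked.
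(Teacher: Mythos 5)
Your reduction is the same as the paper's: take $\Fragm = \FO[<]$ (a fragment, order-stable, free of $\suc$, $\min$, $\max$ and $\mty$, and sufficient because those predicates are $\FO[<]$-expressible), apply Theorem~\ref{thm:EF_termsM} to $s = a^\varLO a$ and $t = a^\varLO$, and reduce everything to $\evalt{a^\varLO a}\theLO \Indist{\FO[<]} \evalt{a^\varLO}\theLO$. Where you diverge is in how you establish this last equivalence, and there you make a factual slip: you assert that $\theLO \cong \theLO+1$ is \emph{not} literally true. It is true: $\omega^* + 1 = \omega^*$ (appending one point to the order type of $\Int_{<0}$ yields $\Int_{\leq 0}$, which is the same order type), hence $\theLO + 1 = \omega + \zeta\cdot\eta + \omega^* + 1 = \theLO$, and therefore $\evalt{a^\varLO a}\theLO = a^{\theLO+1} = a^{\theLO} = \evalt{a^\varLO}\theLO$ \emph{as words}, since isomorphic words are identified. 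That single observation is the paper's entire proof --- no game, no bounded-depth reduction. Because you miss it, you fall back on Corollary~\ref{cor:term_interpretations} together with the classical fact that $a^m \Indist{\FO_n} a^{m+1}$ for $m \geq 2^{n+1}-1$. That route is valid (and the classical fact is even recoverable inside the paper's toolkit, since $a^m \Indist{\Fragm_n} \evalt{a^\varLO}\theLO \Indist{\Fragm_n} a^{m+1}$ both follow from Corollary~\ref{cor:term_interpretations}), but it reintroduces exactly the quantitative bookkeeping the framework is designed to eliminate, and the discarded false starts in your middle paragraph are a symptom of that. The lesson: before reaching for an EF argument, check whether the two labeled linear orders are simply isomorphic.
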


\begin{proof}
  The
  predicates $\suc$, $\min$, $\max$ and $\mty$ can be expressed in $\FO[<]$. By Theorem~\ref{thm:EF_termsM}, it suffices to show $\evalt{a^\varLO a}\theLO
  \Indist{\FO[<]} \evalt{a^\varLO}\theLO$. The property $\theLO + 1 =
  \theLO$ of the order type $\theLO$ implies $\evalt{a^\varLO a}\theLO
  = \evalt{a^\varLO}\theLO$ and the claim follows.
\end{proof}

\noindent
The second application relates definability in
$\FO^2[<]$ to the class $\DA$. The fragment $\FO^2[<]$ consists of all
formulae not containing the predicates $\suc$, $\min$, $\max$ and
$\mty$ which quantify over two fixed variables $x_1,x_2 \in \Vars$
only. The class $\DA$ consists of all finite monoids in which the
identity $(a b c)^\varLO b (a b c)^\varLO = (a b c)^\varLO$ holds. A
significant amount of book-keeping is involved when showing that the
syntactic monoid of every $\FO^2[<]$-definable language is in $\DA$ by
applying the classical \EF game approach, see
e.g.~\cite{dgk08ijfcs:short}\footnote{Actually, the proof given
  in~\cite{dgk08ijfcs:short} does not use the language of \EF games,
  but it can easily be restated this way.}. On the other hand, the
abstract idea of this proof is very simple: Duplicator copies every
move near the left and near the right border, and he does not need to
care in the center.  We now show that this idea can easily be
formalized when using Theorem~\ref{thm:EF_termsM}.

\begin{corollary}
\label{cor:FO2}
The syntactic monoid of any language definable in $\FO^2[<]$ is in
$\DA$.
\end{corollary}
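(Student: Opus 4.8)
The plan is to use Theorem~\ref{thm:EF_termsM} with $\Fragm = \FO^2[<]$. First I need to check that $\FO^2[<]$ is an order-stable fragment not containing the predicates $\suc$, $\min$, $\max$ and $\mty$: order-stability holds because the grammar treats $x_1 < x_2$ and $x_1 \leq x_2$ symmetrically (both or neither can be obtained inside any context), and the other predicates are excluded by definition. Hence it suffices to exhibit winning strategies for Duplicator in the $\FO^2[<]$-games on the pair of $\theLO$-rational words $(\evalt{(abc)^\varLO b (abc)^\varLO}\theLO,\, \evalt{(abc)^\varLO}\theLO)$ and on the reversed pair.

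\para{Describing the words.} Set $u = \evalt{(abc)^\varLO b (abc)^\varLO}\theLO$ and $v = \evalt{(abc)^\varLO}\theLO$. Using $\theLO = \omega + \zeta\cdot\eta + \omega^*$ together with the absorption identities $1 + \theLO = \theLO = \theLO + 1$ and $\theLO + \theLO = \theLO$, one sees that both $u$ and $v$ are generalized words of order type $\theLO$ over the letters $a,b,c$ whose labels repeat the block $abc$, \emph{except} that $u$ has one extra $b$ sitting ``in the middle'' (between the two $\theLO$-powers). More precisely, $u$ factors as $u = w\, b\, w$ where $w = \evalt{(abc)^\varLO}\theLO = v$, and the domain of $u$ is a copy of $\theLO$ on the left, then the single middle $b$, then another copy of $\theLO$ on the right; the key point is that \emph{between} any two positions of $u$ there are still $\theLO$-many occurrences of each of $a$, $b$, $c$ on every side that matters, and similarly for $v$ (away from the two extremities).

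\para{Duplicator's strategy.} The idea indicated in the text is: Duplicator copies every Spoiler move near the left border and near the right border exactly, and does not care in the center. Concretely, fix a round and a configuration. If Spoiler's quest $q$ lies within a bounded distance of the left end (or is itself an extremal position), Duplicator replies with the position at the same distance from the left end in the other word; symmetrically near the right end. These ``boundary'' regions are isomorphic in $u$ and $v$ (finite prefixes/suffixes of the same periodic pattern), so order relations and labels are preserved there. If $q$ lies in the ``interior'' — far from both ends, so in particular $q$ is the extra middle $b$ of $u$, or $q$ is any position surrounded on both sides by $\theLO$-many full $abc$-blocks — Duplicator picks any reply $r$ in the interior of the other word with the same label as $q$, and with the correct order relationship to the (at most one other, since there are only two variables) already-placed pebble: if the other pebble is near the left, put $r$ to its right; if near the right, put $r$ to its left; if it too is in the interior, match the $<$-relation between the two quests. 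Because $\theLO \cdot k = \theLO$ and there is always ``room'' for finitely many interior pebbles with prescribed order, such an $r$ with the right label always exists in both words, and the extra middle $b$ of $u$ is harmless since $v$ also contains $\theLO$-many interior $b$'s. Formally one maintains the invariant that the two pebbled configurations have the same labels and the same induced $<$-ordering, and that each pebble is ``near the same border'' or ``in the interior'' in both words; since $\FO^2[<]$ only ever has two pebbles and its atomic formulae speak only of $<$, $=$ and $\lambda(x)=a$, this invariant prevents Spoiler's winning condition from ever being met. The game is over once the reduct runs out of quantifiers, or goes on forever — in either case Duplicator wins. The reversed game is handled identically (by order-stability and since $\FO^2$ here is effectively symmetric, or simply by running the same argument with the roles of $u,v$ swapped, noting the boundary regions and interiors of $v$ embed into those of $u$).

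\para{Main obstacle.} The real work is the bookkeeping of what ``near a border'' means for a two-variable game over $\theLO$-rational words and verifying the invariant is maintainable under all four quantifier moves of Table~\ref{tab:masterplan}, including the negated ones where the roles of the two words swap — but this is exactly the bookkeeping that Theorem~\ref{thm:EF_termsM} is designed to make lightweight, since we play directly on $\evalt{\cdot}\theLO$ rather than on word instances with constants depending on~$n$. Once the invariant is stated, checking it is routine: a new pebble placed near a border matches by isomorphism of the finite boundary blocks, and a new pebble placed in the interior only needs a same-labelled position on the correct side of the (single) other pebble, which the density of $\theLO$ always supplies in both $u$ and $v$.
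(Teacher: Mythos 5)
Your proposal is correct and follows the paper's top-level strategy: invoke Theorem~\ref{thm:EF_termsM} for $\Fragm=\FO^2[<]$ and then win the $\Fragm$-game on $(\evalt s\theLO,\evalt t\theLO)$ by matching labels and the relative order of the two pebbles. The difference lies in how the game is won. The paper first applies Proposition~\ref{prop:concatenation} (together with closure of $\FO^2[<]$ under negation) to strip off the common prefix $u=(abc)^{\omega}(abc)^{\zeta\cdot\eta}$ and common suffix $v=(abc)^{\zeta\cdot\eta}(abc)^{\omega^*}$, reducing everything to a single game on
\[
\bigl((abc)^{\omega^*}\,b\,(abc)^{\omega},\ (abc)^{\omega^*}(abc)^{\omega}\bigr),
\]
where \emph{every} position has infinitely many $a$'s, $b$'s and $c$'s on both sides; consequently there are no borders at all, and the strategy ``same label, same order of $x_1,x_2$'' needs no case analysis. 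You instead play on the full words $\evalt s\theLO$ and $\evalt t\theLO$, which forces you to reintroduce the near-left/near-right/interior trichotomy and the ``copy at the same distance from the end'' clause --- exactly the book-keeping that the combination of Theorem~\ref{thm:EF_termsM} with Proposition~\ref{prop:concatenation} is meant to eliminate, as you yourself note in your ``main obstacle'' paragraph. Your invariant (matching labels, matching relative order, and matching region with equal distance in the border regions) does go through: the finite prefixes and suffixes of the two words are literally isomorphic, all interior positions see infinitely many occurrences of each letter on both sides, and the invariant is symmetric so it survives the role swap under negated quantifiers. So nothing is wrong, but the concatenation reduction would have let you delete the entire border discussion.
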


\begin{proof}
  Let $s = (a b c)^\varLO b (a b c)^\varLO$ and $t = (a b
  c)^\varLO$. Again by Theorem~\ref{thm:EF_termsM}, it suffices to
  show $\evalt s\theLO \Indist{\FO^2[<]} \evalt t\theLO$. With $u = (a
  b c)^{\omega} (a b c)^{\zeta\cdot\eta} $ and $v = (a b c)^{\zeta\cdot\eta} (a b c)^{\omega^*}$ we obtain
\begin{equation*}
	\evalt s\theLO = u\hspace*{1pt}(a b c)^{\omega^*} b (a b c)^{\omega} \hspace*{1pt} v
	\qquad\text{and}\qquad
	\evalt t\theLO = u \hspace*{1pt} (a b c)^{\omega^*} (a b c)^{\omega} \hspace*{1pt} v \,.
\end{equation*}
Since $\FO^2[<]$ is closed under negation and due to
Proposition~\ref{prop:concatenation}, it further suffices to show that
Duplicator has a winning strategy in the $\FO^2[<]$-game on
\begin{equation*}
	\bigl((a b c)^{\omega^*} b (a b c)^{\omega},\, (a b c)^{\omega^*} (a b c)^{\omega} \bigr) \,.
\end{equation*}
The strategy is to choose a reply that is labeled by the same letter
as the request and such that the positions corresponding to $x_1$ and
$x_2$ are in the same order in both words. This is always possible,
since in both words there are always infinitely many positions to the
left (respectively to the right) of any position which are labeled by
a given letter from $a,b,c$.
\end{proof}

\section{The Word Problem for $\pi$-Terms over Aperiodic Monoids}
\label{sec:decidability_result}

The word problem for $\pi$-terms over aperiodic monoids was solved by
McCammond~\cite{McCammond01ijac} by computing normal forms. In the
process of computing these normal forms the intermediate terms can
grow and, to the best of our knowledge, neither the worst-case running
time nor the maximal size of the intermediate terms has been estimated
(and it seems to be difficult to obtain such results). In this section
we give an exponential algorithm for solving the word problem for
$\pi$-terms over aperiodic monoids. Our algorithm does not compute
normal forms as $\pi$-terms; instead we show that the evaluation under
$\evalt{\,\cdot\,}{\theLO}$ can be used as a normal form for
$\pi$-terms.

\begin{theorem}
\label{thm:equation_theory}
Given two $\varLO$-terms $s,t \in \terms$, one can decide whether the
identity $s=t$ holds in every aperiodic monoid in time exponential in
the size of $s$ and $t$.
%\begin{equation*}
%	O\Bigl( \bigl(4^{d_\varLO(s)} \cdot |s| + 4^{d_\varLO(t)} \cdot |t|\bigr)^5 \Bigr)
%\end{equation*}
\end{theorem}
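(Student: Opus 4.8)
The strategy is to reduce the word problem to comparing the generalized words $\evalt{s}{\theLO}$ and $\evalt{t}{\theLO}$, and then to show that the isomorphism type of such a word is computable in exponential time. The starting point is the preceding corollary: since the predicates $\suc$, $\min$, $\max$ and $\mty$ are all expressible in $\FO[<]$, the fragment $\FO[<]$ satisfies the hypotheses of Theorem~\ref{thm:EF_termsM}, and $\FO[<]$ defines precisely the languages whose syntactic monoids are aperiodic. Hence $s=t$ holds in every aperiodic monoid if and only if Duplicator wins the $\FO[<]$-games on $(\evalt{s}{\theLO},\evalt{t}{\theLO})$ and on $(\evalt{t}{\theLO},\evalt{s}{\theLO})$. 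Since $\FO[<]$ is closed under negation, this is equivalent to $\evalt{s}{\theLO} \Indist{\FO[<]} \evalt{t}{\theLO}$, and since $\FO[<]$ is all of first-order logic up to the expressible predicates, Duplicator wins the full $\FO[<]$-game on $\theLO$-rational words exactly when the two words have the same isomorphism type. Therefore the decision problem is: given $s$ and $t$, are the generalized words $\evalt{s}{\theLO}$ and $\evalt{t}{\theLO}$ isomorphic?

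\textbf{Normal form.} The plan is to introduce a finite combinatorial description of $\theLO$-rational words and show $\evalt{\,\cdot\,}{\theLO}$ can be computed into this description in exponential time. Recall $\theLO = \omega + \zeta\cdot\eta + \omega^*$, so $\theLO + 1 = 1 + \theLO = \theLO$, $\theLO + \theLO = \theLO$, and $\theLO\cdot\theLO = \theLO$ (the last because $\eta\cdot\eta = \eta$ and the boundary terms absorb). A $\theLO$-rational word is built from finite words by concatenation and $\theLO$-power; using the above identities one shows every such word has the form $w_0\, (z_1)^{\theLO} w_1 (z_2)^{\theLO} \cdots (z_m)^{\theLO} w_m$ for finite words $w_i$ and nonempty finite words $z_j$, and moreover one can normalize the $z_j$ (e.g.\ to primitive words via Fine--Wilf) and merge adjacent blocks so that this presentation becomes canonical. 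The key point is that $(\evalt{s}{\theLO})$ for a $\pi$-term $s$ is computed by structural recursion on $s$: a single letter is a one-position word; concatenation of two normal forms is a normal form after at most one merge; and the $\theLO$-power of a normal form $w_0 (z_1)^{\theLO}\cdots w_m$ simplifies dramatically because inside a $\theLO$-power the infinite blocks dominate — one obtains essentially $w_0 (y)^{\theLO} w_m$ where $y$ records the "middle content", and in fact repeated $\theLO$-powers collapse (matching $(u^\pi)^\pi = u^\pi$ in aperiodic monoids). The sizes of the finite words $w_i, z_j$ occurring are bounded by the size of the $\pi$-term, but the number of blocks could a priori grow; one must argue it stays singly exponential, which it does because each $\theLO$-power operation collapses the interior to a bounded form.

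\textbf{Main obstacle.} The delicate part is establishing that the canonical form is genuinely canonical and that equality of canonical forms is checkable efficiently — i.e.\ that two $\theLO$-rational words are isomorphic if and only if their normalized block presentations agree letter-for-letter (up to the chosen primitivity normalization of the $z_j$). This requires a careful analysis of when $w (z)^{\theLO} w'$ equals $\tilde w (\tilde z)^{\theLO} \tilde w'$ as linear orders, which amounts to unwinding the structure of $\theLO$ at each position: positions in the $\omega$-part and $\omega^*$-part have order-theoretically distinguished neighbourhoods, while positions in the $\zeta\cdot\eta$-part are homogeneous. One shows that the left finite prefix $w$ is determined by the maximal initial segment order-isomorphic to a prefix of $\omega$, the right finite suffix by the dual, and the periodic word $z$ by the label sequence in the homogeneous interior; primitivity pins down $z$ uniquely. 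Once this rigidity lemma is in place, the algorithm is: compute both normal forms by the recursion above (exponential time and size), then compare them directly (polynomial in their size). I expect the bulk of the work to be the recursion for the $\theLO$-power case — proving that the interior collapses to a bounded form and bounding the block count — together with the rigidity lemma; the reduction to isomorphism via Theorem~\ref{thm:EF_termsM} and the final comparison are routine.
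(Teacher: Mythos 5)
Your first step is fine, and it actually takes a different route from the paper: you derive the equivalence ``$s=t$ holds in every aperiodic monoid iff $\evalt s\theLO\cong\evalt t\theLO$'' from Theorem~\ref{thm:EF_termsM} plus the back-and-forth property of the infinite $\FO[<]$-game on countable structures, whereas the paper's Proposition~\ref{prop:reduction} proves the forward implication by checking that the $\theLO$-power algebra on $\words$ satisfies McCammond's axioms for $\varLO$-term identities over aperiodic monoids. Your route is legitimate (Karp's theorem for countable structures does the work you assert without proof), and it has the charm of not depending on McCammond's completeness result.

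The genuine gap is in your decision procedure. The claimed flat normal form --- every $\theLO$-rational word is isomorphic to $w_0(z_1)^\theLO w_1\dotsm(z_m)^\theLO w_m$ with \emph{finite} $w_i,z_j$, because ``the $\theLO$-power of a normal form collapses to $w_0(y)^\theLO w_m$'' --- is false. Take $t=(ab^\varLO)^\varLO$. In $\evalt t\theLO$ the $a$-labelled positions form a copy of $\theLO$, and every pair of order-consecutive $a$-positions is separated by an entire copy of $b^\theLO$, i.e.\ by infinitely many positions; so there are infinitely many consecutive $a$-pairs at infinite distance. In any word of your proposed shape, all but at most $2m$ pairs of consecutive $a$-positions lie inside a single factor $(z_j)^\theLO$ and are therefore at distance at most $2\max_j\lvert z_j\rvert$. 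Hence $\evalt{(ab^\varLO)^\varLO}\theLO$ is not isomorphic to any flat normal form: nested $\theLO$-powers of non-finite words do \emph{not} collapse (the axiom $(u^\varLO)^\varLO=u^\varLO$ applies only to iterated powers of the \emph{same} word), so your structural recursion for the $\theLO$-power case produces wrong answers and the rigidity lemma built on it is about the wrong class of words. The real difficulty you identify as a side issue (``the number of blocks could a priori grow'') is not the problem --- the problem is unbounded \emph{nesting depth}. The paper sidesteps all of this by translating $u^\theLO$ into $u^\omega\bigl(u^{\omega^*}u^\omega\bigr)^\eta u^{\omega^*}$, which keeps the nesting (at the cost of an exponential blow-up in the nesting depth only), and then invoking the polynomial-time isomorphism test for regular words of Bloom and \'Esik; that is where the canonicity analysis you sketch actually lives. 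To repair your approach you would have to let the $z_j$ be normal forms of $\theLO$-rational words themselves, i.e.\ work with tree-shaped normal forms and prove canonicity at every level --- essentially re-deriving Bloom--\'Esik or McCammond.
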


\noindent
The proof is a reduction to the isomorphism problem for regular
words, cf.~\cite{BloomEsik05}. These generalized words particularly
include all $\theLO$-rational words and can be described by
expressions similar to $\varLO$-terms but using $\omega$-power,
$\omega^*$-power and dense shuffle instead of the $\varLO$-power. Due
to \cite[Theorem~79]{BloomEsik05}, one can decide in polynomial time
whether two such expressions describe isomorphic words. The
characterization underlying the reduction is as follows:

\begin{proposition}
\label{prop:reduction}
For all $\varLO$-terms $s,t \in \terms$ the following conditions are equivalent:
\begin{enumerate}
\item The identity $s = t$ holds in every aperiodic finite monoid.
\item $\evalt s\theLO = \evalt t\theLO$.
\end{enumerate}
\end{proposition}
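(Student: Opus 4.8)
The plan is to prove both implications of Proposition~\ref{prop:reduction} by relating the algebraic statement to the semantic one via Theorem~\ref{thm:EF_termsM} applied to a suitable fragment capturing aperiodicity, and then pinning down the semantic equivalence using the rigidity of $\theLO$-rational words.

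\inductivestep{} For the direction ``\refenum{2} $\bm\Rightarrow$ \refenum{1}'', suppose $\evalt s\theLO = \evalt t\theLO$. Fix an aperiodic finite monoid $M$ and a $\varLO$-algebra morphism $h$ from $\terms$ into the idempotency algebra on $M$; we must show $h(s)=h(t)$. Let $A\subseteq\Lambda$ be the finite set of letters occurring in $s$ or $t$, and let $k$ be an idempotency exponent of $M$ that is simultaneously large enough (say $k \geq 2^{n+1}-1$ for a suitable $n$, to be explained below). Since aperiodic monoids are exactly the monoids whose languages are $\FO[<]$-definable, and $\FO[<]$ can express $\suc$, $\min$, $\max$ and $\mty$, Corollary~\ref{cor:term_interpretations} gives $\evalt s\theLO \Indist{\FO[<]_n} \evalt s k$ and likewise for $t$, for every $n$ and every $k\geq 2^{n+1}-1$. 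Combined with $\evalt s\theLO = \evalt t\theLO$, this yields $\evalt s k \Indist{\FO[<]_n} \evalt t k$ for all $n$, hence $\evalt s k$ and $\evalt t k$ satisfy the same $\FO[<]$-sentences, hence (by the McNaughton--Papert/Sch\"utzenberger theorem and the fact that the syntactic monoid of any $\FO[<]$-language is aperiodic) they have the same image under every morphism into an aperiodic monoid. Applying this with the morphism induced by $a\mapsto h(a)$ and noting $h(s)=h(\evalt s k)$, $h(t)=h(\evalt t k)$ inside the $k$-power semantics, we get $h(s)=h(t)$.

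\inductivestep{} For ``\refenum{1} $\bm\Rightarrow$ \refenum{2}'', suppose $s=t$ holds in every aperiodic finite monoid. Since every finite word language that is $\FO[<]$-definable has an aperiodic syntactic monoid, Theorem~\ref{thm:EF_termsM} applied to the fragment $\FO[<]$ (which is order-stable and omits $\suc$, $\min$, $\max$, $\mty$) shows that Duplicator has winning strategies in the $\FO[<]$-games on $(\evalt s\theLO,\evalt t\theLO)$ and $(\evalt t\theLO,\evalt s\theLO)$; equivalently, $\evalt s\theLO \Indist{\FO[<]} \evalt t\theLO$, so $\evalt s\theLO$ and $\evalt t\theLO$ satisfy the same sentences of full first-order logic. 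It then remains to upgrade ``$\FO[<]$-equivalent'' to ``isomorphic'' for $\theLO$-rational words. This is the rigidity step: $\theLO$-rational words are definable by the expressions of \cite{BloomEsik05}, and among such regular words, $\FO$-indistinguishability implies isomorphism, essentially because the finitely many ``types'' occurring can be separated by first-order sentences (one can locate each letter, detect successor/predecessor structure at the $\omega$- and $\omega^*$-ends, and detect density of each letter in each block). Thus $\evalt s\theLO = \evalt t\theLO$.

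The main obstacle is the rigidity step in the second direction: showing that two $\theLO$-rational (more generally regular) words with the same $\FO$-theory must be isomorphic. One clean route is to invoke the normal-form / isomorphism machinery of \cite{BloomEsik05}: every $\theLO$-rational word is a regular word, the isomorphism problem for regular words is decidable, and non-isomorphic regular words are separated by some $\FO[<]$-sentence because their (finitely many) local isomorphism types and the global pattern of $\omega$-, $\omega^*$- and shuffle-blocks are first-order detectable. Alternatively, one can argue directly by structural induction on the $\theLO$-rational construction, using that $\theLO$ absorbs finite additions on either side ($1+\theLO=\theLO=\theLO+1$, $\theLO+\theLO=\theLO$) to reduce an arbitrary $\theLO$-rational word to a canonical form, and then checking that Spoiler wins the $\FO[<]$-game between any two distinct canonical forms. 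Either way, once rigidity is in hand the proposition follows immediately from Theorem~\ref{thm:EF_termsM} together with Corollary~\ref{cor:term_interpretations}.
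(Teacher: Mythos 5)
Your direction ``\refenum{2} $\bm\Rightarrow$ \refenum{1}'' is workable and is a direct-computation variant of the paper's argument (the paper instead applies Theorem~\ref{thm:EF_termsM} and then Eilenberg's variety theorem), but the quantifiers need repair: for a \emph{fixed} idempotency exponent $k$, Corollary~\ref{cor:term_interpretations} only gives $\evalt sk \Indist{\Fragm_n} \evalt tk$ for those $n$ with $2^{n+1}-1 \leq k$, not for all $n$; in particular your intermediate claim that $\evalt sk$ and $\evalt tk$ ``satisfy the same $\FO[<]$-sentences'' is false in general, since they are usually distinct finite words (take $s=a^\varLO$, $t=a^\varLO a^\varLO$) and distinct finite words are separated by first-order sentences. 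The fix is to first choose $n$ large enough that every language $g^{-1}(m)$, $m \in M$, is defined by a sentence of quantifier depth at most $n$, and only then pick an idempotency exponent $k \geq 2^{n+1}-1$.

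The direction ``\refenum{1} $\bm\Rightarrow$ \refenum{2}'' has a genuine gap. You correctly reduce it, via Theorem~\ref{thm:EF_termsM}, to the ``rigidity'' claim that $\FO[<]$-equivalent $\theLO$-rational words are isomorphic, but that claim carries the entire weight of this implication and neither of your proposed justifications establishes it. The general principle behind your first route---that non-isomorphic regular words are separated by some $\FO[<]$-sentence---is false: $a^\omega$ and $a^{\omega+\zeta} = a^\omega a^{\omega^*}\! a^\omega$ are both regular words, are elementarily equivalent (the first-order theory of discrete linear orders with a least but no greatest element is complete), yet are not isomorphic; \cite[Theorem~79]{BloomEsik05} decides isomorphism but asserts nothing about first-order separability. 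Your second route asks for a canonical form for $\theLO$-rational words together with a proof that Spoiler wins between any two distinct canonical forms; constructing such a normal form and proving its uniqueness is precisely the hard content here, and you give no argument for it. Indeed, in the presence of Theorem~\ref{thm:EF_termsM} and Eilenberg's theorem, your rigidity claim is \emph{equivalent} to the implication being proved, so the reduction makes no progress. The paper avoids this entirely by invoking McCammond's completeness result~\cite{McCammond01ijac}: every identity valid in all finite aperiodic monoids is derivable from six explicit $\varLO$-term axioms, and one then checks (as done in the appendix) that the $\theLO$-power algebra on $\words$ satisfies each of these axioms, whence $\evalt s\theLO = \evalt t\theLO$ follows by tracing the derivation.
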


\begin{proof}
\para{``\refenum{1} $\bm\Rightarrow$ \refenum{2}''.}
The results in \cite{McCammond01ijac} imply that the identity $s = t$
can be deduced from the following list of axioms, where $n \geq 1$:
\begin{align*}
	\label{eq:axioms}
	(u v) w &= u (v w) &
	(u^\varLO)^\varLO &= u^\varLO &
	(u^n)^\varLO &= u^\varLO \\
	u^\varLO u^\varLO &= u^\varLO &
	u^\varLO u &= u u^\varLO = u^\varLO &
	(u v)^\varLO u &= u (v u)^\varLO \, .
\end{align*}
%Since the concatenation of words is associative and
%\begin{align*}
%	\theLO + 1 &= \theLO \,, &
%	\theLO + \theLO &= \theLO \,, &
%	\theLO \cdot \theLO &= \theLO \,.
%\end{align*}
As a matter of fact, the $\theLO$-power algebra on $\words$ satisfies
these axioms as well. Consequently, $\evalt s\theLO = \evalt t\theLO$
can be proved along a deduction of the identity $s = t$ from the axioms.

\para{``\refenum{2} $\bm\Rightarrow$ \refenum{1}''.}
Due to Eilenberg's Variety Theorem~\cite{eil76:short}, the
pseudovariety of aperiodic monoids is generated by the class of
syntactic aperiodic monoids. The latter are precisely the syntactic
monoids of first-order definable
languages~\cite{mp71:short,sch65sf:short}. By Theorem~\ref{thm:EF_termsM} the identity $s=t$ holds in the syntactic monoid of every such language.
\end{proof}

\begin{proof}[Proof of Theorem~\ref{thm:equation_theory}]
  In order to apply the decision procedure from
  \cite[Theorem~79]{BloomEsik05}, we have to translate $s$ and $t$
  into expressions generating the same words and which do not use
  $\theLO$-power but $\omega$-power, $\omega^*$-power and dense
  shuffle instead. Such a translations can be based on the identity
  $u^\theLO = u^\omega \bigl( u^{\omega^*} u^\omega \bigr)^\eta
  u^{\omega^*}$ which holds for all words $u \in \words$. Therein, the
  $\eta$-power is a special case of the dense shuffle. Since this
  translation leads to a blow-up which is exponential in the number of
  nested applications of $\varLO$-powers within $s$ and $t$, we can decide
  $\evalt s\theLO = \evalt t\theLO$ in time at most exponential in the
  size of $s$ and $t$.
\end{proof}

\section{Summary}

For every $\pi$-term $t$ we define a labeled linear order $\evalt
t\theLO$, and every first-order fragment~$\mathcal{F}$ over finite
words naturally yields a (possibly infinite) \EF game on labeled
linear orders. The important property of these constructions is that
$\mathcal{F}$ satisfies an identity $s = t$ of $\pi$-terms $s$ and $t$
if and only if Duplicator has a winning strategy in the
$\mathcal{F}$-game on $\evalt s\theLO$ and $\evalt t\theLO$. 
We note that $\evalt t\theLO$ does not depend on
$\mathcal{F}$.
Usually
showing that a fragment~$\mathcal{F}$ satisfies an identity $s = t$
requires a significant amount of book-keeping which in most cases is
not part of the actual proof idea. Our main results
Theorem~\ref{thm:EF_termsM} and Theorem~\ref{thm:EF_termsS} allow to
formalize such proof ideas without further book-keeping, see
e.g.~Corollary~\ref{cor:FO2}. A probably less obvious application of
our main result are word problems for $\pi$-terms over varieties of
finite monoids. We show that the word problem for $\pi$-terms over
aperiodic finite monoids is solvable in exponential time
(Theorem~\ref{thm:equation_theory}), thereby improving a result of
McCammond~\cite{McCammond01ijac}.
Theorems~\ref{thm:EF_termsM} and~\ref{thm:EF_termsM} can easily be strengthened by using ordered identities and ordered syntactic monoids. We refrained to do so in order to avoid additional technicalities.
%Several extensions of our results remain open: Other implicit
%operations (see~\cite{alm94:short} for further details on implicit
%operations), logical fragments beyond classical first-order logic, and
%other structures such as infinite words, trees or Mazurkiewicz traces.

\enlargethispage{\baselineskip} %TODO

%\bibliographystyle{abbrv}
%\bibliography{traces}

{\small
\newcommand{\Ju}{Ju}\newcommand{\Ph}{Ph}\newcommand{\Th}{Th}\newcommand{\Ch}{Ch}\newcommand{\Yu}{Yu}\newcommand{\Zh}{Zh}\newcommand{\St}{St}
}

\clearpage

\begin{appendix}

\renewcommand*{\thetheorem}{\Alph{section}.\arabic{theorem}}
\renewcommand*{\theproposition}{\Alph{section}.\arabic{theorem}}

\section{Missing Proofs from Section~\ref{sec:EF_game}}

\subsection{Missing Proofs for Theorem~\ref{thm:EF_bounded}}

{
\renewcommand*{\thetheorem}{\ref{thm:EF_bounded}}
\TheoremEFBounded*
}

\noindent
As indicated earlier, the implication \mbox{``\refenum{2}
  $\bm\Rightarrow$ \refenum{1}''} even holds if the quantifier depth of
$\Fragm$ is not bounded. The contraposition of this claim is shown by
the proposition below.

\begin{proposition}
\label{prop:Spoiler_wins}
Let $\Fragm$ be a fragment and $\enc{u,\alpha},\enc{v,\beta}$
$\Varset$-valuations. If there exists a formula $\varphi \in \Fragm$
with $\fv(\varphi) \subseteq \Varset$ such that $\enc{u,\alpha}
\models \varphi$ but $\enc{v,\beta} \not\models \varphi$, then
$\enc{u,\alpha} \Swins\Fragm \enc{v,\beta}$.
\end{proposition}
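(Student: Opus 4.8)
The plan is to prove this by induction on the quantifier depth of the distinguishing formula $\varphi$, describing Spoiler's winning strategy explicitly. First I would observe that we may assume $\varphi$ is in a convenient syntactic form: using the fragment closure conditions of Definition~\ref{def:fragment}, in particular condition~\refenum{\ref{cond:negation}}, we may push the analysis down through double negations, and we may reduce the cases $\varphi \lor \psi$ and $\varphi \land \psi$ to one of their subformulae — if $\enc{u,\alpha} \models \varphi\lor\psi$ and $\enc{v,\beta}\not\models \varphi\lor\psi$ then both $\enc{v,\beta}\not\models\varphi$ and $\enc{v,\beta}\not\models\psi$, while $\enc{u,\alpha}$ satisfies at least one of them, and by conditions~\refenum{2},\refenum{3} that subformula still lies in $\Fragm$ (respectively in the relevant iterated reduct); dually for $\land$. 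So without loss of generality the outermost connective of $\varphi$ is either absent (a literal), a single negation of a literal, a quantifier, or a negated quantifier.

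\medskip\noindent\emph{Base case.} If $\varphi$ is a literal then the configuration $(\Fragm,\enc{u,\alpha},\enc{v,\beta})$ already meets Spoiler's winning condition as stated in Section~\ref{sec:EF_game} (there is a literal $\varphi\in\Fragm$ with $\fv(\varphi)\subseteq\Varset$, satisfied on the left but not the right), so Spoiler has already won. If $\varphi = \neg\psi$ for a literal $\psi$, then $\enc{v,\beta}\models\psi$ while $\enc{u,\alpha}\not\models\psi$; since $\Fragm$ need not be closed under negation this is not immediately a win, so I would instead treat $\neg\psi$ as a degenerate quantifier-free case handled by the same role-swapping mechanism as negated quantifiers below — more cleanly, I fold the single-negation-of-literal case into the quantifier induction by noting the distinguishing formula $\varphi$ appearing in a reduct $\FragmG$ is itself always reached as $\quant x\,\chi$ for the previous Spoiler choice, so at the top level $\varphi$ is quantified unless the game is already over. (In the write-up I would make this precise by inducting on $\qd(\varphi)$ with the literal and negated-literal cases as the base.)

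\medskip\noindent\emph{Inductive step.} Suppose $\varphi = \exists x\,\psi$ with $\enc{u,\alpha}\models\varphi$ and $\enc{v,\beta}\not\models\varphi$. By the fragment reduct definition, $\psi \in \dis\exists x\Fragm$, which is non-empty hence a fragment. Spoiler plays the quantifier $\exists x$ and picks a position $q\in\dom{u}$ witnessing $\enc{u,\repl\alpha x q}\models\psi$. Whatever position $r\in\dom v$ Duplicator answers with, we have $\enc{v,\repl\beta x r}\not\models\psi$ because $\enc{v,\beta}\not\models\exists x\,\psi$. The new configuration is $(\dis\exists x\Fragm,\enc{u,\repl\alpha x q},\enc{v,\repl\beta x r})$ with a distinguishing formula $\psi$ of strictly smaller quantifier depth lying in $\dis\exists x\Fragm$, so the induction hypothesis gives Spoiler a winning strategy from there. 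The case $\varphi = \forall x\,\psi$ is symmetric with the roles of $u$ and $v$ interchanged: now $\enc{v,\beta}\models\forall x\,\psi$ fails, so there is a witness $q\in\dom v$ with $\enc{v,\repl\beta x q}\not\models\psi$, and since $\enc{u,\alpha}\models\forall x\,\psi$ any reply $r\in\dom u$ gives $\enc{u,\repl\alpha x r}\models\psi$; by the resulting-configuration rule for $\forall$ in Table~\ref{tab:masterplan} the valuation on $v$ is the one satisfying $\neg\psi$ and on $u$ the one satisfying $\psi$, so — after the implicit non-swap — the "left satisfies, right does not" orientation is preserved with distinguishing formula $\psi$, and we apply the induction hypothesis. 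The two negated quantifiers $\neg\exists x$ and $\neg\forall x$ are handled identically except that the resulting-configuration rule additionally interchanges the two valuations; here $\varphi = \neg\exists x\,\psi$ means $\enc{u,\alpha}$ satisfies $\neg\exists x\,\psi$, i.e.\ $\enc{v,\beta}\models\exists x\,\psi$ (after accounting for which side is "left"), and one checks that the interchange in step~4 is exactly what keeps the invariant correct.

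\medskip The main obstacle I anticipate is purely bookkeeping: getting the orientation right through the negated-quantifier cases, where Spoiler's quest goes into the word that currently plays the role of "$v$" and the resulting configuration swaps the two valuations, so one must verify carefully that the distinguishing-formula invariant ("the formula in the current reduct is satisfied by the first valuation of the configuration but not the second") is genuinely preserved rather than flipped. The cleanest way to avoid errors is to state the inductive invariant in terms of the ordered pair in the configuration exactly as produced by Table~\ref{tab:masterplan}, and then simply trace each of the four rows; everything else is routine use of the closure conditions of Definition~\ref{def:fragment}.
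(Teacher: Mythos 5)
Your proposal follows essentially the same route as the paper: normalize $\varphi$ via the closure conditions of Definition~\ref{def:fragment}, reduce Boolean connectives to a distinguishing subformula, induct on $\qd(\varphi)$ with literals as the base case, and in the inductive step have Spoiler play a witness so that any reply of Duplicator leaves $\psi$ as a distinguishing formula in the reduct; your tracing of the four quantifier rows, including the valuation swap for $\neg\exists$ and $\neg\forall$, matches what the paper does (it only writes out the $\exists$ case).

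The one place you go astray is the negated-literal worry. A negated atomic formula such as $\neg(\lambda(x)=a)$ \emph{is} a literal, so if $\varphi=\neg\psi$ with $\psi$ atomic then $\varphi$ itself witnesses Spoiler's winning condition in the current configuration --- closure of $\Fragm$ under negation is irrelevant, since the winning condition quantifies over literals of $\FragmG$ and $\varphi\in\FragmG$ already. Consequently your proposed fold-in via the claim that ``the distinguishing formula appearing in a reduct is always reached as $\quant x\,\chi$'' is both unnecessary and not a valid argument (nothing forces the distinguishing formula handed to the induction to be quantified); your parenthetical fallback --- take literals and negated literals together as the base case of the induction on $\qd$ --- is exactly the paper's base case and is what you should write. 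Two further small points: the paper's normalization is a single up-front De Morgan pass producing a positive Boolean combination of literals and $\Quantifiers$-quantified formulae (the negated quantifiers absorb all remaining negations), which is marginally cleaner than descending through $\lor/\land$ on the fly but equivalent; and you should note explicitly that if Duplicator's word has empty domain she cannot reply and Spoiler wins outright, which the paper records in the $\exists$ case.
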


\begin{proof}
  Using De Morgan's laws, we can transform $\varphi$ into a
  \emph{positive} Boolean combination of literals and formulae
  starting with one of the quantifiers in $\Quantifiers$. Due to the
  syntactic closure properties of fragments, the resulting formula
  also belongs to $\Fragm$. Thus, we may assume without loss of
  generality that $\varphi$ itself is a literal or a quantified
  formula. We proceed by induction on $\qd(\varphi)$.

  \basecase{$\qd(\varphi)=0$, i.e., $\varphi$ is a literal} The
  assumption that $\enc{u,\alpha} \models \varphi$ but $\enc{v,\beta}
  \not\models \varphi$ means that the configuration
  $(\Fragm,\enc{u,\alpha},\enc{v,\beta})$ satisfies Spoilers winning
  condition.

  \inductivestep{$\qd(\varphi)>0$, i.e., $\varphi$ is a quantified
    formula} Let $\varphi = \quant x\,\psi$ with $\quant \in
  \Quantifiers$, $x \in \Vars$ and $\psi \in \dis\quant x\Fragm$. We
  only demonstrate the case $\quant = \exists$. Since $\enc{u,\alpha}
  \models \varphi$, there exists a $q \in \dom{u}$ such that
  $\enc{u,\repl\alpha x q} \models \psi$. Let Spoiler choose
  quantifier $\exists x$ and quest $q$ in configuration
  $(\Fragm,\enc{u,\alpha},\enc{v,\beta})$. If $\dom{v} = \emptyset$,
  he immediately wins the game. Otherwise, suppose that Duplicator
  replies by $r \in \dom{v}$. Since $\enc{v,\beta} \not\models
  \varphi$, we obtain $\enc{v,\repl\beta x r} \not\models \psi$. By
  the induction hypothesis, this implies $\enc{u,\repl\alpha x q}
  \Swins{\dis\exists x\Fragm} \enc{v,\repl\beta x r}$ and in turn
  $\enc{u,\alpha} \Swins\Fragm \enc{v,\beta}$.
\end{proof}

\noindent
The following proposition is a well-known fact about first-order logic.

\begin{proposition}
\label{proposition:finite_repr_FO}
Let $n \in \Nat$ and $\Varset \subseteq \Vars$ be a finite set of
variables. There exists a finite subset $\Phi \subseteq \FO_n$ of
formulae $\varphi$ with $\fv(\varphi) \subseteq \Varset$ such that
every formula $\psi \in \FO_n$ with $\fv(\psi) \subseteq \Varset$ is
equivalent to a formula in $\Phi$.
\end{proposition}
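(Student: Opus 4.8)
The plan is to prove Proposition~\ref{proposition:finite_repr_FO} by a routine induction on the quantifier depth $n$, using the fact that, once we fix the finite variable set $\Varset$ and a depth bound, there are only finitely many atomic formulae available and the logic is classical. The statement to establish is that $\FO_n$ with free variables in $\Varset$ is, up to logical equivalence, finite.

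\medskip\noindent\textit{Base case ($n = 0$).}\quad A formula $\varphi \in \FO_0$ with $\fv(\varphi) \subseteq \Varset$ is a Boolean combination of atomic formulae, each of which has free variables among $\Varset$. The available atomic formulae are $\top$, $\bot$, $\mty$, $x = y$, $x < y$, $x \le y$, $\suc(x,y)$, $\min(x)$, $\max(x)$, and $\lambda(x) = a$ for $x, y \in \Varset$ and $a \in \Lambda$. Since $\Varset$ is finite, there are finitely many of the first kinds; only the label predicates $\lambda(x) = a$ range over the infinite set $\Lambda$, but for a \emph{fixed} formula $\varphi$ only finitely many letters $a$ can occur, so $\varphi$ uses atoms from a finite set. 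However, to get a single finite $\Phi$ that works for \emph{all} $\varphi \in \FO_0$, one must be slightly careful: a priori the set of letters is unbounded. The fix is standard: introduce, for each $x \in \Varset$, the atom $\lambda(x) = a_0$ for a single distinguished letter $a_0$, together with the ``$x$ has some other label'' possibilities; more cleanly, observe that in any \emph{interpretation} the labels are what they are, and for the purpose of representing all of $\FO_0$ it suffices to take the (finite) Boolean closure of the label-free atoms together with $\lambda(x) = a$ for $a$ ranging over a fixed finite alphabet — but since $\Lambda$ is genuinely infinite this last clause is the real subtlety. The correct resolution is that we only need $\Phi$ to represent formulae \emph{up to equivalence over all words}, and any formula $\varphi$ mentions finitely many letters, so we may instead prove the localized statement: for each finite $A \subseteq \Lambda$, the set of $\FO_n$-formulae over $A$ with free variables in $\Varset$ is finite up to equivalence. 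Since in the applications (Lemma~\ref{lemma:limit_points}, the proof of Theorem~\ref{thm:EF_unbounded}) one always works over a fixed finite alphabet, this suffices; alternatively one argues that the label of a position is determined among finitely many options in any concrete word and folds this into the Ehrenfeucht-Fra\"{i}ss\'{e} type of the configuration.

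\medskip\noindent\textit{Inductive step.}\quad Suppose the claim holds for $n$ and all finite variable sets, and fix $\Varset$. A formula $\psi \in \FO_{n+1}$ with $\fv(\psi) \subseteq \Varset$ is a Boolean combination of atomic formulae (finitely many up to the alphabet, as above) and formulae of the form $\exists x\,\chi$ or $\forall x\,\chi$ with $\qd(\chi) \le n$ and $\fv(\chi) \subseteq \Varset \cup \{x\}$. Renaming bound variables, we may assume $x$ is one fixed variable $x_0 \notin \Varset$, so $\chi \in \FO_n$ has free variables in the finite set $\Varset \cup \{x_0\}$. By the induction hypothesis applied to $\Varset \cup \{x_0\}$, there is a finite set $\Psi$ representing all such $\chi$ up to equivalence; hence every $\exists x_0\,\chi$ is equivalent to some $\exists x_0\,\chi'$ with $\chi' \in \Psi$, and likewise for $\forall$. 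Thus $\psi$ is a Boolean combination of members of a fixed finite set of formulae, and the Boolean closure of a finite set of formulae is finite up to logical equivalence (there are at most $2^{2^{|\Phi_0|}}$ inequivalent Boolean combinations of a set $\Phi_0$ of propositional atoms). Taking $\Phi$ to be a set of representatives, one for each equivalence class, gives the desired finite subset of $\FO_{n+1}$.

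\medskip\noindent The main obstacle is precisely the interaction with the infinite label alphabet $\Lambda$: literally, $\{\lambda(x) = a : a \in \Lambda\}$ is infinite, so $\FO_0$ is \emph{not} finite up to equivalence over the class of all generalized words. I would handle this by relativizing the statement to a fixed finite alphabet $A$ (which is all that is ever used), or by noting that over any fixed word the label of a position takes finitely many values and building this into the notion of type; in the write-up I would state the proposition for a fixed finite $A \subseteq \Lambda$ and remark that this is what the later proofs require. Apart from that, the argument is the textbook Ehrenfeucht-Fra\"{i}ss\'{e} finiteness-of-types induction, and no step requires more than the counting observation that Boolean closure preserves finiteness up to equivalence.
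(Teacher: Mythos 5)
The paper offers no proof of this proposition at all --- it is labelled ``a well-known fact about first-order logic'' and used as a black box in Corollary~\ref{cor:finite_repr} --- so there is nothing to compare your argument against; your induction on quantifier depth (normalize the bound variable to a fixed fresh $x_0$, apply the induction hypothesis over $\Varset\cup\{x_0\}$, and use that the Boolean closure of a finite set of formulae has at most $2^{2^{|\Phi_0|}}$ members up to equivalence) is exactly the standard argument the authors have in mind. More importantly, the reservation you raise is not a defect of your proof but a genuine imprecision in the proposition as stated: since $\Lambda$ is countably infinite, already for $\Varset=\{x\}$ and $n=0$ the atoms $\lambda(x)=a$, $a\in\Lambda$, are pairwise inequivalent over $\words$ (and for $\Varset=\emptyset$, $n=1$, so are the sentences $\exists x\,\lambda(x)=a$), so no finite $\Phi$ can exist if ``equivalent'' means equivalent over all generalized words. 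Your proposed repair --- relativize to a fixed finite alphabet $A\subseteq\Lambda$, under which $\lambda(x)=a$ for $a\notin A$ collapses to $\bot$ and the atom set becomes finite up to equivalence --- is the right one, and it does suffice for every use the paper makes of the proposition, since Corollary~\ref{cor:finite_repr} is only ever invoked for games on $\theLO$-rational words and their factors, all of which have finite alphabets. In a write-up you should state the relativized version explicitly (or restrict $\FO$ to a finite label alphabet from the outset), but apart from that bookkeeping your proof is complete and correct.
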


\begin{corollary}
\label{cor:finite_repr}
Let $\Fragm$ be a fragment of bounded quantifier depth and $\Varset
\subseteq \Vars$ a finite set of variables. There exists a finite
subset $\Phi \subseteq \Fragm$ of formulae $\varphi$ with
$\fv(\varphi) \subseteq \Varset$ such that every formula $\psi \in
\Fragm$ with $\fv(\psi) \subseteq \Varset$ is equivalent to a formula
in $\Phi$.
\end{corollary}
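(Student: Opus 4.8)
The plan is to obtain this from Proposition~\ref{proposition:finite_repr_FO} by selecting, within $\Fragm$ itself, a representative for each equivalence class that $\Fragm$ actually realizes. Since $\Fragm$ has bounded quantifier depth, I would first fix $n \in \Nat$ with $\Fragm \subseteq \FO_n$ and apply Proposition~\ref{proposition:finite_repr_FO} to $n$ and $\Varset$. This yields a finite set $\Phi_0 \subseteq \FO_n$ of formulae whose free variables lie in $\Varset$ such that every formula in $\FO_n$ with free variables contained in $\Varset$ is equivalent to some member of $\Phi_0$; in particular, this applies to every $\psi \in \Fragm$ with $\fv(\psi) \subseteq \Varset$.

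Next, for each $\chi \in \Phi_0$ I would ask whether there is a formula of $\Fragm$ with free variables contained in $\Varset$ that is equivalent to $\chi$; whenever the answer is affirmative, I fix one such formula $\chi' \in \Fragm$. Setting $\Phi = \set{ \chi' | \chi \in \Phi_0,\ \chi' \text{ defined} }$ produces a subset of $\Fragm$, all of whose members have free variables contained in $\Varset$, and $\Phi$ is finite because $\Phi_0$ is.

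Finally I would verify the covering property. Given $\psi \in \Fragm$ with $\fv(\psi) \subseteq \Varset$, we have $\psi \in \FO_n$, so $\psi$ is equivalent to some $\chi \in \Phi_0$. But then $\psi$ itself is a formula of $\Fragm$ with free variables in $\Varset$ equivalent to $\chi$, so the representative $\chi'$ was defined, whence $\chi' \in \Phi$; moreover $\chi'$ is equivalent to $\chi$ and therefore to $\psi$. This establishes the claim.

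I do not expect any genuine obstacle here: the argument is a routine finite-covering plus choice. The only subtlety worth flagging is that the representatives must be drawn from $\Fragm$ rather than merely from $\Phi_0$ — formulae of $\Phi_0$ need not lie in $\Fragm$ even though every $\Fragm$-formula is equivalent to one of them — and it is precisely this that secures $\Phi \subseteq \Fragm$.
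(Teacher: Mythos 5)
Your argument is correct and is essentially identical to the paper's proof: both apply Proposition~\ref{proposition:finite_repr_FO} to a bound $n$ on the quantifier depth, then select for each representative in the resulting finite set an equivalent formula from $\Fragm$ (with free variables in $\Varset$) whenever one exists, and observe that any $\psi \in \Fragm$ itself witnesses that its class has such a representative. The subtlety you flag --- that representatives must be re-chosen inside $\Fragm$ --- is exactly the point of the paper's intermediate set $\Xi$.
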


\begin{proof}
  Let $n \in \Nat$ be a bound on the quantifier depth of $\Fragm$ and
  $\Psi \subseteq \FO_n$ one of the finite sets whose existence is
  guaranteed by Proposition~\ref{proposition:finite_repr_FO}. Let $\Xi \subseteq
  \Psi$ be the set of those $\xi \in \Psi$ for which there exists an
  equivalent formula $\varphi_\xi \in \Fragm$ with $\fv(\varphi_\xi)
  \subseteq \Varset$. We show the finite set $\Phi = \set{ \varphi_\xi
    | \xi \in \Xi }$ to have the desired property. Consider an
  arbitrary formula $\psi \in \Fragm$ with $\fv(\psi) \subseteq
  \Varset$. By choice of $\Psi$, there exists a formula $\xi
  \in \Psi$ which is equivalent to $\psi$. In particular, $\xi \in
  \Xi$. Thus, the formula $\varphi_\xi \in \Phi$ is equivalent to
  $\psi$ as well.
\end{proof}

\begin{proposition}
\label{prop:Duplicator_wins}
Let $\Fragm$ be a fragment of bounded quantifier depth and let
$\enc{u,\alpha},\enc{v,\beta}$ be $\Varset$-valuations. If
$\enc{u,\alpha} \models \varphi$ implies $\enc{v,\beta} \models
\varphi$ for all formulae $\varphi \in \Fragm$ with $\fv(\varphi)
\subseteq \Varset$, then $\enc{u,\alpha} \Dwins\Fragm \enc{v,\beta}$.
\end{proposition}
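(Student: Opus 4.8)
The plan is to prove Proposition~\ref{prop:Duplicator_wins} by describing an explicit strategy for Duplicator and verifying that it never leaves the set of configurations satisfying the hypothesis. Call a configuration $(\FragmG,\enc{u',\alpha'},\enc{v',\beta'})$ \emph{safe} if $\enc{u',\alpha'}\models\varphi$ implies $\enc{v',\beta'}\models\varphi$ for every $\varphi\in\FragmG$ with $\fv(\varphi)\subseteq\Varset'$ (where $\Varset'$ is the common variable set of the two valuations). The starting configuration $(\Fragm,\enc{u,\alpha},\enc{v,\beta})$ is safe by assumption. Safe configurations never meet Spoiler's winning condition, since that condition is precisely the negation of safety at the level of literals. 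Since $\Fragm$ has bounded quantifier depth, the $\Fragm$-game terminates after finitely many rounds, so it suffices to show that from any safe configuration Duplicator can move to another safe configuration.

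First I would fix a safe configuration $S=(\FragmG,\enc{u',\alpha'},\enc{v',\beta'})$ over variables $\Varset'$ and suppose Spoiler chooses a quantifier $\quant x$ with $\dis\quant x\FragmG\neq\emptyset$ and a quest $q$. I treat the case $\quant=\exists$ in detail (the other three are symmetric, using that a negated quantifier merely swaps the two valuations, and that for $\forall$ one argues dually with implications reversed). So $q\in\dom{u'}$ and Duplicator must find $r\in\dom{v'}$ with $(\dis\exists x\FragmG,\enc{u',\repl{\alpha'}xq},\enc{v',\repl{\beta'}xr})$ safe. By Corollary~\ref{cor:finite_repr} applied to the fragment $\dis\exists x\FragmG$ and the finite variable set $\Varset'\cup\{x\}$, there is a \emph{finite} set $\Phi\subseteq\dis\exists x\FragmG$ of formulae with free variables in $\Varset'\cup\{x\}$ such that every formula of $\dis\exists x\FragmG$ with those free variables is equivalent to one in $\Phi$. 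Let $\Psi\subseteq\Phi$ be the set of those $\psi\in\Phi$ with $\enc{u',\repl{\alpha'}xq}\models\psi$, and put $\chi=\bigwedge_{\psi\in\Psi}\psi$ (the empty conjunction being $\top$).

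The key observation is that $\exists x\,\chi\in\FragmG$: indeed each $\psi\in\Psi\subseteq\dis\exists x\FragmG$ means $\exists x\,\psi\in\FragmG$, and closure of $\FragmG$ under conjunction (together with its closure properties applied through the context $\exists x\,\circ$) puts $\exists x\,\chi$ into $\FragmG$; its free variables lie in $\Varset'$. Since $\enc{u',\alpha'}\models\exists x\,\chi$ (witnessed by $q$) and $S$ is safe, $\enc{v',\beta'}\models\exists x\,\chi$, so there is $r\in\dom{v'}$ with $\enc{v',\repl{\beta'}xr}\models\chi$, i.e.\ $\enc{v',\repl{\beta'}xr}\models\psi$ for all $\psi\in\Psi$. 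Duplicator replies with this $r$. To see the new configuration is safe, take any $\varphi\in\dis\exists x\FragmG$ with $\fv(\varphi)\subseteq\Varset'\cup\{x\}$ and $\enc{u',\repl{\alpha'}xq}\models\varphi$; by choice of $\Phi$ it is equivalent to some $\psi\in\Phi$, and $\enc{u',\repl{\alpha'}xq}\models\psi$ forces $\psi\in\Psi$, whence $\enc{v',\repl{\beta'}xr}\models\psi$ and so $\enc{v',\repl{\beta'}xr}\models\varphi$. This completes the inductive step, and therefore Duplicator wins, i.e.\ $\enc{u,\alpha}\Dwins\Fragm\enc{v,\beta}$.

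The main obstacle I anticipate is the case analysis over the four quantifier types: the universal and negated quantifiers require carefully tracking which valuation plays the role of ``$u$'' and which of ``$v$'', and for $\forall$ one works with the ``dual'' set $\Psi'$ of formulae \emph{falsified} on the $v$-side and the formula $\forall x\,\bigvee\neg\psi$, using closure under disjunction, negation, and the clause~\refenum{\ref{cond:negation}} on double negations to stay inside the fragment. The other subtlety is making the reduction to a finite set of formulae precise, which is exactly what Corollary~\ref{cor:finite_repr} (via Proposition~\ref{proposition:finite_repr_FO}) is for; without boundedness of the quantifier depth this step fails, which is why the hypothesis is essential. Everything else is bookkeeping with the syntactic closure properties in Definition~\ref{def:fragment}.
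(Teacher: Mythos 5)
Your proof is correct and is essentially the paper's argument: the same appeal to Corollary~\ref{cor:finite_repr}, the same conjunction of the satisfied representatives and the witness formula $\exists x\,\xi$ pushed through the context $\exists x\,\circ$, with the paper's induction on the quantifier-depth bound merely repackaged as maintenance of your ``safe'' invariant. One small caution on your sketch of the $\forall$ case: you should not invoke closure under negation (fragments need not be closed under negation, and condition~\refenum{\ref{cond:negation}} only removes double negations); the dual argument goes through without it by working with the disjunction $\bigvee\Psi'$ of the representatives \emph{falsified} at $\enc{v',\repl{\beta'}xq}$, which lies in the fragment under the context $\forall x\,\circ$ by closure under disjunction alone.
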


\begin{proof}
  Let $n \in \Nat$ be a bound on the quantifier depth of $\Fragm$. We
  proceed by induction on $n$.

  \basecase{$n=0$} By assumption, the configuration
  $(\Fragm,\enc{u,\alpha},\enc{v,\beta})$ does not satisfy Spoiler's
  winning condition. Since $\Fragm$ contains no quantified formulae,
  this implies $\enc{u,\alpha} \Dwins\Fragm \enc{v,\beta}$.

  \inductivestep{$n>0$} Suppose that Spoiler chooses quantifier
  $\quant x$ and quest $q$ in configuration
  $(\Fragm,\enc{u,\alpha},\enc{w,\gamma})$. We are looking for a
  suitable reply $r$ for Duplicator. We only demonstrate the case
  $\quant = \exists$, where $q \in \dom{u}$. By
  Corollary~\ref{cor:finite_repr}, there exists a finite set $\Phi
  \subseteq \dis\exists x\Fragm$ of formulae $\varphi$ with
  $\fv(\varphi) \subseteq \Varset\cup\{x\}$ such that every formula
  $\psi \in \dis\exists x\Fragm$ with $\fv(\psi) \subseteq
  \Varset\cup\{x\}$ is equivalent to a formula in $\Phi$. We consider
  the formulae
\begin{equation*}
	\xi = \bigwedge\nolimits_{\varphi\in\Phi, \enc{u,\repl\alpha x q} \models \varphi} \varphi
	\qquad\text{and}\qquad
	\chi = \exists x\,\xi \,.
\end{equation*}
Notice that $\xi \in \dis\exists x\Fragm$ and $\chi \in \Fragm$. Due
to the choice of $\xi$ and $\chi$, we have $\enc{u,\repl\alpha x q}
\models \xi$ and hence $\enc{u,\alpha} \models \chi$. Using the
assumption of the proposition, we conclude $\enc{v,\beta} \models
\chi$. Thus, there is an $r \in \dom{v}$ such that $\enc{v,\repl\beta
  x r} \models \xi$. We show that $r$ is a suitable reply for
Duplicator, i.e., $r$ satisfies $\enc{u,\repl\alpha x q}
\Dwins{\dis\exists x\Fragm} \enc{v,\repl\beta x r}$. By the induction
hypothesis and due to the choice of $\Phi$, it suffices to show that
all formulae $\varphi \in \Phi$ with $\enc{u,\repl\alpha x q} \models
\varphi$ also satisfy $\enc{v,\repl\beta x r} \models \varphi$. This follows immediately from the choice of $\xi$ and
$\enc{v,\repl\beta x r} \models \xi$.
\end{proof}

\subsection{Missing Proofs for Theorem~\ref{thm:EF_unbounded}}
\label{apx:EF_unbounded}

{
\renewcommand*{\thetheorem}{\ref{thm:EF_unbounded}}
\TheoremEFUnbounded*
%\addtocounter{theorem}{-1}
}

\noindent
In the course of proving this theorem, we have to explicitly
construct winning strategies for Duplicator several times. The general
pattern behind the constructions always consists of the following four
steps (cf. the proof of Theorem~\ref{thm:EF_unbounded}):
\begin{enumerate}
\item We declare a class of configurations of the game which we call
  \emph{good} for Duplicator.
\item We show that the initial configuration of the game is good.
\item We show that no good configuration satisfies the winning condition
  of Spoiler.
\item We show that for every good configuration and any choice of a
  quantifier and a quest by Spoiler, there exists a reply by
  Duplicator such that the resulting configuration is again good.
\end{enumerate}
Whenever these four steps can be implemented, Duplicators winning
strategy is to simply stay within the class of good
configurations.% We demonstrate the use of this technique in the proof of the following proposition.

\begin{proposition}
\label{prop:transitivity}
Let $\Fragm$ be a fragment. The relation $\Dwins\Fragm$ is transitive.
\end{proposition}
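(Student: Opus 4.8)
The plan is to \emph{compose} winning strategies. Assume $\enc{u,\alpha}\Dwins\Fragm\enc{v,\beta}$ and $\enc{v,\beta}\Dwins\Fragm\enc{w,\gamma}$, and fix winning strategies $\sigma_1$ and $\sigma_2$ for Duplicator in the $\Fragm$-games started in $(\Fragm,\enc{u,\alpha},\enc{v,\beta})$ and $(\Fragm,\enc{v,\beta},\enc{w,\gamma})$, respectively. I will describe a strategy $\sigma$ for Duplicator in the $\Fragm$-game started in $(\Fragm,\enc{u,\alpha},\enc{w,\gamma})$ following the four-step pattern above. While playing on $(u,w)$, Duplicator privately maintains a play $\rho_1$ of the $(u,v)$-game consistent with $\sigma_1$ and a play $\rho_2$ of the $(v,w)$-game consistent with $\sigma_2$; a configuration of the $(u,w)$-game is \emph{good} if such $\rho_1,\rho_2$ can be kept in sync with it as described next. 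Since all three plays react to the same sequence of quantifiers picked by Spoiler, at every stage they share the same reduct $\FragmG$, the same finite set $\Varset$ of assigned variables, and the same number of performed negated-quantifier moves; hence in each of the three configurations it is unambiguous which of the two slots currently carries the valuation on $u$, on $v$, and on $w$. The synchronization invariant is: the valuation on $u$ in the $(u,w)$-configuration equals that on $u$ in $\rho_1$; the valuation on $v$ in $\rho_1$ equals that on $v$ in $\rho_2$; and the valuation on $w$ in $\rho_2$ equals that on $w$ in the $(u,w)$-configuration. The initial configuration $(\Fragm,\enc{u,\alpha},\enc{w,\gamma})$ is good, with $\rho_1$, $\rho_2$ started in $(\Fragm,\enc{u,\alpha},\enc{v,\beta})$ and $(\Fragm,\enc{v,\beta},\enc{w,\gamma})$.

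Next I would show that goodness is preserved. Suppose Spoiler picks $\quant x$ and a quest $q$ in a good $(u,w)$-configuration; then $q$ lies in $\dom u$ or in $\dom w$. In the first case Duplicator relays the move to $\rho_1$ as a Spoiler move with quest $q$, obtains a reply $r\in\dom v$ via $\sigma_1$, relays $r$ to $\rho_2$ as a Spoiler quest, obtains a reply $s\in\dom w$ via $\sigma_2$, and answers $s$ in the $(u,w)$-game; in the second case $\rho_2$ is processed before $\rho_1$. One checks that the relayed moves are legal Spoiler moves (the reduct is $\FragmG$ in all three games, and by the shared parity the relevant word sits in the slot that $\quant$ is allowed to target), that $s$ lands in the slot Duplicator must answer in, and that afterwards the invariant is restored: $x$ is mapped to $q$ on the questioned word, to $r$ on $v$ in both private games, and to $s$ on the answered word. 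Duplicator is never stuck, since $\dis\quant x\FragmG\neq\emptyset$ (Spoiler just used it) and $\sigma_1,\sigma_2$, being winning, always reply to legal Spoiler moves.

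Finally I would verify that $\sigma$ wins. By the previous paragraph the $(u,w)$-game never halts with Duplicator unable to move, and infinite plays are won by Duplicator, so it only remains to rule out that a good configuration meets Spoiler's winning condition. But a configuration reached along a play consistent with a winning strategy for Duplicator cannot meet Spoiler's winning condition. Hence in a good $(u,w)$-configuration with reduct $\FragmG$, assigned variables $\Varset$, and valuations $\enc{u,\alpha'}$, $\enc{w,\gamma'}$: from $\rho_1$, every literal $\varphi\in\FragmG$ with $\fv(\varphi)\subseteq\Varset$ holding at $\enc{u,\alpha'}$ also holds at $\enc{v,\beta'}$; from $\rho_2$, it then holds at $\enc{w,\gamma'}$. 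Composing these two implications shows the $(u,w)$-configuration does not meet Spoiler's winning condition either, so $\sigma$ is a winning strategy and $\enc{u,\alpha}\Dwins\Fragm\enc{w,\gamma}$.

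The hard part will be the slot bookkeeping under negated quantifiers. Negated quantifiers interchange the two valuations, and one must check that this swap occurs simultaneously in all three games, that $u$, $v$, $w$ therefore always occupy mutually consistent slots, and that the word Spoiler is made to quest in within $\rho_1$ or $\rho_2$ is precisely one that the chosen quantifier is permitted to target there. Correspondingly, the direction of the literal implication in Spoiler's winning condition flips with the parity of negated moves, but uniformly across the three games, so the composition of implications still goes through.
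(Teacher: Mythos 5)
Your proof is correct and takes essentially the same route as the paper: both compose Duplicator's play through an intermediate valuation on $v$, the paper stating the invariant relationally as the existence of $\enc{v,\beta'}$ with $\enc{u,\alpha'} \Dwins\FragmG \enc{v,\beta'}$ and $\enc{v,\beta'} \Dwins\FragmG \enc{w,\gamma'}$, while you phrase the same invariant as relaying each move through two private plays consistent with fixed winning strategies. The slot bookkeeping under negated quantifiers that you flag is handled correctly and works out exactly as you describe, since the swap is applied uniformly to all three configurations.
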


\begin{proof}
  Consider $\Varset$-valuations
  $\enc{u,\alpha},\enc{v,\beta},\enc{w,\gamma}$ with $\enc{u,\alpha}
  \Dwins\Fragm \enc{v,\beta}$ and $\enc{v,\beta} \Dwins\Fragm
  \enc{w,\gamma}$. We have to show $\enc{u,\alpha} \Dwins\Fragm
  \enc{w,\gamma}$. For this purpose, we adhere closely to the four
  steps of constructing a winning strategy for Duplicator.
\begin{enumerate}
\item We call a $\Varset[Y]$-configuration
  $(\FragmG,\enc{u,\alpha'},\enc{w,\gamma'})$ \emph{good} if there
  exists a $\Varset[Y]$-valuation $\enc{v,\beta'}$ such that
  $\enc{u,\alpha'} \Dwins\FragmG \enc{v,\beta'}$ and $\enc{v,\beta'}
  \Dwins\FragmG \enc{w,\gamma'}$.
\item Clearly, the configuration $(\Fragm,\enc{u,\alpha},\enc{w,\gamma})$
  is good.
\item Let $S=(\FragmG,\enc{u,\alpha'},\enc{w,\gamma'})$ be a good
  $\Varset[Y]$-configuration and $\enc{v,\beta'}$ a witnessing
  $\Varset[Y]$-valuation. For every literal $\varphi \in \FragmG$ with
  $\fv(\varphi) \subseteq \Varset[Y]$ and $\enc{u,\alpha'} \models
  \varphi$, the choice of $\beta'$ implies $\enc{v,\beta'} \models
  \varphi$ and in turn $\enc{w,\gamma'} \models \varphi$. Thus, $S$
  does not satisfy Spoiler's winning condition.
\item Suppose that Spoiler chooses quantifier $\quant x$ and quest $q$
  in a good configuration $S =
  (\FragmG,\enc{u,\alpha'},\enc{w,\gamma'})$. We are looking for a
  suitable reply $r$ for Duplicator. We only demonstrate the case
  $\quant = \exists$, where $q \in \dom{u}$. Let $\enc{v,\beta'}$ be a
  valuation witnessing that $S$ is good. Since $\enc{u,\alpha'}
  \Dwins\FragmG \enc{v,\beta'}$ and $\enc{v,\beta'} \Dwins\FragmG
  \enc{w,\gamma'}$, there exist a $p \in \dom{v}$ such that
  $\enc{u,\repl{\alpha'} x q} \Dwins{\dis\exists x\FragmG}
  \enc{v,\repl{\beta'} x p}$ and an $r \in \dom{w}$ such that
  $\enc{v,\repl{\beta'} x p} \Dwins{\dis\exists x\FragmG}
  \enc{w,\repl{\gamma'} x r}$. Thus, $\move S\exists x q r$ is a good
  configuration.  \qedhere
\end{enumerate}
\end{proof}

\noindent
The following two lemmas are auxiliary statements for the proof of
Proposition~\ref{prop:concatenation}.

\begin{lemma}
\label{lemma:discard_vars}
Let $\Fragm$ be a fragment, $\enc{u,\alpha},\enc{v,\beta}$
$\Varset$-valuations and $\enc{u,\alpha'},\enc{v,\beta'}$ their
respective restrictions to $\Varset'$-valuations for some subset
$\Varset'\subseteq\Varset$. If $\enc{u,\alpha} \Dwins\Fragm
\enc{v,\beta}$, then $\enc{u,\alpha'} \Dwins\Fragm \enc{v,\beta'}$.
\end{lemma}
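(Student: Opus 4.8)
The plan is to follow the four-step recipe for exhibiting a winning strategy for Duplicator described just before Proposition~\ref{prop:transitivity}, applied to the $\Fragm$-game on $(u,v)$ started from the configuration $(\Fragm,\enc{u,\alpha'},\enc{v,\beta'})$; this is the same pattern as in the proof of that proposition. The guiding intuition is that forgetting the values of the variables in $\Varset\setminus\Varset'$ can only \emph{weaken} Spoiler's winning condition, so Duplicator can keep running her given winning strategy for $\enc{u,\alpha}\Dwins\Fragm\enc{v,\beta}$ while silently maintaining consistent positions for the forgotten variables.

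Concretely, I would call a configuration $(\FragmG,\enc{u,\gamma},\enc{v,\delta})$ over a finite variable set $\Varset[Y]\supseteq\Varset'$ \emph{good} if there exist $(\Varset[Y]\cup(\Varset\setminus\Varset'))$-valuations $\enc{u,\hat\gamma}$ and $\enc{v,\hat\delta}$ with $\hat\gamma\restrict{\Varset[Y]}=\gamma$, $\hat\delta\restrict{\Varset[Y]}=\delta$ and $\enc{u,\hat\gamma}\Dwins{\FragmG}\enc{v,\hat\delta}$. The initial configuration $(\Fragm,\enc{u,\alpha'},\enc{v,\beta'})$ over $\Varset'$ is good, witnessed by $\hat\gamma=\alpha$, $\hat\delta=\beta$ and the hypothesis $\enc{u,\alpha}\Dwins\Fragm\enc{v,\beta}$. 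No good configuration meets Spoiler's winning condition: if $\varphi\in\FragmG$ is a literal with $\fv(\varphi)\subseteq\Varset[Y]$ and $\enc{u,\gamma}\models\varphi$, then $\hat\gamma$ agrees with $\gamma$ on $\fv(\varphi)$, so $\enc{u,\hat\gamma}\models\varphi$; since a configuration won by Duplicator cannot meet Spoiler's winning condition, $\enc{v,\hat\delta}\models\varphi$, and restricting back yields $\enc{v,\delta}\models\varphi$.

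For the last step, suppose Spoiler picks $\quant x$ and a quest $q$ in a good configuration $S=(\FragmG,\enc{u,\gamma},\enc{v,\delta})$ with witnesses $\hat\gamma,\hat\delta$; I treat only $\quant=\exists$, with $q\in\dom{u}$, the three remaining cases being analogous. Since $\enc{u,\hat\gamma}\Dwins{\FragmG}\enc{v,\hat\delta}$, Duplicator's strategy there answers the move ``$\exists x$, $q$'' by some $r\in\dom{v}$ with $\enc{u,\repl{\hat\gamma}xq}\Dwins{\dis\exists x\FragmG}\enc{v,\repl{\hat\delta}xr}$; legality and availability of these moves transfer because the two games have the same underlying words $u,v$ and the same fragment reducts. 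I let Duplicator answer by this $r$, and then $\repl{\hat\gamma}xq$, $\repl{\hat\delta}xr$ witness that $\move S\exists x q r=(\dis\exists x\FragmG,\enc{u,\repl\gamma xq},\enc{v,\repl\delta xr})$ is good again, since updating the $x$-coordinate commutes with restriction to $\Varset[Y]\cup\{x\}$.

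I do not expect a genuine obstacle: this is a routine instance of the standard pattern, and completing the four steps yields $\enc{u,\alpha'}\Dwins\Fragm\enc{v,\beta'}$. The only point worth spelling out carefully is the final verification — that $\repl{\hat\gamma}xq$ restricts on $\Varset[Y]\cup\{x\}$ to $\repl\gamma xq$, and likewise for $\delta$ — in the corner case where the freshly quantified variable $x$ coincides with one of the already-assigned-but-forgotten variables in $\Varset\setminus\Varset'$; unwinding the definitions shows it holds regardless.
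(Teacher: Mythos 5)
Your proposal is correct and follows essentially the same route as the paper's proof: the same four-step ``good configurations'' pattern, with a good configuration defined by the existence of extending valuations on a larger variable set that are related by $\Dwins{\FragmG}$, and the same transfer arguments for the winning condition and for Duplicator's replies. The corner case you flag (re-quantifying a forgotten variable) is handled implicitly in the paper by allowing the witnessing superset of variables to be arbitrary, and your explicit check resolves it the same way.
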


\begin{proof}
  We adhere closely to the four steps of constructing a winning
  strategy for Duplicator.
\begin{enumerate}
\item A $\Varset[Y]'$-configuration
  $(\FragmG,\enc{u,\gamma'},\enc{v,\delta'})$ is \emph{good} if there
  exist a finite set $\Varset[Y] \subseteq \Vars$ with $\Varset[Y]
  \supseteq \Varset[Y]'$ and $\Varset[Y]'$-valuations
  $\enc{u,\gamma},\enc{v,\delta}$ extending
  $\enc{u,\gamma'},\enc{v,\delta'}$ such that $\enc{u,\gamma}
  \Dwins\Fragm \enc{v,\delta}$.
\item Clearly, the configuration $(\Fragm,\enc{u,\alpha'},\enc{v,\beta'})$
  is good.
\item Let $S = (\FragmG,\enc{u,\gamma'},\enc{v,\delta'})$ be a good
  $\Varset[Y]'$-configuration and $\enc{u,\gamma},\enc{v,\delta}$
  witnessing $\Varset[Y]$-valuations. Consider a literal $\varphi \in
  \FragmG$ with $\fv(\varphi) \subseteq \Varset[Y]'$ and
  $\enc{u,\gamma'} \models \varphi$. We have to show $\enc{v,\delta'}
  \models \varphi$. Since $\gamma'$ is a restriction of $\gamma$, we
  obtain $\enc{u,\gamma} \models \varphi$. The assumption
  $\enc{u,\gamma} \Dwins\FragmG \enc{v,\delta}$ implies
  $\enc{v,\delta} \models \varphi$. As $\delta'$ is a restriction of
  $\delta$, we conclude $\enc{v,\delta'} \models \varphi$. Thus, $S$
  does not satisfy Spoiler's winning condition.
\item Suppose that Spoiler chooses quantifier $\quant x$ and quest $q$
  in a good configuration $S =
  (\FragmG,\enc{u,\gamma'},\enc{v,\delta'})$. We are looking for a
  suitable reply $r$ for Duplicator. We only demonstrate the case
  $\quant = \exists$, where $q \in \dom{u}$. Let
  $\enc{u,\gamma},\enc{v,\delta}$ be configurations witnessing that $S$ is
  good. Due to $\enc{u,\gamma} \Dwins\FragmG \enc{v,\delta}$, there is
  an $r \in \dom{v}$ such that $\enc{u,\repl\gamma x q}
  \Dwins{\dis\exists x\FragmG} \enc{v,\repl\delta x r}$. Since
  $\enc{u,\repl{\gamma'} x q},\enc{v,\repl{\delta'} x r}$ are
  restrictions of $\enc{u,\repl\gamma x q},\enc{v\repl\delta x r}$,
  the configuration $\move S\exists x q r$ is good.  \qedhere
\end{enumerate}
\end{proof}

\begin{lemma}
\label{lemma:discard_quant}
Let $\Fragm$ be a fragment and $\enc{u,\alpha},\enc{v,\beta}$
$\Varset$-valuations with $\enc{u,\alpha} \Dwins\Fragm
\enc{v,\beta}$. Let $\quant \in \Quantifiers$ and $x \in \Vars
\setminus \Varset$ such that $\dis\quant x\Fragm \not= \emptyset$.
\begin{enumerate}
\item If $\quant \in \{\exists,\forall\}$ is a non-negated quantifier,
  then $\enc{u,\alpha} \Dwins{\dis\quant x\Fragm} \enc{v,\beta}$.
\item If $\quant \in \{\neg\exists,\neg\forall\}$ is a negated
  quantifier, then $\enc{v,\beta} \Dwins{\dis\quant x\Fragm}
  \enc{u,\alpha}$.
\end{enumerate}
\end{lemma}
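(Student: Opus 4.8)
The statement is Lemma~\ref{lemma:discard_quant}, which says that passing to a reduct $\dis\quant x\Fragm$ preserves Duplicator's winning strategy (possibly with the two words swapped, when $\quant$ is negated). My plan is to follow the standard four-step pattern for constructing a winning strategy for Duplicator that is used throughout this section. For part~(1), with $\quant \in \{\exists,\forall\}$, I would declare a $\Varset[Y]$-configuration $(\FragmG,\enc{u,\gamma},\enc{v,\delta})$ of the $\dis\quant x\Fragm$-game to be \emph{good} precisely when $\enc{u,\gamma} \Dwins{\FragmG'} \enc{v,\delta}$, where $\FragmG'$ is the corresponding iterated reduct of $\Fragm$ obtained by prepending $\quant x$ to the sequence of quantifiers defining $\FragmG$; equivalently, good means that the very same configuration, but viewed in the $\Fragm$-game after Spoiler's (hypothetical) opening choice of $\quant x$, is won by Duplicator. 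The initial configuration $(\dis\quant x\Fragm, \enc{u,\alpha},\enc{v,\beta})$ is good by the hypothesis $\enc{u,\alpha} \Dwins\Fragm \enc{v,\beta}$ together with the observation that from $(\Fragm,\enc{u,\alpha},\enc{v,\beta})$ Duplicator's strategy must, in particular, survive Spoiler's opening move $\quant x$ at \emph{some} quest --- but here a small subtlety appears, which I address below.

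**Carrying it out.** Step~3 (good configurations do not meet Spoiler's winning condition) is essentially immediate: the literal witnessing condition is the same in both games and is already ruled out by $\enc{u,\gamma} \Dwins{\FragmG'} \enc{v,\delta}$ via $\FragmG'_0 = \FragmG_0$. Step~4 is the heart: given a good configuration and Spoiler's choice of quantifier $\quant' x'$ and quest $q$ in the $\dis\quant x\Fragm$-game, I use the fact that $\dis{\quant'}{x'}(\dis\quant x\Fragm)$, viewed as a reduct of $\FragmG'$, is itself a reduct (the iterated-reduct structure is transparent), so Duplicator's strategy in the $\Fragm$-game starting from $\FragmG'$ supplies a reply $r$ keeping the resulting configuration good. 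Part~(2) is handled in exactly the same way, except that the definition of $\move S\quant x q r$ for a negated quantifier additionally swaps the two valuations; this is precisely why the conclusion of~(2) has $\enc{v,\beta}$ and $\enc{u,\alpha}$ exchanged. Concretely, one declares good to mean that after the hypothetical opening move $\neg\exists x$ (resp.\ $\neg\forall x$) --- which flips the roles of the words --- Duplicator still wins; since we start the $\dis\quant x\Fragm$-game on $(v,\beta)$ versus $(u,\alpha)$, this is consistent.

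**Where the difficulty lies.** The only genuinely delicate point is the very first move: in the $\Fragm$-game, after Spoiler commits to the quantifier $\quant x$, he \emph{also} chooses a quest $q \in \dom u$ (or $\dom v$), and Duplicator's winning strategy in $(\Fragm,\enc{u,\alpha},\enc{v,\beta})$ only tells us she can reply to \emph{each} such $q$; it does not hand us a single ``reduct configuration'' directly. The resolution is that we never actually need Duplicator to reply to a specific $q$ at the outset: the class of good configurations for the $\dis\quant x\Fragm$-game should be defined in terms of the $\Dwins{\FragmG'}$ relation rather than in terms of a continuation of a single $\Fragm$-play, and then $\enc{u,\alpha} \Dwins\Fragm \enc{v,\beta}$ already \emph{is} the statement that $\enc{u,\alpha} \Dwins{\dis\quant x\Fragm} \enc{v,\beta}$ after one correctly unwinds what it means for Duplicator to win in the $\Fragm$-game under the first-move rules summarised in Table~\ref{tab:masterplan} --- namely, that for every opening quantifier choice $\quant x$ with $\dis\quant x\Fragm \neq \emptyset$ and every opening quest, the resulting $\dis\quant x\Fragm$-configuration (with words swapped iff $\quant$ is negated) is won by Duplicator. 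So in fact the lemma is almost a direct reading of the definition of $\Dwins\Fragm$; the proof is just the bookkeeping needed to line up ``Duplicator wins the $\Fragm$-game'' with ``Duplicator wins the $\dis\quant x\Fragm$-game after the words have been (possibly) interchanged,'' and the four-step template makes that bookkeeping routine. I expect no hard obstacle, only the need to be careful about the valuation swap in the negated cases and about the side condition $x \notin \Varset$ (which guarantees the extended valuations are well-defined and that $\dis\quant x\Fragm \neq \emptyset$ keeps the reduct a genuine fragment).
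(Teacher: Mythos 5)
There is a genuine gap, and it sits exactly at the point you describe as ``almost a direct reading of the definition of $\Dwins\Fragm$''. Unwinding $\enc{u,\alpha} \Dwins\Fragm \enc{v,\beta}$ at an opening move $\exists x$ with quest $q \in \dom{u}$ yields a reply $r \in \dom{v}$ with $\enc{u,\repl\alpha x q} \Dwins{\dis\exists x\Fragm} \enc{v,\repl\beta x r}$ --- a configuration whose valuations have been \emph{extended} by $x$. The conclusion of the lemma concerns the configuration $(\dis\exists x\Fragm,\enc{u,\alpha},\enc{v,\beta})$ with the \emph{original} $\Varset$-valuations, which is a different configuration; it is not reachable by playing a round of the $\Fragm$-game, so your ``the very same configuration, viewed in the $\Fragm$-game after the hypothetical opening move'' conflates two distinct objects. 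Bridging them is precisely Lemma~\ref{lemma:discard_vars} (restricting valuations to a subset of the variables preserves Duplicator's win), which is not a formality --- it needs its own good-configuration argument --- and which you never invoke. Your four-step setup does not supply a substitute: since the ``iterated reduct of $\Fragm$ obtained by prepending $\quant x$'' is the same set of formulae as $\FragmG$ itself, your notion of \emph{good} collapses to ``Duplicator wins from this configuration'', and step~2 (the initial configuration is good) becomes literally the statement to be proven. The paper's proof is short but not content-free: fix an \emph{arbitrary} quest $q$, obtain Duplicator's reply $r$, and then apply Lemma~\ref{lemma:discard_vars} to forget $x$ again.

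A second omission: when $\dom{u} = \emptyset$ (for $\quant = \exists$; dually for the other quantifiers) there is no opening quest at all, the unwinding is vacuous, and the above argument produces nothing. The paper treats this case separately: if $\dom{v}$ is also empty the claim is trivial, and otherwise one checks directly that Spoiler's literal winning condition fails in $(\dis\exists x\Fragm,\enc{u,\alpha},\enc{v,\beta})$, the key observation being that a literal $\varphi \in \dis\exists x\Fragm$ with $\fv(\varphi) \subseteq \Varset$ satisfies $x \notin \fv(\varphi)$ and hence already lies in $\Fragm$, so the hypothesis $\enc{u,\alpha} \Dwins\Fragm \enc{v,\beta}$ applies to it. Your identification of the valuation swap in the negated cases is correct, but both gaps persist there as well.
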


\begin{proof}
  We only demonstrate the case $\quant = \exists$. First, suppose that
  $\dom{u} \not= \emptyset$. Let $q \in \dom{u}$ be arbitrary. Due to
  the assumption $\enc{u,\alpha} \Dwins\Fragm \enc{v,\beta}$, there
  exists an $r \in \dom{v}$ such that $\enc{u,\repl\alpha x q}
  \Dwins{\dis\exists x\Fragm} \enc{v,\repl\beta x r}$. Since $\alpha$
  and $\beta$ are the restrictions of $\repl\alpha x q$ and
  $\repl\beta x r$ to $\Varset$, respectively,
  Lemma~\ref{lemma:discard_vars} yields the claim.

  Now, assume that $\dom{u} = \emptyset$. If $\dom{v} = \emptyset$ as
  well, then $\enc{u,\alpha} = \enc{v,\beta}$ and the claim is
  trivial. Thus, we further assume $\dom{v} \not= \emptyset$. Let
  $\varphi \in \dis\exists x\Fragm$ be a literal with $\fv(\varphi)
  \subseteq \Varset$ and $\enc{u,\alpha} \models \varphi$. Since $x
  \not\in \fv(\varphi)$, we have $\varphi \in \Fragm$. Thus,
  $\enc{u,\alpha} \Dwins\Fragm \enc{v,\beta}$ implies $\enc{v,\beta}
  \models \varphi$. Hence, $S = (\dis\exists
  x\Fragm,\enc{u,\alpha},\enc{v,\beta})$ does not satisfy Spoiler's
  winning condition. 
%  We show that he even cannot choose both a
%  quantifier $\tilde\quant y$ and a quest $q$ in configuration $S$. If
%  Spoiler chooses $\tilde\quant\in\{\exists,\neg\forall\}$, he cannot
%  choose a quest $q$ and immediately loses. However, if he could
%  choose $\tilde\quant\in\{\forall,\neg\exists\}$, we had
%  $\tilde\quant y\colon \top \in \dis\exists x\Fragm$ and hence
%  $\tilde\quant y\colon \top \in \Fragm$ because $x \not\in
%  \fv(\tilde\quant y\colon \top)$. But this would contradict
%  $\enc{u,\alpha} \Dwins\Fragm \enc{v,\beta}$. Thus, Spoiler cannot do
%  any move in configuration $S$ and Duplicator immediately wins, i.e.,
%  $\enc{u,\alpha} \Dwins{\dis\exists x\Fragm} \enc{v,\beta}$.
\end{proof}

{
\renewcommand*{\thetheorem}{\ref{prop:concatenation}}
\PropositionConcatenation*
\addtocounter{theorem}{-1}
}

\begin{proof}
  Let $u=u_1\dotsm u_k$, $\alpha=\alpha_1\cup\dotsb\cup\alpha_k$,
  $v=v_1\dotsm v_k$, and $\beta=\beta_1\cup\dotsb\cup\beta_k$. We have
  to show $\enc{u,\alpha} \Dwins\Fragm \enc{v,\beta}$. For this
  purpose, we adhere closely to the four steps of constructing a
  winning strategy for Duplicator as described above.
\begin{enumerate}
\item A $\Varset[Y]$-configuration
  $(\FragmG,\enc{u,\alpha'},\enc{v,\beta'})$ is \emph{good} if there
  exist a partition $\Varset[Y] =
  \Varset[Y]_1\uplus\dotsb\uplus\Varset[Y]_k$ and
  $\Varset[Y]_i$-valuations $\enc{u_i,\alpha'_i},\enc{v_i,\beta'_i}$
  for $i \in [1,k]$ such that
  $\alpha'=\alpha'_1\cup\dotsb\cup\alpha'_k$ and
  $\beta'=\alpha'_1\cup\dotsb\cup\alpha'_k$ as well as
  $\enc{u_i,\alpha'_i} \Dwins\FragmG \enc{v_i,\beta'_i}$ for each $i
  \in [1,k]$.

\item Clearly, the configuration $(\Fragm,\enc{u,\alpha},\enc{v,\beta})$
  is good.

\item Let $(\FragmG,\enc{u,\alpha'},\enc{v,\beta'})$ be a good
  $\Varset[Y]$-configuration and $\enc{u_i,\alpha'_i},\enc{v_i,\beta'_i}$
  witnessing $\Varset[Y]_i$-valuations for $i \in [1,k]$. We have to
  show that all literals $\varphi \in \FragmG$ with $\fv(\varphi)
  \subseteq \Varset[Y]$ and $\enc{u,\alpha'} \models \varphi$ also
  satisfy $\enc{v,\beta'} \models \varphi$. The only literals for
  which this is a non-trivial task are $\suc(x,y)$, $\min(x)$ and
  $\max(x)$ as well as their negations. We only demonstrate the
  implication for $\suc(x,y)$. The cases $\min(x)$ and $\max(x)$ as well
  as the negative literals are treated similarly.

  Consider the unique $i,j \in [1,k]$ with $x \in \Varset[Y]_i$ and $y
  \in \Varset[Y]_j$. Since $\enc{u,\alpha'} \models \suc(x,y)$, the
  case $i>j$ cannot occur. The case $i=j$ is trivial. Henceforth, we
  assume $i < j$. We first conclude $\enc{u_i,\alpha'_i} \models
  \max(x)$, $\enc{u_\ell,\alpha'_\ell} \models \mty$ for all $\ell \in
  [i+1,j-1]$, and $\enc{u_j,\alpha'_j} \models \min(y)$. The
  $\suc$-stability of $\FragmG$ implies $\max(x),\min(y),\mty \in
  \FragmG$. Since $\enc{u_\ell,\alpha'_\ell} \Dwins\FragmG
  \enc{v_\ell,\beta'_\ell}$ for all $\ell \in [1,k]$, we obtain
  $\enc{v_i,\beta'_i} \models \max(x)$, $\enc{v_\ell,\beta'_\ell}
  \models \mty$ for all $\ell \in [i+1,j-1]$, and $\enc{v_j,\beta'_j}
  \models \min(y)$. Finally, we conclude $\enc{v,\beta'} \models
  \suc(x,y)$.

\item Suppose that Spoiler chooses quantifier $\quant x$ and quest $q$
  in a good configuration $S=(\FragmG,\enc{u,\alpha'},\enc{v,\beta'})$. We
  are looking for a suitable reply $r$ for Duplicator. We only
  demonstrate the case $\quant = \exists$, where $q \in \dom{u}$. Let
  $\enc{u_i,\alpha'_i},\enc{v_i,\beta'_i}$ for $i \in [1,k]$ be
  valuations witnessing that $S$ is good. There is a unique $i \in
  [1,k]$ such that $q \in \dom{u_i}$. Let $r \in \dom{v_i}$ be such
  that $\enc{u_i,\repl{\alpha'_i} x q} \Dwins{\dis\exists x\FragmG}
  \enc{v_i,\repl{\beta'_i} x r}$. Due to
  Lemmas~\ref{lemma:discard_vars} and~\ref{lemma:discard_quant}, we
  further have $\enc{u_\ell,\alpha''_\ell} \Dwins{\dis\exists
    x\FragmG} \enc{v_\ell,\beta''_\ell}$ for all $\ell \in [1,k]$ with
  $\ell\not=i$, where $\alpha''_\ell$ and $\beta''_\ell$ are the
  restrictions of $\alpha'_\ell$ and $\beta'_\ell$ to
  $\Varset[Y]_\ell\setminus\{x\}$, respectively. Putting $\alpha''_i =
  \repl{\alpha'_i} x q$ and $\beta''_i = \repl{\beta'_i} x r$, we
  obtain
\begin{equation*}
	\repl{\alpha'} x q = \alpha''_1 \cup \dotsb \cup \alpha''_k
	\qquad\text{and}\qquad
	\repl{\beta'} x q = \beta''_1 \cup \dotsb \cup \beta''_k \,.
\end{equation*}
Thus, $\move S\exists x q r$ is a good configuration and hence $r$ is a
suitable reply for Duplicator.\qedhere
\end{enumerate}
\end{proof}

{
\renewcommand*{\thetheorem}{\ref{lemma:exponentiation}}
\LemmaExponentiation*
%\addtocounter{theorem}{-1}
}

\begin{proof}
  We only show the claim $u^m \Dwins\Fragm v^\theLO$. The proof of the
  other claim is similar. We proceed by induction on $n$.

  \basecase{$n=0$} There are at most six literals $\varphi \in \Fragm$
  without free variables, namely $\top$, $\bot$ and $\mty$ as well as
  their negations. For all of them, one easily shows that $u^m \models
  \varphi$ implies $v^\theLO \models \varphi$. Thus, the configuration
  $(\Fragm,u^m,v^\theLO)$ does not satisfy Spoiler's winning
  condition. Since $\Fragm$ contains no quantified formulae,
  Duplicator immediately wins in this configuration.

  \inductivestep{$n>0$} We assume that $\dom{u^m} = \dom{u}\times
  [1,m]$. Let $(P,\leq_P)$ be a linear ordering of isomorphism type
  $\theLO$ such that $\dom{v^\theLO} = \dom{v} \times P$. For every $p \in P$
  let $\cev\tau_p$ and $\vec\tau_p$ be the order types of the
  suborders of $(P,\leq_P)$ induced by the open intervals
  $(-\infty,p)$ and $(p,+\infty)$, respectively. Then $\theLO =
  \cev\tau_p + 1 + \vec\tau_p$.

  Suppose that Spoiler chooses quantifier $\quant x$ and quest $q$ in
  configuration $(\Fragm,u^m,v^\theLO)$. We are looking for a suitable
  reply $r$ for Duplicator. We only demonstrate the cases $\quant =
  \exists$ and $\quant = \forall$. Recall that
  Lemma~\ref{lemma:discard_quant} implies $u \Dwins{\dis\quant
    x\Fragm} v$ in these two cases.

  \resetcases \case{$\quant = \exists$, where $q = (s,k) \in
    \dom{u^m}$} Since $u \Dwins{\dis\quant x\Fragm} v$, there exists a
  $t \in \dom{v}$ such that $\enc{u,\repl{}x s} \Dwins{\dis\exists
    n\Fragm} \enc{v,\repl{}x t}$.
    Depending on $k$ we will choose a $p
  \in P$ such that
\begin{equation}
\label{eq:conditions_exists}
	u^{k-1} \Dwins{\dis\exists x\Fragm} v^{\smallcev\tau_p}
	\qquad\text{and}\qquad
	u^{m-k} \Dwins{\dis\exists x\Fragm} v^{\smallvec\tau_p} \,.
\end{equation}
By Proposition~\ref{prop:concatenation}, this will imply
$\enc{u^m,\repl{}x{(s,k)}} \Dwins{\dis\exists x\Fragm}
\enc{v^\theLO,\repl{}x{(t,p)}}$, i.e., $r = (t,p)$ is a suitable reply
for Duplicator.

\subcase{$k \leq m-(2^{n}-1)$} Observe that $m-k \geq 2^{n}-1$. We
choose $p \in P$ such that $\cev\tau_p = k-1$ and $\vec\tau_p =
\theLO$. The conditions in Eq.~\eqref{eq:conditions_exists} are met
due to Proposition~\ref{prop:concatenation} and the induction
hypothesis, respectively.

\subcase{$k > m-(2^{n}-1)$} Observe that $k-1 > 2^{n}-1$. We
choose $p \in P$ such that $\cev\tau_p = \theLO$ and $\vec\tau_p =
m-k$. The conditions in Eq.~\eqref{eq:conditions_exists} are met due
to the induction hypothesis and Proposition~\ref{prop:concatenation},
respectively.

\case{$\quant = \forall$, where $q = (t,p) \in \dom{v^\theLO}$} We
proceed similarly to Case~1. There exists an $s \in \dom{u}$ such that
$\enc{u,\repl{}x s} \Dwins{\dis\forall x\Fragm} \enc{v,\repl{}x
  t}$. Depending on $p$ we will choose a $k \in [1,m]$ such that
\begin{equation}
\label{eq:conditions_forall}
	u^{k-1} \Dwins{\dis\forall x\Fragm} v^{\cev\tau_p}
	\qquad\text{and}\qquad
	u^{m-k} \Dwins{\dis\forall x\Fragm} v^{\vec\tau_p} \,.
\end{equation}
Again by Proposition~\ref{prop:concatenation}, this will imply
$\enc{u^m,\repl{}x{(s,k)}} \Dwins{\dis\exists x\Fragm}
\enc{v^\theLO,\repl{}x{(t,p)}}$, i.e., $r = (s,k)$ will be a suitable
reply for Duplicator.

\subcase{$\cev\tau_p$ is finite with $\cev\tau_p<2^{n}-1$ and
  $\vec\tau_p=\theLO$} We choose $k=\cev\tau_p+1 \leq
2^{n}-1$. Thus, $m-k \geq 2^{n}$. The conditions in
Eq.~\eqref{eq:conditions_forall} are met due to
Proposition~\ref{prop:concatenation} and the induction hypothesis.

\subcase{$\cev\tau_p$ is finite with $\cev\tau_p\geq2^{n}-1$ and
  $\vec\tau_p=\theLO$} We choose $k = 2^{n}$. Consequently, $m-k
\geq 2^{n}-1$. The induction hypothesis implies $u^{k-1}
\Dwins{\dis\forall x\Fragm} u^\theLO$ and $u^\theLO \Dwins{\dis\forall
  x\Fragm} v^{\smallcev\tau_p}$. The conditions in
Eq.~\eqref{eq:conditions_forall} are met by transitivity and the
induction hypothesis.

\subcase{$\cev\tau_p = \vec\tau_p = \theLO$} We choose
$k=2^{n}$. Then $m-k \geq 2^{n}-1$. The conditions in
Eq.~\eqref{eq:conditions_forall} both are met due to the induction
hypothesis.

\subcase{$\cev\tau_p=\theLO$ and $\vec\tau_p$ is finite} This case is
symmetric to Cases~2a and~2b.
\end{proof}

\section{Missing Proofs from Section~\ref{sec:EF_terms}}

{
\renewcommand*{\thetheorem}{\ref{cor:term_interpretations}}
\CorollaryTermInterpretations*
%\addtocounter{theorem}{-1}
}

\begin{proof}
We proceed by induction on the structure of $t$.

\basecase{$t=a$ for some $a\in\Lambda$}
Since $\evalt t m = \evalt t\theLO = a$, the claim is trivial.

\lpara{Inductive step 1}{$t=s_1s_2$ for $\varLO$-terms $s_1,s_2 \in
  \terms$} Due to the induction hypothesis, we have $\evalt{s_i}m
\Indist\Fragm \evalt{s_i}\theLO$ for $i=1,2$. Two applications of
Proposition~\ref{prop:concatenation} yield $\evalt t m = \evalt{s_1}m
\evalt{s_2}m \Indist\Fragm \evalt{s_1}\theLO \evalt{s_2}\theLO =
\evalt t\theLO$.

\lpara{Inductive step 2}{$t=s^\varLO$ for a $\varLO$-term $s \in
  \terms$} Due to the induction hypothesis, we have $\evalt s m
\Indist\Fragm \evalt s\theLO$. Applying
Lemma~\ref{lemma:exponentiation} twice yields $\evalt t m = (\evalt s
m)^m \Indist\Fragm (\evalt s\theLO)^\theLO = \evalt t\theLO$.
\end{proof}

\section{Missing Proofs from Section~\ref{sec:decidability_result}}

The proof of Proposition~\ref{prop:reduction} lacks evidence of the
following claim.

\begin{lemma}
The following identities hold for all $n \geq 1$ and $u,v,w \in \words$:
\begin{gather}
	\tag{Ax1}\label{eq:ax1}
	(u v) w = u (v w) \\
	\tag{Ax2}\label{eq:ax2}
	(u^\theLO)^\theLO = u^\theLO \\
	\tag{Ax3}\label{eq:ax3}
	(u^n)^\theLO = u^\theLO \\
	\tag{Ax4}\label{eq:ax4}
	u^\theLO u^\theLO = u^\theLO \\
	\tag{Ax5}\label{eq:ax5}
	u^\theLO u = u u^\theLO = u^\theLO \\
	\tag{Ax6}\label{eq:ax6}
	(u v)^\theLO u = u (v u)^\theLO
\end{gather}
\end{lemma}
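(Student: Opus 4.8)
The plan is to verify each identity directly from the definition of the $\theLO$-power on generalized words, using only that $\theLO = \omega + \zeta\cdot\eta + \omega^*$ is an order type satisfying the self-similarity relations that make all six statements order-theoretic triviali\-ties once unwound. Recall that $u^\tau$ has domain $\dom{u} \times T$ (for $(T,\leq_T)$ of type $\tau$) ordered lexicographically with $T$ dominant, and labeled by $u(p)$ on $(p,t)$. Identity \eqref{eq:ax1} is just associativity of concatenation on $\words$, already noted in the preliminaries. For the remaining five, I would first isolate the key combinatorial facts about the order type $\theLO$ itself: namely $n \cdot \theLO = \theLO$ for every finite $n \geq 1$, $\theLO \cdot \theLO = \theLO$, and $\theLO + \theLO = \theLO$ (equivalently $\omega^* + \omega = \theLO$, and $\theLO$ absorbs finite and $\theLO$-sized pieces on either side). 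These all follow because $\omega + \zeta\cdot\eta + \omega^*$ contains a dense-unordered middle block $\zeta\cdot\eta$ into which any countable scattered-or-dense bounded-below-and-above order type can be inserted; more elementarily, one checks the isomorphisms by hand using that $\zeta\cdot\eta + \zeta\cdot\eta \cong \zeta\cdot\eta$, $\zeta\cdot\eta \cdot \theLO \cong \zeta\cdot\eta$, and that $\omega$, $\omega^*$ can be absorbed into an adjacent $\zeta\cdot\eta$ block.

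Granting these, each axiom reduces to an order-type computation. For \eqref{eq:ax3}: $(u^n)^\theLO$ has domain $(\dom{u}\times[1,n])\times T$ which, reordered, has order type $\dom{u}\times([1,n]\cdot\theLO) = \dom{u}\times\theLO$ with the matching labeling, so $(u^n)^\theLO = u^\theLO$. For \eqref{eq:ax2}: $(u^\theLO)^\theLO$ has underlying order type $\theLO\cdot\theLO = \theLO$ on copies of $\dom{u}$, hence equals $u^\theLO$. For \eqref{eq:ax4}: $u^\theLO u^\theLO$ sits on $\dom{u}\times(\theLO + \theLO)$, and $\theLO + \theLO = \theLO$ gives the claim. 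For \eqref{eq:ax5}: $u^\theLO u$ sits on $\dom{u}\times(\theLO + 1)$ and $u u^\theLO$ on $\dom{u}\times(1 + \theLO)$; since $\theLO + 1 = \theLO = 1 + \theLO$ (the final $\omega^*$ absorbs one more point on the right, the initial $\omega$ one more on the left), both equal $u^\theLO$. These four are isomorphisms of the form $\dom{u}\times\sigma_1 \cong \dom{u}\times\sigma_2$ induced by an order isomorphism $\sigma_1 \cong \sigma_2$ of the $\theLO$-combinations, and the labeling is automatically respected since it only depends on the $\dom{u}$-coordinate.

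The genuinely non-routine one is \eqref{eq:ax6}, $(uv)^\theLO u = u(vu)^\theLO$, because here the two sides do not have the same shape coordinate-wise: on the left the word $uv$ is repeated $\theLO$ times and then a trailing $u$ is appended, while on the right a leading $u$ is prepended to $\theLO$ copies of $vu$. The standard ``re-bracketing'' argument shows $(uv)^\theLO u$ and $u(vu)^\theLO$ both equal $u v u v u v \dotsm$ — more precisely, the left side is $u \cdot (v u)^{\theLO'} $ where $\theLO'$ is obtained from $\theLO$ by the shift, and one must check $\theLO' = \theLO$. Concretely, writing $\theLO = \omega + \zeta\cdot\eta + \omega^*$ and tracking where the ``$u$''s and ``$v$''s land after shifting the factorization boundary by one $u$, the left side becomes $u$ followed by $vu$ repeated along the order type $(\theLO - 1)$ with a final $v\cdot(\text{nothing})$... the cleanest route is: $(uv)^\theLO$ has domain $(\dom u \uplus \dom v)\times T$; append $\dom u$ at the end; now regroup each point $t\in T$ together with the ``$u$ of the next point'' — since $T \cong \theLO$ has the property that $\theLO \cong 1 + \theLO$ via some isomorphism $\psi$, this regrouping is a bijection onto $\dom u \uplus ((\dom v \uplus \dom u)\times T')$ with $T' \cong \theLO$, which is exactly $u(vu)^\theLO$, and the labels match because each regrouped block carries the correct letters. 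The main obstacle is making this regrouping bijection precise and checking it is an order isomorphism; I would handle it by explicitly invoking the order isomorphism $1 + \theLO \cong \theLO$ (legitimate since $\omega^* + 1 + \omega + \zeta\cdot\eta + \omega^* \cong \theLO$, using $\omega^*+1+\omega \cong \zeta$ and $\zeta + \zeta\cdot\eta \cong \zeta\cdot\eta$, then absorbing) and then verifying the induced map on domains preserves $\leq$ and $\ell$, which is a short but careful case check on whether two positions lie in the same $T$-block or not.
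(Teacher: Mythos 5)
Your proposal is correct, and for \eqref{eq:ax1}--\eqref{eq:ax5} it is essentially the paper's own proof: both reduce each identity, via the exponentiation laws $w^{\sigma+\tau}=w^\sigma w^\tau$ and $w^{\sigma\cdot\tau}=(w^\sigma)^\tau$, to arithmetic facts about the order type $\theLO$ (namely $n\cdot\theLO=\theLO$, $\theLO\cdot\theLO=\theLO$, $\theLO+\theLO=\theLO$ and $\theLO+1=1+\theLO=\theLO$), which the paper then verifies by computing along the decomposition $\theLO=\omega+\zeta\cdot\eta+\omega^*$. (One slip in your summary: $\omega^*+\omega$ is $\zeta$, not $\theLO$; the absorption you need is $\zeta\cdot\eta+\zeta+\zeta\cdot\eta=\zeta\cdot(\eta+1+\eta)=\zeta\cdot\eta$.) Where you genuinely diverge is \eqref{eq:ax6}. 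The paper avoids any explicit regrouping of domains: it factors $(uv)^\theLO=(uv)^\omega\bigl((uv)^\zeta\bigr)^\eta(uv)^{\omega^*}$ and pushes the trailing $u$ through the three factors using $(ab)^\omega=a(ba)^\omega$, $(ab)^\zeta=(ba)^\zeta$ and $(ba)^{\omega^*}=(ab)^{\omega^*}a$, checked for single letters and transferred to arbitrary words by substituting $u$ for $a$ and $v$ for $b$. Your route is a direct order- and label-preserving bijection on $(\dom{u}\uplus\dom{v})\times T$ plus a trailing copy of $\dom{u}$; this also works, but the fact doing the work is not $1+\theLO\cong\theLO$, as you suggest, but the discrete structure of $\theLO$: it has a least and a greatest element, every non-maximal point has an immediate successor, and every non-minimal point has an immediate predecessor, so $t\mapsto\suc(t)$ is a bijection from $T\setminus\{\max\}$ onto $T\setminus\{\min\}$ and each block $\dom{v}\times\{t\}$ is immediately followed by $\dom{u}\times\{\suc(t)\}$ (by the trailing copy of $\dom{u}$ when $t$ is maximal). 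Once this is recorded, the regrouped blocks are again indexed by $T$ itself and no re-indexing isomorphism is needed. The paper's factorization buys a shorter write-up; your version makes the combinatorics of $\theLO$ explicit at the cost of the case check you already anticipate.
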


\begin{proof}
  It is a matter of routine to check that the concatenation of words
  is associative, i.e., \eqref{eq:ax1} holds. It is equally simple to
  check that the following two laws of exponentiation hold for any
  word $w \in \words$ and all order types $\sigma$ and $\tau$:
\begin{equation*}
	w^{\sigma + \tau} = w^\sigma w^\tau
	\qquad\text{and}\qquad
	w^{\sigma\cdot\tau} = (w^\sigma)^\tau \,.
\end{equation*}

\para{To (Ax2).}
The three identities below hold for all order types $\sigma$ and $\tau$:
\begin{gather*}
	(\sigma + \tau)\cdot\omega = \sigma + (\tau + \sigma)\cdot\omega \,, \\
	(\sigma + \tau)\cdot\zeta = (\tau + \sigma)\cdot\zeta \,, \\
	(\sigma + \tau)\cdot\omega^* = (\tau + \sigma)\cdot\omega^* + \tau \,.
\end{gather*}
Not that
\begin{gather*}
	(\eta + 1) \cdot \omega = \eta \,, \\
	(\eta + 1) \cdot \zeta = \eta \,, \\
	(1 + \eta) \cdot \omega^* = \eta \,.
\end{gather*}
We conclude
\begin{align*}
	\theLO\cdot\theLO
	&= (\omega + \zeta\cdot\eta + \omega^*) \cdot (\omega + \zeta\cdot\eta + \omega^*) \\
	&= (\omega + \zeta\cdot\eta + \omega^*)\cdot\omega + (\omega + \zeta\cdot\eta + \omega^*)\cdot\zeta\cdot\eta
		+ (\omega + \zeta\cdot\eta + \omega^*)\cdot\omega^* \\
	&= \omega + \zeta\cdot(\eta+1)\cdot\omega + \zeta\cdot(\eta+1)\cdot\zeta\cdot\eta
		+ \zeta\cdot(1+\eta)\cdot\omega^* + \omega^* \\
	&= \omega + \zeta\cdot\eta + \zeta\cdot\eta\cdot\eta + \zeta\cdot\eta + \omega^* \\
	&= \omega + \zeta\cdot(\eta + \eta\cdot\eta + \eta) + \omega^* \\
	&= \theLO \,.
\end{align*}
Thus,
\begin{equation*}
	(u^\theLO)^\theLO = u^{\theLO\cdot\theLO} = u^\theLO \,.
\end{equation*}

\para{To (Ax3).}
We use the following tree identities:
\begin{align*}
	n\cdot\omega &= \omega \,, &
	n\cdot\zeta &= \zeta \,, &
	n\cdot\omega^* &= \omega^* \,.
\end{align*}
Consequently,
\begin{equation*}
	n\cdot\theLO
	= n\cdot(\omega + \zeta\cdot\eta + \omega^*)
	= n\cdot\omega + n\cdot\zeta\cdot\eta + n\cdot\omega^*
	= \omega + \zeta\cdot\eta + \omega^*
	= \theLO
\end{equation*}
and hence
\begin{equation*}
	(u^n)^\theLO = u^{n\cdot\theLO} = u^\theLO \,.
\end{equation*}

\para{To (Ax4).}
We have
\begin{equation*}
	\theLO + \theLO
	= \omega + \zeta\cdot\eta + \omega^* + \omega + \zeta\cdot\eta + \omega^*
	= \omega + \zeta\cdot(\eta+1+\eta) + \omega^*
	= \theLO
\end{equation*}
and therefore
\begin{equation*}
	u^\theLO u^\theLO = u^{\theLO + \theLO} = u^\theLO \,.
\end{equation*}

\para{To (Ax5).}
We have
\begin{equation*}
	\theLO + 1 = \omega + \zeta\cdot\eta + \omega^* + 1 = \theLO
	\qquad\text{and}\qquad
	1 + \theLO = 1 + \omega + \zeta\cdot\eta + \omega^* = \theLO \,.
\end{equation*}
Consequently,
\begin{equation*}
	u^\theLO u = u^{\theLO + 1} = u^\theLO
	\qquad\text{and}\qquad
	u u^\theLO = u^{1 + \theLO} = u^\theLO \,.
\end{equation*}

\para{To (Ax6).}
For all $a,b \in \Lambda$, the following identities are easy to check:
\begin{align*}
	(a b)^\omega &= a (b a)^\omega \,, &
	(a b)^\zeta &= (b a)^\zeta \,, &
	(b a)^{\omega^*} &= (a b)^{\omega^*} a \,.
\end{align*}
Replacing in the words occurring in these identities every
$a$-labeled position by the word $u$ and every $b$-labeled position
by the word $v$, yields the following identities:
\begin{align*}
	(u v)^\omega &= u (v u)^\omega \,, &
	(u v)^\zeta &= (v u)^\zeta \,, &
	(v u)^{\omega^*} &= (u v)^{\omega^*} u \,.
\end{align*}
Thus,
\begin{equation*}
	(u v)^\theLO u
	= (u v)^\omega \bigl((u v)^\zeta\bigr)^\eta (u v)^{\omega^*} u
	= u (v u)^\omega \bigl((v u)^\zeta\bigr)^\eta (v u)^{\omega^*}
	= u (v u)^\theLO \,.
	\qedhere
\end{equation*}
\end{proof}

\end{appendix}

\end{document}